\DeclarePairedDelimiter{\abs}{\lvert}{\rvert}
\def\EE{{ \mathbb{E}}}
\def\RR{{\mathbb R}}
\def\NN{{ \mathbb{N}}}
\def\vel{{\theta}}
\def\velvec{\vec{\vel}}
\def\dd{{\rm d}}
\def\xprop{Q}
\def\xpropsc{\mathcal{\xprop}}
\def\xtrans{P}
\def\xtarget{\pi}
\def\prop{\widetilde{Q}}
\def\trans{\widetilde{P}}
\def\target{\widetilde{\pi}}
\def\xtransc{\mathcal{\xtrans}}
\def\pgraph{G_{p}}
\def\pvertices{V_{p}}
\def\pedges{E_{p}}
\def\dgraph{G_{d}}
\def\dvertices{V_{d}}
\def\dedges{E_{d}}
\def\ndist{n_{D}}
\def\conflict{C}
\def\embedding{\phi}
\def\graph{\mathcal{G}}
\def\vertices{\mathcal{V}}
\def\edges{\mathcal{E}}
\def\nb{\mathscr{N}}
\def\vectorfield{\mathbf{v}}
\def\centroid{\mathbf{c}}
\def\nb{\mathcal{N}}
\def\activeset{\mathcal{A}}
\def\EE{{ \mathbb{E}}}
\def\Domain{{\dmDomain}}
\def\dm{ \xi}
\def\dmDomain{ \Xi}
\def\x{x}
\def\xDomain{\mathcal{X}}
\def\vf{\mathbf{v}}
\def\orientation{\text{orientation}_{\embedding,\vf}}
\def\flip{F}
\def\xweight{\omega}
\def\weight{\widetilde{\omega}}
\def\cedges{C}
\DeclareMathOperator\supp{supp}
\newcommand{\ccdot}{\,\cdot\,}
\newtheorem{proposition}{Proposition}[section]
\newtheorem{lemma}[proposition]{Lemma}
\newtheorem{assumption}{Assumption}
\newtheorem{theorem}[proposition]{Theorem} 
\crefname{algorithm}{Algorithm}{Algorithms}
\crefname{assumption}{Assumption}{Assumptions}
\crefname{lemma}{Lemma}{Lemmas}
\crefname{theorem}{Theorem}{Theorems}
\crefname{remark}{Remark}{Remarks}
\crefname{corollary}{Corollary}{Corollaries}
\crefname{figure}{Fig.}{Figures}
\crefname{section}{Section}{Sections}
\theoremstyle{remark}
\newtheorem{remark}[proposition]{Remark}
\DeclareMathOperator{\sgn}{sgn}
\newcommand{\gh}[1]{\textcolor{blue}{GH: #1}}
\newcommand{\pop}{\text{pop}}
\newcommand{\area}{\text{area}}
\newcommand{\boundary}{\text{boundary}}
\g@addto@macro{\endabstract}{\@setabstract}
\newcommand{\authorfootnotes}{\renewcommand\thefootnote{\@fnsymbol\c@footnote}}%
\def\dmwidth{.47}
\title{Non-reversible Markov chain Monte Carlo for sampling of districting maps }
\begin{document}
\begin{center}
  \LARGE 
  Non-reversible Markov chain Monte Carlo for sampling of districting maps\\  \par \bigskip

  \normalsize
  \authorfootnotes
  Gregory Herschlag\footnote{gjh@math.duke.edu}\textsuperscript{1}, Jonathan C. Mattingly\footnote{jonm@math.duke.edu}\textsuperscript{1,2},
  Matthias Sachs\footnote{msachs@math.duke.edu}\textsuperscript{1,3}, Evan Wyse\footnote{evan.wyse@duke.edu}\textsuperscript{2} \par \bigskip

  \textsuperscript{1}Department of Mathematics, Duke University, Durham, NC 27708 \par
  \textsuperscript{2}Department of Statistical Science, Duke University, Durham NC 27708\par 
  \textsuperscript{3}The Statistical and Applied Mathematical Sciences Institute (SAMSI), Durham, NC 27709\par
  
  \bigskip

\end{center}

\begin{abstract}
%
  Evaluating the degree of partisan districting (Gerrymandering) in a statistical framework typically requires an ensemble of districting plans which are drawn from a prescribed probability distribution that adheres to a realistic and non-partisan criteria. 
  In this article we introduce novel non-reversible Markov chain Monte-Carlo (MCMC) methods for the sampling of such districting plans which have improved mixing properties in comparison to previously used (reversible) MCMC algorithms. In doing so we extend the current framework for construction of non-reversible Markov chains on discrete sampling spaces by considering a generalization of skew detailed balance. We provide a detailed description of the proposed algorithms and evaluate their performance in numerical experiments.
\end{abstract}
\section{Introduction}
The use of computer generated alternative redistricting plans to benchmark particular redistricting maps has gained legal and scientific traction in recent years. The generation of such an \textit{ensemble} of maps has been used to identify and quantify the extent of partisan and racial gerrymandering by answering the question ``What would one expect to have happened if no partisan or racial information had been used?'' These methods produce a baseline informed by the geo-political geography of the state and  which do not assume proportional presentation or unrealistic symmetry assumptions. This baseline can then be used to evaluate a particular redistricting plan of interest.

In \cite{mattingly2014redistricting,fifield2015new,jcmReport,najt2019complexity}, an ensemble of maps is generated by sampling from probability distribution constructed on the space of possible redistricting plans using only non-partisan considerations. In this thread of work, the sampling was performed via Markov chain Monte Carlo (MCMC) using a standard Metropolis-Hasting algorithm based on a single node flip proposal chain. Other ensemble methods have used generative techniques based on optimization, genetic algorithms, or Markov chains without a clearly
describable stationary measure. Examples of the latter include the generation of samples using simulated annealing \cite{bangia2017redistricting,herschlag2017evaluating,herschlag2018quantifying} and Markov chains based on merge-split operations \cite{deford2019recombination} (see also \cite{carter2019merge} for an extension of the latter work which allows one to generate samples from a prescribed target measure.)

Many of the above samplers utilize the Metropolis-Hastings algorithm so the underlying generating Markov chain is {\em reversible} The reversible methods, by definition, have a Markov kernel associated with this Markov chain satisfying a detailed balance condition with respect to the corresponding stationary measure. Heuristically, this implies that the Markov chain has a diffusive nature. Other samplers used which are non-reversible  typically sample from an unknown distribution.

In recent years MCMC methods based on non-reversible Markov chains (i.e., Markov chains whose Markov kernel fails to satisfy a detailed balance condition) have attracted increased attention because of their favorable convergence and mixing properties; without claim to completeness of the work listed we refer the reader to \cite{neal2004improving,sun2010improving,turitsyn2011irreversible,suwa2012general,hukushima2013irreversible} and to \cite{bierkens2019zig,michel16:_irrevMCMC,bouchard2018bouncy,duncan2016variance,duncan2017nonreversible,ma2019irreversible,dobson2020reversible} for examples of non-reversible MCMC methods for sampling on continuous spaces, and discrete spaces, respectively.

For many of these methods, improved mixing properties over their reversible counterparts is folklore among practitioners; however, there is a growing body of theoretical work that supports these claims \cite{diaconis2000analysis,duncan2016variance}.  

In the setup of a continuous sampling space, non-reversible Markov chains naturally arise through time discretization of stochastically perturbed versions or modifications of Newton's equations of motion (see e.g., \cite{akhmatskaya2009comparison,ottobre2016function}). In these cases, reversibility of the dynamics is broken due to the presence of inertia modeled by the momenta associated with each degree of freedom. This is consistent with physical intuition that the resulting ballistic-like motion tends to exhibit better mixing properties over a purely diffusive dynamics of a reversible Markov chain. For example, the existence of momentum is typically cited as the strength of Langevin sampling over simple Browning dynamics.





For sampling in discrete space, a common approach for designing non-reversible MCMC methods is what is sometimes referred to as ``lifting'' \cite{vucelja2016lifting,michel16:_irrevMCMC}. Here, a reversible MCMC method is modified by replicating the state space through the introduction of a dichotomous auxiliary variable taking values in $\{-1,1\}$ along with a simple directed subgraph of the Markov state graph induced by the original reversible Markov chain. Depending on the value of the auxiliary variable transition probabilities along the assigned directions of the simple subgraph are then either increased or decreased. As such the auxiliary variable has a similar effect as the momentum variable in the continuous setup. For example, in the case where the  Markov state graph induced by the original reversible MCMC method is a circular graph, a simple way of implementing a lifting approach is by increasing clockwise transition probabilities for a positive value of the auxiliary variable and increasing counter-clockwise transition probabilities for a negative values of the auxiliary variable \cite{diaconis2000analysis}.


The sampling efficiency of the Markov chain obtained by lifting highly depends on the choice of the simple directed subgraph. While in the above mentioned example a ``good'' choice can easily be identified, constructing a suitable subgraph for Markov chains whose associated Markov state graph has a more complex topology can be difficult.



In this article, we introduce non-reversible MCMC methods for the sampling of redistricting maps.
Creating a collection of redistricting maps, via sampling of a specified measure, is an important step in many method currently used to evaluate redistricting and detect and explain gerrymandering.
In this note, we introduce a heuristic for implementing an efficient lifting approach which is based on a notion of flowing the center of mass of districts along a defined vector field; the center of mass arises from an embedding of the districting graph in $\RR^{2}$.
We also introduce a novel construction for non-reversible MCMC dynamics as a generalization of the standard lifting approach which allows the incorporation of multiple momenta variables. This allows us to construct non-reversible MCMC schemes for our application which make use of the structure of the induced district-level graph. Finally, we combine these methods with a tempering scheme which minimizes rejection rates in the non-reversible Markov chain and thereby increases sampling efficiency.

The remainder of this article is organized as follows. In \cref{sec:non-rev:MH}, we review the formal definition of non-reversibility of Markov chains on discrete sampling spaces. In \cref{sec:skewed:db}, we review the basic construction of non-reversible MCMC methods via a skew detailed balance condition. In \cref{sec:general}, we describe a novel construction of non-reversible MCMC schemes which allows for multipule momentum corresponding to different proposal chains. In \cref{sec:application}, we describe the implementation of our approach under the application under consideration; in \cref{sec:num}, we test our ideas numerically.

\section{Exposition of the main algorithms}
Before we rigorously develop the underlying mathematical framework, we start by informally describing the two main sampling algorithms proposed in this article and demonstrating how they would be applied to sampling redistricting plans for the North Carolina Congressional Delegation. We construct our non-reversible sampling methods as modifications of a variant of the single node flip algorithms (see \cref{sec:SingleNodeFlip}) where random redistricting maps are sequentially generated by changing in each iteration the color (district allocation) of a single precinct located at the border of the current redistricting map. We introduce non-reversibility by directing transitions along what we informally refer to as a flow. Depending on the value of a momentum variable, only transitions in positive or negative direction are permitted, resulting in a macroscopic level kinetic like movements along/against the flow. 
\FloatBarrier
The intuition behind the first proposed method (``Center-of-mass flow'', see \cref{sec:precinct:flow} and \cref{sfig:cm-flow} ) is that a fast mixing on a macroscopic level is obtained if districts tend to collectively follow the flow of a suitable, well-stirring vector field in $\RR^{2}$ where there district graph is embedded. For example, under an appropriate choice of the vector field, the  resulting collective rotational movements of districts in the course of a simulation produces more efficient mixing then more diffusive sampling algorithms (see \cref{sfig:cm-flow}). Technically, we implement this idea by aligning the movement of the center of masses of districts with the vector field. For positive/negative momentum value only transitions for which the midpoints of the center of masses of the modified districts move in the positive/negative direction of the vector field are permitted.
\begin{figure} 
\includegraphics[width=0.65\linewidth, clip = true]{./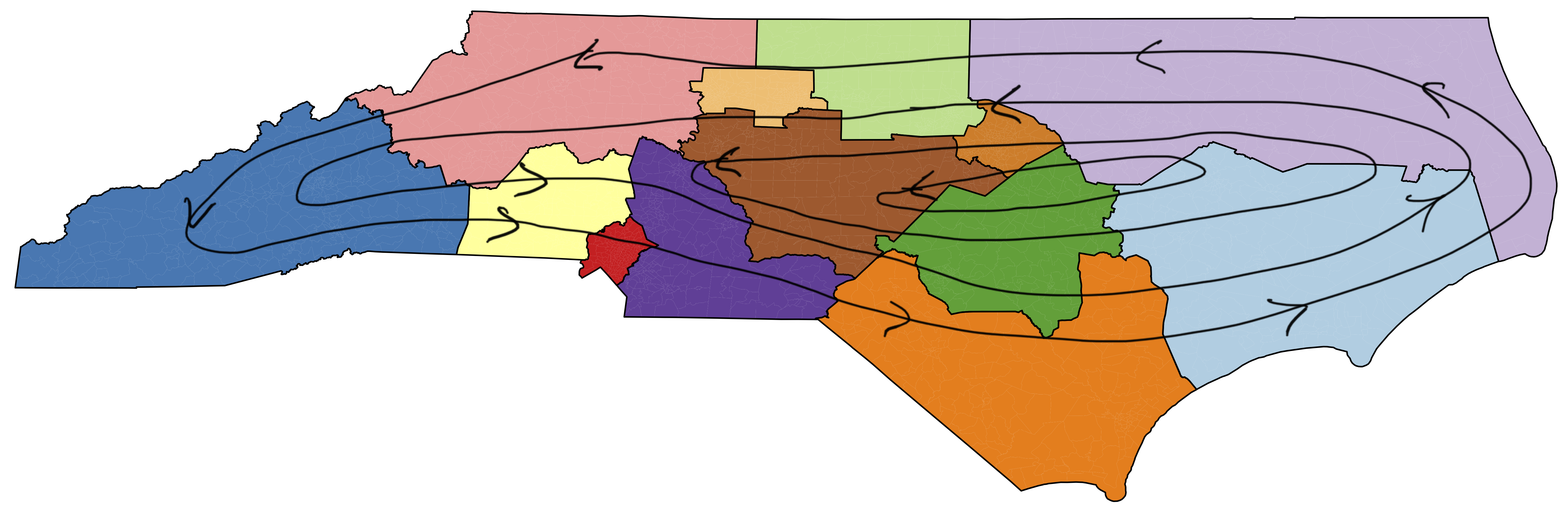}
\caption{Center-of-mass flow introduced in \cref{sec:precinct:flow}.  Changes in district boundaries must, on average, move the center of masses of the districts either with or against the drawn vector field. Drawing the  13-district N.C. Congressional map is used as an example.}\label{sfig:cm-flow}
\end{figure}
The second proposed method (``(Pair-wise) District-to-district flow'', see \cref{sec:Dist2Dist} and \cref{sfig:d2d-flow}) utilizes an extended framework, which allows the incorporation of multiple momenta each associated with a different flow. The idea of the method is to associate a momentum variable with each district pair. Depending on the value of the respective momentum, only transitions that flow districts in a direction aligned with the orientation of the respective momentum arrow are permitted. For example, consider the redistricting plan depicted in \cref{sfig:d2d-flow}. If the value of the momentum variable associated with the orange and light blue district is positive, then among the transitions which modify both these two districts only transitions that add a precinct from the light blue district to the orange district are permitted.
\begin{figure} 
\includegraphics[width=0.65\linewidth, clip = true]{./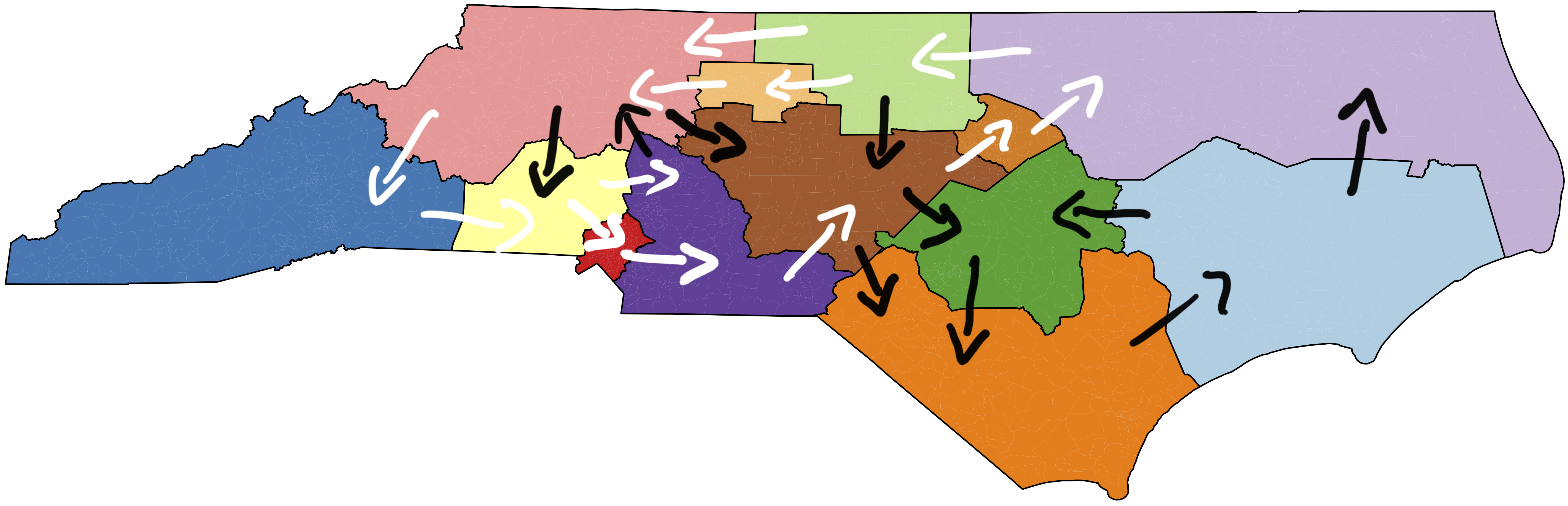}
\caption{District-to-district flow introduced in \cref{sec:Dist2Dist}.  Depending on the individual velocity values associated with each vector the corresponding boundaries between adjacent districts may either only move in the direction or in opposite direction of the displayed vector.
 }\label{sfig:d2d-flow}
\end{figure}

\FloatBarrier

\section{Reversible and Non-Reversible Markov Chains}
\subsection{Detailed Balance}\label{sec:non-rev:MH}
Consider a Markov Chain on a countable state-space $\xDomain$ with a
transition kernel $\xtransc$\footnote{We restrict to a countable state-space of simplicity. There are no inherent obstructions to generalizing to general Polish Space. See Remark~\ref{rem:genStateSpace}}.  The Markov kernel $\xtransc$ is said to be {\em reversible}, if there exists a probability distribution $\xtarget$ on
$\xDomain$ so that the pair  $(\xtransc,\xtarget)$ satisfies \emph{detailed
  balance}. That is 
\begin{align}\label{ed:detailed:balance}
 \xtarget(\x) \xtransc(\x, \x') = \xtarget(\x') \xtransc(\x', \x), \forall\x, \x'\in
  \xDomain\,. 
\end{align}
Markov transition kernels $\xtransc$ which fail to satisfy the detailed balance condition
for any measure $\xtarget$ are referred to as {\em non-reversible}.\footnote{The detailed balance condition is equivalent to the Kolmogorov definition of reversibility which requires the probability of following any sequence of states is the same as following the sequence in reverse order. This justifies the name \textit{reversible}. }
Since $\xtarget(\x) \xtransc(\x, \x')$ is the probability flux in equilibrium flowing from state $x$ to $x'$, detailed
  balance can be restated as the equilibrium flux from $x$ to $x'$ is the same as from $x'$ to $x$.

The detailed balance condition is a sufficient, 
but not a necessary
condition, for the transition kernel $\xtransc$ to preserve the measure
$\xtarget$. By definition, invariance of the measure $\xtarget$ only requires that $\xtarget \xtransc=
\xtarget$ which is just a compact notation for
\begin{align*}
   \sum_{\x \in \xDomain} \xtarget(\x) \xtransc(\x, \x') = \xtarget (\x')  \quad \forall \,\x' \in \xDomain.
\end{align*}
It can be rewritten as 
\begin{align}\label{eq:globalFlux}
  \sum_{\x' \in \xDomain\backslash\{\x\}} \xtarget(\x') \xtransc(\x', \x) 
   &= \sum_{\x' \in \xDomain\backslash\{\x\}} \xtarget(\x) \xtransc(\x, \x') \quad \forall \,\x \in \xDomain, 
\end{align}
and, as such, states that for any state $\x \in \xDomain$ the total probability flux into the state $\x$ (the lefthand side of \eqref{eq:globalFlux}) is equal to the total probability flux out of the state $\x$ (the righthand side of \eqref{eq:globalFlux}). This condition is commonly referred to as a {\em global balance condition} and is satisfied by any Markov kernel which preserves $\xtarget$.

\subsection{Skew detailed balance}\label{sec:skewed:db}

A common way of constructing non-reversible Markov chains with prescribed invariant measure $\xtarget$ is by enforcing global balance through an involutive transform.  This structure is called skew detailed balance and ensures that detailed balance holds up to some $\xtarget$-invariant involutive transformation.  More precisely,  let $S : \xDomain \rightarrow \xDomain$ be an $\xtarget$-invariant involutive transformation, so that $S=S^{-1}$, and $\xtarget(S(A)) = \xtarget(S^{-1}(A)) = \xtarget(A), ~ \forall A \subset \xDomain$. Then, the Markov kernel $\xtransc$ satisfies {\em skew detailed balance} if 
\begin{align}\label{eq:skewed:detailed:balance}
 \xtarget(\x) \xtransc(\x, \x') = \xtarget(\x') \xtransc(S(\x'), S(\x)), \forall\x, \x'\in
  \xDomain\,. 
\end{align}
It is easy to verify that skew detailed balance implies global balance (see e.g. \cite{stoltz2010free}, or proof of \cref{thm:MSDB} in \cref{sec:proofs}), and thus invariance of $\xtarget$ under $\xtransc$. 

Due to its local nature, skew detailed balance with respect to $\xtarget$ can be easily enforced by an accept-reject step. 
More precisely, let $\xprop$ denote a ``proposal'' Markov kernel on $\xDomain$, and denote by $(\x_{k})_{k\in \NN}$\footnote{Here, and in the remainder of this article we denote by $\NN$ the set of non-negative integers.} the Markov chain generated by the following generalization of the Metropolis-Hastings algorithm
\begin{enumerate}
\item $\x^{\prime} \sim \xprop(\x_{k},\ccdot)$,
\item with probability $r(\x_{k},\x^{\prime}) $ set $\x_{k+1}=\x^{\prime}$; otherwise $\x_{k+1} = S(\x_{k})$,
where 
\[
r(\x,\x^{\prime}) := \min \left ( 1, \frac{\xtarget(\x^{\prime}) \xprop(S(\x^{\prime}), \,S(\x)) }{\xtarget(\x) \xprop(\x, \,\x^{\prime})} \right ). 
\]
\end{enumerate}

Provided that the acceptance probability $r(\x,\x^{\prime}) $ is well defined for all pairs $(\x,\x^{\prime})$, the transition kernel of the generated Markov chain takes the form
\[
\xtransc(\x,\x^{\prime}) = r(\x,\x^{\prime})\xprop(\x,\x^{\prime}) + (1-r (\x,\x^{\prime}) ) \mathbbm{1}_{\{S(\x)\}} (\x^{\prime}),
\]
which indeed can be verified to satisfy the skew detailed balance condition \eqref{eq:skewed:detailed:balance} (see e.g., \cite{stoltz2010free} or proof of \cref{thm:MSMH} in \cref{sec:proofs}).

\section{A General Non-Reversible Process Construction}\label{sec:general}

In this section we first introduce a generalization of the standard skewed balance condition, termed {\em mixed skewed balance} condition, and show that this condition is sufficient for the corresponding Markov kernel to preserve a prescribed probability measure. We then provide a generalization of the Metropolis-Hastings algorithm, the {\em Mixed Skew Metropolis-Hastings Algorithm} (MSMH) which utilizes the mixed skewed balance condition. 

\subsection{The mixed skewed balance condition} 
In the following, let $\{S_{i}\}_{i=1}^{n}$ be a collection of $\xtarget$-invariant involutions, $\{\xtrans_{i}\}_{i=1}^{n}$ a collection of Markov kernels on $\xDomain$, and 
\begin{equation*}
\xweight : \xDomain \rightarrow \Delta^{n-1}, ~\xweight(\x) = \left ( \xweight_{1}(\x),\dots, \xweight_{n}(\x) \right),
\end{equation*}
a weight vector taking values in the $n$th standard simplex $\Delta^{n-1} := \{ y \in \RR^{n}: y_{i}\geq 0, ~\sum_{i=1}^{n}y_{i}=1\}$. Since at each point $x$ the weights are non-negative and sum to one, we can build a new kernel $\xtransc$ out of the collection of Markov kernels $\{\xtrans_{i}\}_{i=1}^{n}$ by setting $\xtransc = \xweight \cdot P$, which is written more explicitly as 
\begin{equation*}
\xtransc(\x,\ccdot) = \sum_{i=1}^{n} \xweight_{i}(\x) \xtrans_{i}(\x,\ccdot), \forall \x \in \xDomain\,.
\end{equation*}
From this we see clearly that $\xtransc$ is an $\x$-dependent mixture of the kernels $\{\xtrans_{i}\}$. With this comes the interpretation that a draw from  $\xtransc$ can be realized by first picking an index $i$ according to the weights and then drawing the next state according to $\xtrans_{i}$.

We say that the Markov kernel $\xtransc$ satisfies {\em mixed skewed balance} with respect to $\xtarget$, if for all $\x,\x^{\prime} \in \xDomain$ and for all $i \in\{1,\dots,n\}$,
\begin{equation}\label{eq:generalized:db:mod}
  \begin{aligned}
\xweight_{i}(\x) \xtarget(\x) \xtrans_{i}(\x, \x^{\prime}) = \xweight_{i}(\x^{\prime})  \xtarget(\x^{\prime})   \xtrans_{i}\left(S_{i}(\x^{\prime}),S_{i}(\x) \right)\,.     
  \end{aligned}
\end{equation}
As discussed further in Remark~\ref{rem:sqewMix}, equation~\eqref{eq:generalized:db:mod} means that  $i$-th kernel $\xtrans_{i}$ satisfies the  skew-detail balance condition for the an invariant measure proportional to $\pi(x)\xweight_{i}(x)$. Yet, as the following results show, by mixing these kernels according to the weights $\xweight_{i}(x)$, one obtains a Markov which has $\pi$ as its invariant measure. Typically one has that $\pi(x)=\pi(S_{i}(\x))$ and $\xweight_{i}(x)=\xweight_{i}( S_{i}(\x))$; and hence, \eqref{eq:generalized:db:mod} can again be understood as a probability flux balancing condition; the flux from $x$ to $x'$ is equal to the flux from $S_{i}(\x')$ to $S_{i}(\x)$. (Given that we chose the $i$th kernel $\xtrans_{i}$ according to the weight $\xweight_{i}$.)
\begin{theorem}\label{thm:MSDB}
If the Markov kernel $\xtransc$ defined by the collection $\{(\xweight_{i}, \xtrans_{i}, S_{i})\}_{i=1}^{n}$ satisfies mixed detailed balance with respect to $\xtarget$, then $\xtransc \xtarget = \xtarget$, i.e., the Markov kernel $\xtransc$ preserves the probability measure $\xtarget$.
\end{theorem}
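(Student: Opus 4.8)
The plan is to verify the global balance condition $\sum_{\x\in\xDomain}\xtarget(\x)\xtransc(\x,\x')=\xtarget(\x')$ directly for every $\x'\in\xDomain$, by expanding $\xtransc$ as the weighted mixture $\xtransc(\x,\ccdot)=\sum_{i=1}^{n}\xweight_{i}(\x)\xtrans_{i}(\x,\ccdot)$ and then applying the mixed skewed balance identity \eqref{eq:generalized:db:mod} term by term in $i$. By the discussion in \cref{sec:non-rev:MH}, global balance is equivalent to $\xtransc\xtarget=\xtarget$, so this suffices.

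First I would write
\[
\sum_{\x\in\xDomain}\xtarget(\x)\xtransc(\x,\x') \;=\; \sum_{\x\in\xDomain}\sum_{i=1}^{n}\xweight_{i}(\x)\,\xtarget(\x)\,\xtrans_{i}(\x,\x'),
\]
and exchange the order of the two (countable) summations, which is legitimate since every summand is non-negative. For each fixed $i$, I would then replace $\xweight_{i}(\x)\xtarget(\x)\xtrans_{i}(\x,\x')$ by $\xweight_{i}(\x')\xtarget(\x')\,\xtrans_{i}\!\left(S_{i}(\x'),S_{i}(\x)\right)$ using \eqref{eq:generalized:db:mod}, and pull the factor $\xweight_{i}(\x')\xtarget(\x')$, which is independent of $\x$, outside the inner sum over $\x$.

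The one step needing a little care — and the closest thing to an obstacle — is the change of summation variable $\y=S_{i}(\x)$ in $\sum_{\x\in\xDomain}\xtrans_{i}\!\left(S_{i}(\x'),S_{i}(\x)\right)$. Since $S_{i}$ is an involution it is in particular a bijection of $\xDomain$, so $\x\mapsto\y$ is a bijection of the index set and the sum equals $\sum_{\y\in\xDomain}\xtrans_{i}\!\left(S_{i}(\x'),\y\right)=1$ because $\xtrans_{i}$ is a Markov kernel (its rows sum to one). It is worth noting that this implication uses only that each $S_{i}$ is a bijection and each $\xtrans_{i}$ is stochastic; the $\xtarget$-invariance of the $S_{i}$ is not needed here (it is what makes \eqref{eq:generalized:db:mod} interpretable as a flux-balance identity and is used when constructing kernels that satisfy it, as in \cref{thm:MSMH}).

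Finally I would collect the pieces: the double sum collapses to $\sum_{i=1}^{n}\xweight_{i}(\x')\xtarget(\x')\cdot 1=\xtarget(\x')\sum_{i=1}^{n}\xweight_{i}(\x')=\xtarget(\x')$, where the last equality uses $\xweight(\x')\in\Delta^{n-1}$. Since $\x'$ was arbitrary, this establishes $\xtransc\xtarget=\xtarget$. The argument is only a few lines; the points to be vigilant about are the non-negativity justification for swapping the order of summation and the bijectivity justification for the change of variables, everything else being bookkeeping.
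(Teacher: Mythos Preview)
Your proof is correct and follows essentially the same route as the paper's own proof: expand $\xtransc$ as the weighted mixture, apply the mixed skewed balance identity \eqref{eq:generalized:db:mod} termwise, use bijectivity of $S_{i}$ and stochasticity of $\xtrans_{i}$ to collapse the inner sum to $1$, and finish with $\sum_{i}\xweight_{i}(\x')=1$. Your additional remarks on justifying the interchange of sums and the observation that $\xtarget$-invariance of the $S_{i}$ is not actually used in this step are accurate and worth keeping.
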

For a proof of this theorem, see Appendix~\ref{sec:proofs}.

\subsection{The Mixed Skew Metropolis-Hastings algorithm} 
\label{ssec:mixedSkewMH}
Consider a collection of $n$ Markov ``proposal kernels''  $\xprop_{i}, i=1,\dots,n,$ on the subsets $\xDomain_{i} \subset \xDomain$, $i=1,\dots,n$, respectively, which form a cover of the whole domain, i.e., $\bigcup_{i=1}^{n}\xDomain_{i}=\xDomain$. Moreover, let $S_{i} : \xDomain_{i} \rightarrow \xDomain_{i}, i=1,\dots,n$ be a collection of $\xtarget$-invariant involutions. 

In what follows we describe how the collections of proposal kernels and involutions together with a suitable state dependent weight vector $\xweight : \xDomain \rightarrow \Delta^{n-1}$ can be used to generate a Markov chain which preserves the target measure $\xtarget$.

The mixed skewed balance condition provides the appropriate framework for ``patching'' these proposals kernels together to obtain a Markov chain which samples from the target measure $\xtarget$.
 
Algorithmically, this can be implemented in a two-step algorithm (see \cref{alg:generalized:MH}). In the first step of this algorithm a proposal $\x^{\prime}$ is generated from the current state $\x$ of the Markov chain as 
\[
\x^{\prime} \sim \xprop_{i}(\x,\ccdot), ~\text{where}~ i \sim \xweight(\x).
\]
The mixed skew detailed balance condition is then enforced through an accept-reject step, where the proposal is accepted with probability
\begin{equation*}
\min \left (1, \frac{\xtarget(\x^{\prime}) \xweight_i(S_{i}(\x^{\prime})) \xprop_i(S_i(\x^{\prime}),S_i(\x)) }{\xtarget(\x) \xweight_i(\x) \xprop_i(\x,\x^{\prime})} \right),
\end{equation*}
in which case the subsequent state of the Markov chain is set to be $\x^{\prime}$, and rejected otherwise, in which case the next state of the Markov chain is set to be the $i$th involutive transformation of the current state, that is $S_{i}(\x)$. 

In order for these two steps to be well-defined and the resulting transition kernel to indeed preserve the target measure $\xtarget$ we require the weight vector $\xweight$ to satisfy  
\begin{enumerate}[label=$(\mathcal{C}_{\arabic*})$]
\item\label{it:cond:weight:1} $\xweight_{i}(\x) >0 \iff \x \in \xDomain_{i},\quad i=1,\dots, n$. 
\item\label{it:cond:weight:3}  $\xweight_{i}(\x)>0$ and $\xprop_{i}(\x,\x^{\prime})>0$  if and only if  $\xweight_{i}(\x^{\prime})>0$ and $\xprop_{i}(S_{i}(\x^{\prime}),S_{i}(\x))>0$.
\item\label{it:cond:weight:2} $\xweight_{i}(S_{i}(\ccdot)) = \xweight_{i}(\ccdot),~i=1,\dots,n$, i.e., the $i$th entry of the weight vector is invariant under the $i$th involutive transformation
\end{enumerate}
Condition \ref{it:cond:weight:1}  ensures that the effective proposal kernel $\xpropsc, ~\xpropsc(\x,\ccdot) = \sum_{i=1}^{n} \xweight_{i}(\x) \xprop_{i}(\x,\ccdot), \forall \x \in \xDomain$, is well defined, and \ref{it:cond:weight:3} ensures that  the Metropolis ratio $r_{i}(\x,\x^{\prime})$ is well defined. Invariance of the $i$th weight under the $i$th involution as stated in \ref{it:cond:weight:2} ensures that the mixed skew detailed balance condition holds for the generated Markov chain. In summary, we have
\begin{theorem}\label{thm:MSMH}
Let $\xDomain_{i}\subset \xDomain, i=1,\dots,n;$ be a cover of the $\xDomain$, and let $\xprop_{i}, i=1,\dots,n;$ be Markov kernels defined on $\xDomain_{i}, i=1,\dots,n;$ respectively. Moreover, let $S_{i}, i = 1,\dots,n;$ be a collection of $\xtarget$-invariant involutions on $\xDomain$, and $\xweight : \xDomain \rightarrow \Delta^{n-1}$ be an $\x$-dependent weight vector satisfying the conditions \ref{it:cond:weight:1} to \ref{it:cond:weight:3}. Then,
 the MSMH Markov chain generated by \cref{alg:generalized:MH} possesses $\xtarget$ as an invariant measure.
\end{theorem}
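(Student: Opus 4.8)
The plan is to reduce \cref{thm:MSMH} to \cref{thm:MSDB}. Concretely, I would show that the one-step transition kernel $\xtransc$ of the Markov chain produced by \cref{alg:generalized:MH} is exactly a weight-mixture $\xtransc=\sum_{i=1}^{n}\xweight_{i}\,\xtrans_{i}$ of kernels $\xtrans_{i}$ that each satisfy the mixed skewed balance identity \eqref{eq:generalized:db:mod} with respect to $\xtarget$; \cref{thm:MSDB} then immediately yields $\xtransc\xtarget=\xtarget$.

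First I would read off the summands $\xtrans_{i}$ from the two-step description in \cref{ssec:mixedSkewMH}: conditionally on the index $i\sim\xweight(\x)$ the chain moves to the proposal $\x'$ with probability $r_{i}(\x,\x')\,\xprop_{i}(\x,\x')$ and to $S_{i}(\x)$ otherwise, where $r_{i}$ is the displayed acceptance probability, so
\[
\xtrans_{i}(\x,\x')=r_{i}(\x,\x')\,\xprop_{i}(\x,\x')+\Bigl(1-\textstyle\sum_{\x''}r_{i}(\x,\x'')\,\xprop_{i}(\x,\x'')\Bigr)\mathbbm{1}_{\{S_{i}(\x)\}}(\x'),
\]
and $\xtransc=\sum_{i}\xweight_{i}\xtrans_{i}$ by construction. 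Here condition \ref{it:cond:weight:1} legitimizes the index draw: since $\xweight_{i}(\x)>0$ exactly when $\x\in\xDomain_{i}$, the proposal $\xprop_{i}(\x,\ccdot)$ (and the effective proposal $\xpropsc=\sum_{i}\xweight_{i}\xprop_{i}$) is defined wherever it is invoked. Condition \ref{it:cond:weight:3} is precisely what makes $r_{i}(\x,\x')$ well defined: when $\xprop_{i}(\x,\x')>0$ — the only case in which $r_{i}$ is evaluated — the numerator and denominator of the ratio inside the $\min$ are simultaneously positive (in particular $\x'\in\xDomain_{i}$), so each $\xtrans_{i}$ is a bona fide Markov kernel; as usual one may restrict attention to $\supp\xtarget$, which the construction leaves absorbing.

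The main step is then to verify \eqref{eq:generalized:db:mod} for each $i$, which amounts to the standard Metropolis flux computation applied to the tilted measure $\mu_{i}(\x):=\xtarget(\x)\,\xweight_{i}(\x)$. Condition \ref{it:cond:weight:2} together with the $\xtarget$-invariance of $S_{i}$ gives $\mu_{i}\circ S_{i}=\mu_{i}$, and (using \ref{it:cond:weight:2} once more in the numerator) the acceptance probability rewrites as $r_{i}(\x,\x')=\min\bigl(1,\ \mu_{i}(\x')\,\xprop_{i}(S_{i}(\x'),S_{i}(\x))/\bigl(\mu_{i}(\x)\,\xprop_{i}(\x,\x')\bigr)\bigr)$ — that is, $\xtrans_{i}$ is exactly the single-involution skew Metropolis-Hastings kernel of \cref{sec:skewed:db} for the (possibly unnormalized, which is immaterial as only ratios enter) measure $\mu_{i}$ and involution $S_{i}$. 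One then checks $\mu_{i}(\x)\,\xtrans_{i}(\x,\x')=\mu_{i}(\x')\,\xtrans_{i}(S_{i}(\x'),S_{i}(\x))$ — which is precisely \eqref{eq:generalized:db:mod} — by the usual case split: for $\x'\neq S_{i}(\x)$ the indicator terms vanish and the identity reduces to the invariance of $\min\bigl(\mu_{i}(\x)\,\xprop_{i}(\x,\x'),\ \mu_{i}(\x')\,\xprop_{i}(S_{i}(\x'),S_{i}(\x))\bigr)$ under the swap $(\x,\x')\mapsto(S_{i}(\x'),S_{i}(\x))$ (using $S_{i}\circ S_{i}=\mathrm{id}$ and $\mu_{i}\circ S_{i}=\mu_{i}$), while for $\x'=S_{i}(\x)$ one has $S_{i}(\x')=\x$ and both sides collapse to $\mu_{i}(\x)\,\xtrans_{i}(\x,S_{i}(\x))$ after invoking $\mu_{i}\circ S_{i}=\mu_{i}$ again. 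With \eqref{eq:generalized:db:mod} established for every $i$, \cref{thm:MSDB} applied to the collection $\{(\xweight_{i},\xtrans_{i},S_{i})\}_{i=1}^{n}$ gives $\xtransc\xtarget=\xtarget$, as claimed.

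I expect the only real friction here to be bookkeeping rather than mathematics: ensuring that each of \ref{it:cond:weight:1}, \ref{it:cond:weight:3} and \ref{it:cond:weight:2} is invoked exactly where a quantity would otherwise be ill-defined or fail to be preserved (the index selection, the acceptance ratio, and the $S_{i}$-invariance of $\mu_{i}$, respectively), and in cleanly dispatching the $\x'=S_{i}(\x)$ corner case of the skew detailed balance verification. Everything else reduces to the classical single-involution Metropolis computation, carried out one index at a time.
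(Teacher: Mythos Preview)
Your proposal is correct and follows essentially the same route as the paper: write $\xtransc=\sum_{i}\xweight_{i}\xtrans_{i}$ with $\xtrans_{i}$ the skew Metropolis--Hastings kernel, verify the mixed skew detailed balance identity \eqref{eq:generalized:db:mod} for each $i$ via the accept/reject decomposition, and invoke \cref{thm:MSDB}. Your packaging via the tilted measure $\mu_{i}=\xtarget\,\xweight_{i}$ (which the paper notes in \cref{rem:sqewMix} but does not use in its proof) and the $\min$-symmetry argument is a slightly cleaner organization of the same computation that the paper carries out by case analysis on $r_{i}\lessgtr 1$.
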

For a proof of this theorem, see Appendix~\ref{sec:proofs}.

\begin{figure}
\begin{minipage}[t]{0.45\textwidth}
\begin{algorithm}[H]
\caption{Mixed Skew Metropolis-Hastings (MSMH)}
\label{alg:generalized:MH}
\SetKwData{Left}{left}\SetKwData{This}{this}\SetKwData{Up}{up}
\SetKwFunction{Union}{Union}\SetKwFunction{FindCompress}{FindCompress}
\SetKwInOut{Input}{input}
\SetKwInOut{Output}{output}
\Input{$\x$}
sample partition $i \sim \xweight(x)$\;
generate proposal $\x^{\prime} \sim \xprop_i(\x, \ccdot)$\;
$r_{i}(\x,\x^{\prime}) \newline \hspace*{2em}\gets~\frac{\xtarget(\x^{\prime}) \xweight_i(S_{i}(\x^{\prime})) \xprop_i(S_i(\x^{\prime}),S_i(\x)) }{\xtarget(\x) \xweight_i(\x) \xprop_i(\x,\x^{\prime})}$\;
sample $u \sim \mathcal{U}([0,1])$\;
\eIf{$u < r_{i}(\x,\x^{\prime})$}{
$\x \gets \x^{\prime}$
}{
$\x \gets S_i(\x)$
}
\KwRet{x}
\BlankLine
\end{algorithm}
\end{minipage}
\hfill
\begin{minipage}[t]{0.53\textwidth}
\begin{algorithm}[H]
\caption{Mixed Skew Metropolis-Hastings on extended state space}\label{alg:generalized:MHb}
\SetKwData{Left}{left}\SetKwData{This}{this}\SetKwData{Up}{up}
\SetKwFunction{Union}{Union}\SetKwFunction{FindCompress}{FindCompress}
\SetKwInOut{Input}{input}
\SetKwInOut{Output}{output}

\Input{$\dm, \velvec$}
sample partition $i \sim \weight(\dm)$\;
sample proposal $(\dm^{\prime},\velvec^{\prime}) \sim \xprop_{i}\big((\dm, \velvec), (\ccdot, \ccdot)\big)$\;
\small{$r_{i}\big((\dm,\velvec),(\dm^{\prime}, \velvec^{\prime})\big)$} $ \newline \hspace*{4em} \gets \frac{ \weight_{i}(\dm^{\prime}) \target(\dm^{\prime}) \xprop_{i}\left((\dm^{\prime}, R_{i}(\velvec^{\prime})),(\dm, R_{i}(\velvec))\right) }{ \weight_{i}(\dm) \target(\dm) \xprop_{i}\left((\dm, \velvec),(\dm^{\prime}, \velvec^{\prime})\right)}$\;
sample $u \sim \mathcal{U}([0,1])$\;
\eIf{$u < r_{i}\big((\dm, \velvec),(\dm^{\prime}, \velvec^{\prime})\big)$}{
$(\dm,\velvec) \gets (\dm^{\prime},\velvec)$
}{
$\vel_i \gets -\vel_i$
}
\KwRet{$\dm, \velvec$}
\end{algorithm}
\end{minipage}
\hfill
\caption*{Mixed Skew Metropolis-Hastings algorithm in generic form (\cref{alg:generalized:MH}) and as obtained via augmenting  the sampling space (\cref{alg:generalized:MHb}).
}
\end{figure}
\begin{remark}\label{rem:sqewMix}
The transition kernel $\xtransc$ of the Markov chain generated by \cref{alg:generalized:MH} takes the explicit form $
\xtransc(\x,\ccdot) = \sum_{i=1}^{n} \xweight_{i}(\x) \xtrans_{i}(\x,\ccdot)$ with 
\begin{equation*}
\xtrans_{i}(\x,\x^{\prime}) = \min(1,r_{i}(\x,\x^{\prime}))\xprop_{i}(\x,\x^{\prime}) + (1-\min(1,r_{i}(\x,\x^{\prime}))) \mathbbm{1}_{\{S_{i}(\x)\}} (\x^{\prime}), \quad i=1,\dots,n.
\end{equation*}
If entries in the weight vector $\xweight$ are constant in $\x$, then, the weight entries in the expression of the respective Metropolis-Hasting ratios $r_{i}(\x,\x^{\prime}), i=1,\dots,n$ cancel, so that each $\xtrans_{i}$ is  $\xtarget$-invariant, and $\xtransc$ is simply a mixture of $\xtarget$-invariant Markov kernels. In contrast, the weights $\xweight$ will not be constant in our examples; and hence,  the  Markov kernels $\xtrans_{i}$ will not generally preserve the target measure $\xtarget$. Instead these kernels can be shown to preserve the probability measures $\xtarget_{i}(\x) \propto \xweight_{i}(\x) \xtarget(\x), \, i=1,\dots,n$, respectively. 
\end{remark}
\subsection{Implementation on the state space graph of a Markov process}
\label{ssec:SkewDetailedBalanceonStateGraph}
While \cref{alg:generalized:MH} is very general, we have not
specified how the involutions $\{S_i\}_{i=1}^{n}$ and the proposal kernels $\{\xprop_i\}_{i=1}^{n}$ may be
chosen, or provided any intuition for why the algorithm might be an improvement over
the classical Metropolis-Hastings algorithm.

In what follows we provide a general
construction which takes a proposal Markov transition kernel $\prop$ and
target measure $\target$ on a discrete
state space $\Domain$ and builds a collection of proposals $\{\xprop_{i}\}_{i=1}^{n}$ and involutions $\{S_i\}_{i=1}^{n}$
on an extended state space $\xDomain$ so that
Algorithm~\ref{alg:generalized:MH} can be used. This construction will
make more precise the idea that the skew Metropolis-Hastings algorithm
adds ``momentum''  to the standard Metropolis-Hastings algorithm. 

Our main conditions on the proposal kernel $\prop$ and $\target$ are summarized in the following assumption.
\begin{assumption}\label{as:prop}
Let the proposal kernel $\prop$ and the target measure $\target$ be such that 
\begin{enumerate}[(i)]
\item \label{it:1:as:prop}$\target(\dm) >0$ for all $\dm \in \Domain$
\item \label{eq:WRev} $\prop(\dm, \dm') \neq 0 \iff \prop(\dm', \dm) \neq 0$
\item\label{it:1:as:prop:ir}
The Markov chain generated by $\prop$ is irreducible.
\end{enumerate}
\end{assumption}
The first condition is mild as states $\dm$ with $\target(\dm)=0$ can
simply be removed from $\Domain$. 
The symmetry condition \ref{eq:WRev} ensures that $\prop$ is equivalent (in the senes that the corresponding transition probabilities have identical support) to some reversible kernel with invariant measure $\target$. Condition \ref{it:1:as:prop:ir} is necessary to ensure that the constructed non-reversible Markov chain is uniquely ergodic (see \cref{sec:ergodic}). 


The proposal kernel induces a graph structure on $\Domain$. States $\dm, \dm^{\prime} \in \Domain$ are said to be adjacent if $\prop(\dm,\dm^{\prime})>0$. We refer to corresponding adjacency graph $\graph = (\vertices, \edges)$ with vertices given as $\vertices=\Domain$ and edges $\edges = \{(\dm,\dm') \in\Domain \times \Domain \mid \,\prop(\dm, \dm')> 0\}$ as the {\em state graph} of the Markov chain; see \cref{sfig:graph:1} for an illustration.   
In the view of this graph structure, the condition \ref{eq:WRev} ensures that the graph $\graph$ is symmetric (or undirected) in the sense that if $(u,v) \in \edges$ then so is $(v,u)\in \edges$, and condition \ref{it:1:as:prop:ir} ensures that $\graph$ is connected.

The general idea of our construction is to build a non-reversible Markov chain by introducing non-reversible flows, typically shaped like ``vortices,'' on the state graph, each of which being associated with a involutive transformation.

Concretely, we begin by specify these flows by a collection of oriented subgraphs $\graph_{i}^{+} = (\vertices_{i},\edges_{i}^{+}), i=1,\dots,n$. We require that each $\graph_{i}^{+}$  has no isolated vertices and that  the symmetric completions $\graph_{i}  = (\vertices_{i},\edges_{i})$, where $\edges_{i} = \edges_{i}^{+} \cup \edges_{i}^{-}$ with $\edges_{i}^{-} = \{ (u,v) \mid (v,u) \in \edges_{i}^{+}\}$, form a cover of $\graph$ in the sense that $\vertices =  \bigcup_{i=1}^{n}\vertices_{i}$,  $\edges =  \bigcup_{i=1}^{n}\edges_{i}$; see \cref{sfig:graph:2}. 

For each oriented subgraphs $\graph_{i}^{+}$, we introduce an auxiliary variable $\vel_{i}$ which takes positive or negative unitary values $\{-1, +1\}$, and we denote the vector of all such auxiliary variables as $\velvec=(\vel_{1},\dots,\vel_{n}) \in \{-1, +1\}^{n}$. In accordance with our notation of \cref{sec:general} we denote the such extended state-space by $\xDomain = \Domain\times\{-1,1\}^{n}$ and we use the shorthand notation $\x = (\dm,\velvec)\in \xDomain$ for elements of that state.

Given this collection of directed subgraphs, our general recipe to built a collections of associated proposals $\{\xprop_{i}\}_{i=1}^{n}$,  involutions $\{S_{i}\}_{i=1}^{n}$ and weights $\{\weight_{i}\}_{i=1}^{n}$ on the extended space $\xDomain$ is as follows.
Let
\begin{equation*}
  \nb_{i}(\dm) := \{\dm^{\prime} \mid (\dm,\dm^{\prime}) \in \edges_{i}\},
\end{equation*}
denote the of neighborhood of $\dm$ in the graph $\graph_{i}$ which are reachable under the proposal kernel $\prop$. 
Let
\begin{equation*}
  \nb_{i}^{+}(\dm) = \{ \dm^{\prime} \in \Domain : ~ (\dm,\dm^{\prime}) \in \edges_{i}^{+} \}, \qquad \nb_{i}^{-}(\dm) = \{ \dm^{\prime} \in \Domain : ~ (\dm,\dm^{\prime}) \in \edges_{i}^{-} \},
\end{equation*}
denote the partition of neighborhood $\nb_{i}(\dm)$
into the set of states which can be reached in one step from the state $\dm$ following the direction of the positive flow $\edges_{i}^{+}$, and the negative flow $\edges_{i}^{-}$, respectively; see \cref{sfig:graph:2}.
We use this partition to built for each subgraph $\graph_{i}$ a proposal kernel $\xprop_{i}$ on $\xDomain_{i}:=\vertices_{i}\times \{-1,1\}^{n} \subset \xDomain$, which for positive value $\vel_{i}=1$ proposes new states in the direction of the positive flow $\edges_{i}^{+}$, and for negative value $\vel_{i}=-1$ proposes new states in the direction of the negative flow $\edges_{i}^{-}$. That is 
\[
\xprop_{i}\big ( (\dm,\velvec), (\ccdot,\velvec) \big ) \propto \mathbbm{1}_{\nb_{i}^{\vel_{i}}(\dm)}(\ccdot) \prop_{i}(\dm,\ccdot),
\]
or, more precisely,
\begin{align} \label{eq:general:constr:1}
\xprop_{i}((\dm, \velvec), (\dm^{\prime}, \velvec^{\prime})) = \begin{cases}
\dfrac{\prop(\dm, \dm^{\prime})}{\prop(\dm,\nb_{i}^{\vel_{i}}(\dm))}  & \text{if  }\dm^{\prime} \in   \nb_{i}^{\vel_{i}}(\dm) \text{ and } \velvec^{\prime} =\velvec, \\
1 & \text{if  } \nb_{i}^{\vel_{i}}(\dm) = \emptyset \text{ and } S_i((\dm^{\prime},\velvec^{\prime})) = (\dm, \velvec),\\
0 & \text{otherwise},
 \end{cases}
\end{align}
where in both the above expressions we used the shorthand notation
\begin{align*}
\nb_{i}^{\vel}(\dm) =\begin{cases}
\nb_{i}^{+}(\dm), &\text{if } \vel= +1\\
\nb_{i}^{-}(\dm), &\text{if } \vel = -1
\end{cases}.
\end{align*}

As a natural choice for the involutive map $S_{i} : \xDomain \rightarrow \xDomain$ we consider the map which flips the sign of the $i$th component of the vector $\velvec$, i.e.,
\begin{align}\label{eq:general:constr:2}
S_i((\dm,\velvec)) = (\dm,R_{i}(\velvec)), \text{ with } R_{i}(\velvec) =  \velvec - 2 \sgn(\velvec\cdot {\rm e}_i) {\rm e}_i,
\end{align}
where ${\rm e}_{i}$ denotes the $i$th canonical vector in $\RR^{n}$. For an illustration, see \cref{fig:stateGraph}.

Throughout the remainder of this article, we assume, that the weight vector $\xweight$ is purely a function of $\dm$, i.e., $\xweight((\dm,\velvec)) = \weight(\dm)$  for some $\weight : \Domain\rightarrow \Delta^{n-1}$ so that \ref{it:cond:weight:2} is trivially satisfied. 
With the $\xprop_{i}$'s and $S_i$'s as defined in \eqref{eq:general:constr:1} and \eqref{eq:general:constr:2}, respesctively, it can be verified that 
\begin{equation}\tag{$\mathcal{C}_{1}'$}\label{eq:cond:weight:1b}
\weight_{i}(\dm) > 0 \iff \dm \in \vertices_{i},
\end{equation}
is sufficient for the remaining conditions \ref{it:cond:weight:1} and \ref{it:cond:weight:2} to be satisfied, provided that the symmetry condition \ref{eq:WRev} of \cref{as:prop} holds. We say that the $i$th proposal is {\em active} in state $\dm$ if $\dm \in \vertices_{i}$, or, equivalently, if $\weight_{i}(\dm)>0$, and we denote by $\activeset(\dm) = \{ i  \in \{1,\dots,n\} \mid \weight_{i}(\dm)>0 \}$ the index set of proposals which are active in $\dm$.

\begin{remark}\label{rem:activeSomewhere}
  Note that by our assumptions on the graphs $\graph_{i}^{+}$, they have non-empty edge sets $\edges_{i}^{+}$. This implies that for every $ i\in\{1,\dots,n\}$ there exists at lease one $\dm \in \Domain$ with $ i \in \activeset(\dm)$. Hence there is always at least one state at which the $i$th momentum can be flipped.
\end{remark}

We consider 
\begin{equation}\label{eq:weights:generic}
\weight_{i}(\dm) = \prop(\dm,\nb_{i}^{+}(\dm)) + \prop(\dm,\nb_{i}^{-}(\dm)),
\end{equation}
as a generic choice for the weight vector which can be easily verified to satisfy condition \eqref{eq:cond:weight:1b}.


Lastly, we extend the definition of the target measure $\target$ from
$\Domain$ to $\xDomain$ as
\begin{equation*}
  \xtarget((\dm,\velvec))= \frac1{2^n}  \target(\dm), \quad \dm \in \Domain,\, \velvec \in \{-1,1\}^{n},
\end{equation*}
so that $\xtarget$ is the product measure of $\target$ and the uniform measure on $\{-1,1\}^{n}$. In particular,
\begin{equation*}
\sum_{\velvec \in \{-1,1\}^n}\xtarget((\dm,\velvec)) =\target(\dm)\,, 
\end{equation*}
i.e., the marginal measure of $\xtarget$ in $\dm$ coincides with the target measure $\target$ on $\Domain$. 
Moreover, with the augmented measure being of product form and uniform in the $\velvec$-component, it follows that the $S_{i}$'s are $\xtarget$-invariant, i.e., $\xtarget(S_{i}((\dm,\velvec)))  = \xtarget((\dm,\velvec) )$  for all $i\in \{1,\dots,n\}$, $\dm \in \Domain$ and
$\velvec,\velvec' \in \{-1,1\}^n$, since  $S_i$ only
changes the $\velvec$-component. 


In conclusion, the collection of proposals, weights and involutions $(\xprop_{i},S_i, \xweight_{i}), i \in \{1,\dots,n\}$ and the augmented measure $\xtarget$ satisfy by construction the  hypotheses of \cref{thm:MSMH}. Hence, with these choices, the MSMH-algorithm (see \cref{alg:generalized:MHb}) produces a Markov chain which preserves the measure $\xtarget$ on $\xDomain$, and thus also the marginal measure $\target$ on $\Domain$.

\begin{figure}
\captionsetup{width=1.0\linewidth}
\subcaptionbox{ Undirected State Graph $\graph = (\vertices,\edges)$ \label{sfig:graph:1}}{\includegraphics[width=.28\linewidth, clip = true, trim = {1.9cm 0cm 2.5cm 0cm}]{./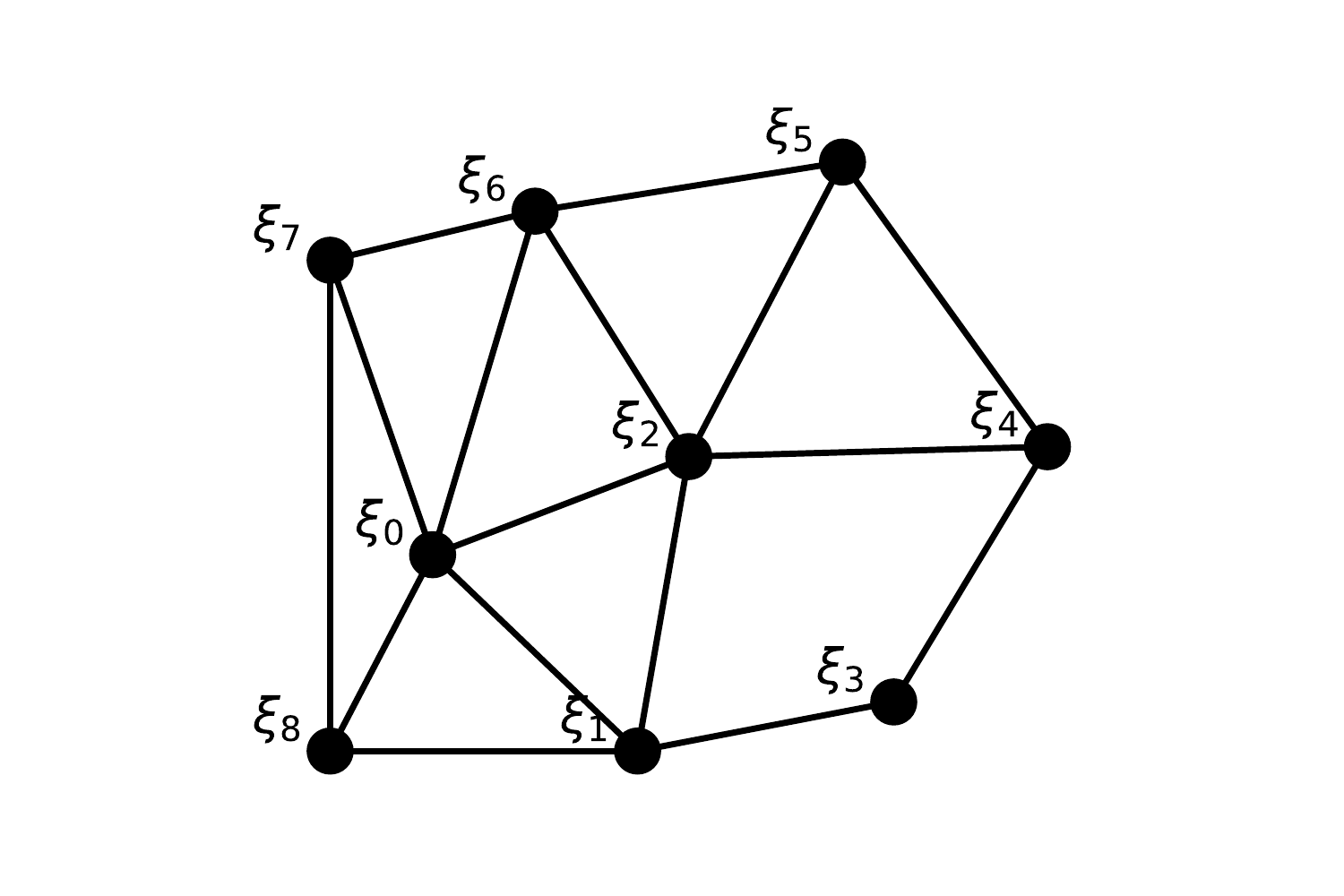}}\qquad
\subcaptionbox{State graph with three assigned flows: $\edges_{1}^{+}$ (blue), $\edges_{2}^{+}$ (red), $\edges_{3}^{+}$ (green) \label{sfig:graph:2}}{\includegraphics[width=0.29\linewidth, clip = true, trim = {1.9cm 0cm 2.25cm 0cm}]{./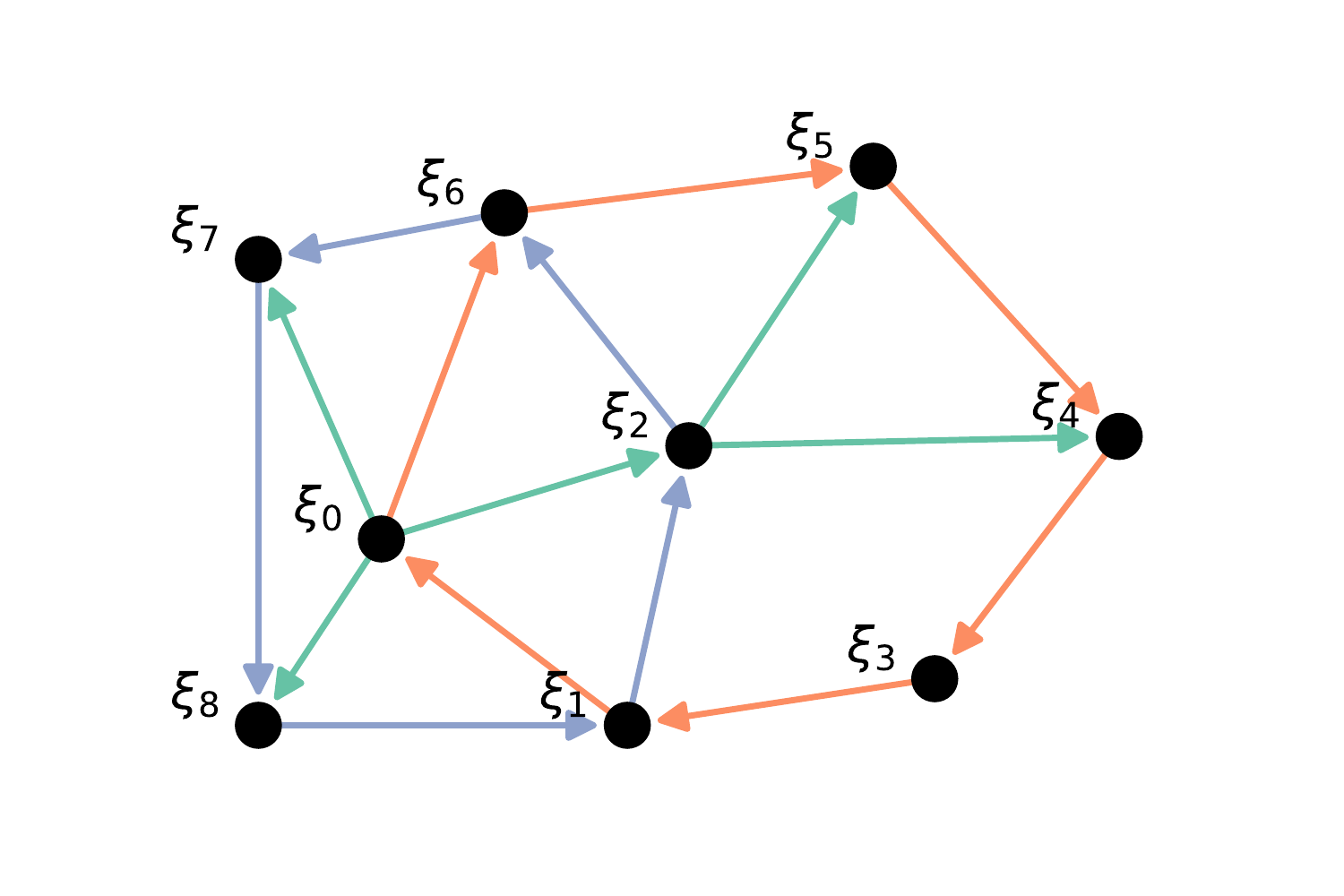}}\qquad
\subcaptionbox{State graph with indicated positive and negative neighborhoods of state $\dm_{0}$. \label{sfig:graph:3}}{\includegraphics[width=0.29\linewidth, clip = true, trim = {1.9cm 0cm 2.25cm 0cm}]{./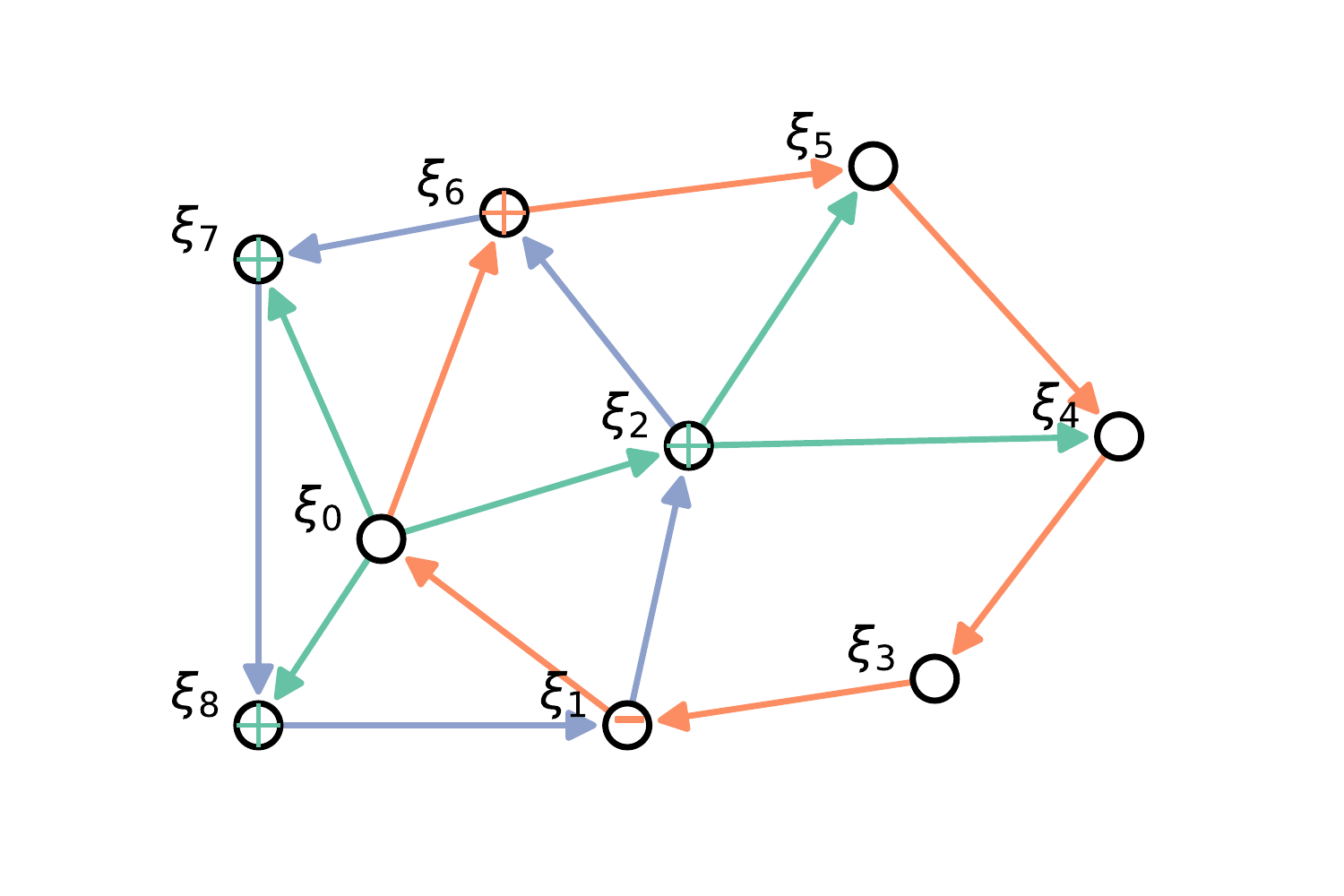}}\qquad
\caption{Panel (A) shows the undirected state graph induced by a reversible proposal kernel $\prop$ on a state space comprised of 9 states. Panel (B) shows an exemplary assignment of flows. 
If the current state is $\dm_{0}$ and either $\velvec=(1,1,1)$ or $\velvec=(-1,1,1)$, we may propose a state from the positive neighborhoods $\nb_{2}^{+}(\dm_{0})=\{ \dm_{6} \} $ or $\nb_{3}^{+}(\dm_{0})=\{\dm_{2},\dm_{7},\dm_{8}\}$. If instead $\velvec=(1,-1,-1)$ or $\velvec=(-1,-1,-1)$, we may propose a state from within the sets $\nb_{2}^{-}(\dm_{0})=\{\dm_{1}\}$; see Panel (C). Note that the neighbourhoods $\nb_{1}^{+}(\dm_{0}), \nb_{1}^{-}(\dm_{0}),$ and $\nb_{3}^{-}(\dm_{0})$ are empty.
 }
\label{fig:stateGraph}
\end{figure}

\begin{remark}[The construction on a General Measure space]
  We have chosen to assume $\Domain$ to be countable. Extending to the 
  case where $\Domain$ is a general separable measure space is straight forward when proposal kernel from state $\dm$, $\prop(\dm,d\dm')$, is absolutely continuous with respect to a common  ($\sigma$-finite) radon measure $\lambda(d\dm')$ for all $\dm \in \Domain$.

  It then follows that the measure $\target$ needs to also be absolutely continuous with respect to $\lambda$. One can then write the detailed balance condition as $\target(d\dm)\prop(\dm,d\dm')=\target(d\dm')\prop(\dm',d\dm)$ as measure on the product space $\Domain\times \Domain$. If we denote by $q(\dm,\dm')$ and $p(\dm)$ to be the densities of $\prop(\dm,d\dm')$ and $\target(d\dm)$ with respect to $\lambda$ then all that follows makes sense and applies to this more general setting with $\prop$ and $\target$ replaced respectively by $q$ and
  $p$. For example the conditions of Assumptions~\ref{as:prop} become $p(\dm) >0$ rather than $\target(\dm)>0$ and $q(\dm,\dm')>0$ rather than  $\prop(\dm, \dm') > 0$. The ratios of $\prop$'s in \eqref{eq:general:constr:1} become ratios of $q$'s evaluated at the same points. The existence of a version (indistinguishable up to null sets) which has the right measurability properties of these ratios and the acceptance rations $r_i$ from Algorithm~\ref{alg:generalized:MH} and~\ref{alg:generalized:MHb} are guarantied by the arguments of Proposition~1 in \cite{Tierney_1998}. 
The fact that graph which is constructed had countable vertices was unimportant as the definitions really do not use the graph structure. All of the definitions make sense once the above modifications have been made.  
\label{rem:genStateSpace}
\end{remark}

\subsection{Ergodic properties}\label{sec:ergodic}
Let in the remainder of this section 
\[
\xtransc\big((\dm,\velvec),\ccdot\big) = \sum_{i=1}^{n} \weight_{i}(\dm) \xtrans_{i}\big((\dm,\velvec),\ccdot\big),
\]
denote the Markov kernel generated by \cref{alg:generalized:MHb} with generic weight function $\weight$ as specified in \eqref{eq:weights:generic}.
If in addition to \cref{as:prop} the following 
\begin{assumption}\label{as:prop:ir}
For any extended state $(\dm,\velvec) \in \dmDomain \times \{-1,1\}^{n}$ and any active index $i \in \activeset(\dm)$, there are  $m\in \NN$ and $\dm^{\prime}\in \dmDomain$ so that
\[
\xtrans_{i}^{m}\big((\dm,\velvec), (\dm^{\prime},R_{i}(\velvec))\big) >0.
\]
\end{assumption}
is satisfied, then we can conclude unique ergodicity of the generated Markov chain as detailed in the following theorem.

\begin{theorem}\label{thm:ergodic:general}
Let \cref{as:prop} and \cref{as:prop:ir} be satisfied. The non-reversible Markov chain $(\x_{k})_{k\in \NN}$ generated by $\xtransc$ and initial extended state $\x_{0}=\tilde{\x}_{0}$ is uniquely ergodic in the sense that
\begin{equation}\label{eq:ergodic:1}
\lim_{N\rightarrow \infty} \frac{1}{N}\sum_{k=0}^{N-1}\mathbbm{1}_{\{\x^{\prime}\}}(\x_{k}) = \xtarget(\x^{\prime})
\end{equation}
 almost surely for any value of the initial state $\tilde{\x}_{0}\in \xDomain$ and all $\x^{\prime}\in \xDomain$.
In particular, for any observable $\varphi : \dmDomain \rightarrow \RR$, we have 
\begin{equation}\label{eq:ergodic:2}
\lim_{N\rightarrow \infty} \frac{1}{N}\sum_{k=0}^{N-1}\varphi(\x_{k}) = \sum_{\x \in \xDomain}\varphi(\x) \xtarget(\x)
\end{equation}
 almost surely for all initial states $\tilde{\x}_{0}\in \xDomain$.
\end{theorem}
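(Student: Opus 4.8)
The plan is to establish \eqref{eq:ergodic:1} via the Birkhoff–Markov ergodic theorem for a finite-state (or countable-state) Markov chain: it suffices to show that $\xtransc$ is irreducible on $\xDomain$ and possesses $\xtarget$ as an invariant measure, from which unique ergodicity — and hence the stated empirical-average convergence — follows by standard theory. Invariance of $\xtarget$ under $\xtransc$ is already in hand: the construction at the end of \cref{ssec:SkewDetailedBalanceonStateGraph} verifies the hypotheses of \cref{thm:MSMH}, so $\xtransc\xtarget=\xtarget$. The content of the theorem is therefore the irreducibility claim, which is exactly what \cref{as:prop:ir} is designed to supply. So the proof reduces to: (a) show $\xtransc$ is irreducible on $\xDomain$; (b) invoke the ergodic theorem to get \eqref{eq:ergodic:1}; (c) deduce \eqref{eq:ergodic:2} by linearity and (for unbounded $\varphi$ on a possibly infinite $\xDomain$) a truncation/monotone-convergence argument, though if $\dmDomain$ is finite this is immediate.

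The main work is step (a). First I would reduce irreducibility of the mixture kernel $\xtransc$ to a connectivity statement about the directed graph on $\xDomain$ whose edges are the pairs $(\x,\x')$ with $\xtransc(\x,\x')>0$. I would argue in two moves. \textbf{Move 1: from any $(\dm,\velvec)$ one can reach $(\dm,R_i(\velvec))$ for every active $i\in\activeset(\dm)$.} This is precisely \cref{as:prop:ir}: it gives $m$ and $\dm'$ with $\xtrans_i^m\big((\dm,\velvec),(\dm',R_i(\velvec))\big)>0$, and since $\weight_i(\dm)>0$ for active $i$, a sequence of $m$ steps each choosing partition index $i$ (which has positive weight at every intermediate state — here one needs that the intermediate states stay in $\vertices_i$, which holds because $\xtrans_i$ only moves within $\graph_i$ and only flips the $i$th momentum) realizes this under $\xtransc$ with positive probability; then from $(\dm',R_i(\velvec))$ one similarly returns, or more directly one notes that the existence of \emph{some} path flipping $\vel_i$ while leaving the other momenta fixed is what matters. \textbf{Move 2: from $(\dm,\velvec)$ one can reach $(\dm',\velvec)$ for any $\dm'$ adjacent to $\dm$ in $\graph$, in either momentum-sign configuration.} Since the subgraphs $\graph_i$ cover $\graph$, the edge $(\dm,\dm')$ lies in some $\edges_i$, hence $\dm'\in\nb_i^{+}(\dm)$ or $\dm'\in\nb_i^{-}(\dm)$; picking the momentum sign $\vel_i$ accordingly (using Move 1 to flip $\vel_i$ into the needed sign first if necessary) and proposing under $\xprop_i$ moves $\dm\to\dm'$ with positive probability — provided the acceptance ratio $r_i$ is positive, which it is because $\target>0$ everywhere (\cref{as:prop}\eqref{it:1:as:prop}), $\weight_i>0$ on $\vertices_i$, and $\xprop_i>0$ on the relevant transitions by \eqref{eq:general:constr:1} together with the symmetry condition \eqref{eq:WRev}. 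Chaining Move 2 along a path in $\graph$ (which exists and is connected by \cref{as:prop}\eqref{it:1:as:prop:ir}) lets one reach any $\dm'$ from any $\dm$; combining with Move 1 lets one additionally adjust all momenta to any target configuration $\velvec'$. Hence $\xtransc$ is irreducible on $\xDomain$.

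For step (b): an irreducible Markov chain on a countable state space that admits an invariant \emph{probability} measure is positive recurrent, and the ergodic theorem for positive recurrent chains gives \eqref{eq:ergodic:1} almost surely from every initial state. For step (c), \eqref{eq:ergodic:2} is \eqref{eq:ergodic:1} integrated against $\varphi$: when $\xDomain$ is finite it is a finite linear combination; when $\xDomain$ is countably infinite one writes $\varphi=\varphi^{+}-\varphi^{-}$ and applies the ergodic theorem in the form $\tfrac1N\sum_{k<N}\varphi(\x_k)\to\EE_{\xtarget}[\varphi]$ valid for $\varphi\in L^1(\xtarget)$ (here $\varphi$ is a function of $\dm$ only, so the statement is about $\sum_{\x}\varphi(\x)\xtarget(\x)=\sum_{\dm}\varphi(\dm)\target(\dm)$). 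The one genuine obstacle is Move 1 — i.e., correctly bookkeeping that executing $m$ successive steps of the \emph{mixture} kernel $\xtransc$ with the index $i$ repeatedly selected reproduces a positive-probability copy of the $\xtrans_i^m$ transition; this requires checking that every intermediate state visited by $\xtrans_i$ still lies in $\vertices_i$ so that $\weight_i$ stays positive there, which follows from the fact that $\xprop_i$ (and the fallback involution $S_i$) keep the $\dm$-coordinate inside $\vertices_i$. Everything else is routine once irreducibility is pinned down.
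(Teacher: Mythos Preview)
Your overall architecture --- invariance of $\xtarget$ via \cref{thm:MSMH}, then irreducibility, then the ergodic theorem for positive recurrent chains --- is exactly the paper's route. The gap is in the irreducibility argument. Your Move~1 only applies to indices $i\in\activeset(\dm)$, so the sentence ``combining with Move~1 lets one additionally adjust all momenta to any target configuration $\velvec'$'' is not justified: once you have arrived at $(\dm',\velvec'')$, you can flip $\vel_i''$ at $\dm'$ only for those $i$ that are \emph{active at $\dm'$}. For an index $i\notin\activeset(\dm')$ the $i$th coordinate of the momentum is frozen --- the weight $\weight_i(\dm')=0$, so the kernel $\xtrans_i$ is never selected at $\dm'$, and no $\xtrans_j$ with $j\neq i$ ever touches $\vel_i$ --- so if $\vel_i''\neq\vel_i'$ you cannot repair it at $\dm'$. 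This is the principal difficulty, and it is not the one you flagged in your last paragraph.

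The paper's fix is more delicate than your sketch. First choose the connecting path $(\dm_0,\ldots,\dm_{m''})$ so that \emph{every} index $i\in\{1,\ldots,n\}$ becomes active at some state along it (possible since each $\graph_i^+$ has nonempty edge set; this is the extra clause in \cref{col:trans}). Then, for each mismatched component $i\notin\activeset(\dm')$, insert the flipping loop of Move~1/\cref{col:flippable:2} not at $\dm'$ but at $\dm_{\tau_i}$, the \emph{last} state on the path where $i$ is active. One must then check that the remaining segment $(\dm_{\tau_i+1},\velvec_{\tau_i+1}),\ldots,(\dm_{m''},\velvec_{m''})$ still has positive probability after the $i$th momentum has been flipped, and in particular that the step $\dm_{\tau_i}\to\dm_{\tau_i+1}$ was not itself taken via $\xtrans_i$. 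The paper isolates a separate lemma (\cref{lem:trans}) for exactly this point: mixed skew detailed balance forces $\xtrans_i\big((\dm,\velvec),(\dm',\velvec)\big)=0$ whenever $i\in\activeset(\dm)$ but $i\notin\activeset(\dm')$. Your proposal does not anticipate this bookkeeping, which is where the real work lies. (A smaller point: in Move~1, \cref{as:prop:ir} lands you at $(\dm',R_i(\velvec))$ for some $\dm'$, not at $(\dm,R_i(\velvec))$; the return to $\dm$ is obtained by ``walking back'' the forward path using skew detailed balance, as in the paper's proof of \cref{col:flippable:2} --- your ``similarly returns'' needs this.)
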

\begin{proof}
A proof of this theorem can be found in \cref{sec:proofs}.
\end{proof}

By construction described in \cref{ssec:SkewDetailedBalanceonStateGraph}, \cref{as:prop:ir} implies that for every state $\dm \in \vertices_{i}$ we can reach a vertex $\dm'$ for which the probability of flipping the $i$th velocity component is positive. This can be accomplished by either following a directed path along edges in $\edges_{i}^{+}$ if $\vel_{i}=+1$, or by following a directed path along edges in $\edges_{i}^{-}$ if $\vel_{i}=-1$. Provided that each of the Markov kernels $\{\xtrans_{i}\}_{i=1}^{n}$ possess an invariant measure (which \cref{as:prop} guarantees; see \cref{rem:sqewMix}), we will see bellow that the only obstruction to the existence of  such a reachable vertex $\dm'$ is that $\dm$ is  a vertex of a cycle in the associated directed graph. To make this precise, we need a few concepts.


A {\em circuit} in a directed graph $G=(V,E)$ is a sequence of vertices $(v_{0},\dots,v_{m})$ such that (i) each   pair $(v_{i},v_{i+1}) \in E$ for  $i=0,\dots,m-1$ and (ii) the path begins and ends at the same vertex (i.e. $v_{0}=v_{m}$).
Note that we allow edges $(v_{i},v_{i+1})$  to be repeated. 
We say that a circuit is {\em escapable} if there is at least one edge $(v,v')\in E$ with $v$ being a vertex in the circuit and $v'$ not. Conversely, we say that a circuit  is {\em non-escapable} if such an edge does not exist.  We say that a non-escapable circuit $(v_{0},\dots,v_{m})$ is {\em maximal} if the vertices of the circuit, namely $\{ v_{k} \mid k=0,\dots,m\}$, are not a proper subset of the vertices of another  non-escapable circuit.
%
Equipped with the terminology we can state the following \cref{as:circuit} and \cref{lem:flippable}.



\begin{assumption}\label{as:circuit}
For every value of $\velvec \in \{-1,1\}^{n}$ and $i \in \{1,\dots,n\}$ every maximal non-escapable circuit 
in $\graph^{\vel_{i}}_{i}$ contains at least one state $\dm'$ for which $P_{i}\big(\,(\dm',\velvec), (\dm',R_{i}(\velvec))\,\big)>0$. Here we use the short-hand notation 
\[
\graph^{\vel_{i}}_{i}  = \begin{cases}
\graph^{+}_{i}&, \vel_{i} = +1\\
\graph^{-}_{i}&, \vel_{i} = -1\\
\end{cases}.
\]
\end{assumption}

\begin{theorem}\label{lem:flippable}
Provided that \cref{as:prop} holds, then \cref{as:circuit}  and  \cref{as:prop:ir} are  equivalent. In short: (\cref{as:prop} + \cref{as:circuit}) $\iff$ (\cref{as:prop} + \cref{as:prop:ir}).
\end{theorem}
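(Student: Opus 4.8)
Throughout we assume \cref{as:prop} and prove the two implications \cref{as:prop:ir} $\Rightarrow$ \cref{as:circuit} and \cref{as:circuit} $\Rightarrow$ \cref{as:prop:ir} separately. The first step is to record a combinatorial description of one step of $\xtrans_i$ in terms of the oriented graph $\graph_i^{\vel_i}$. From the explicit forms \eqref{eq:general:constr:1}--\eqref{eq:general:constr:2} of $\xprop_i$ and $S_i$ and the acceptance ratio of \cref{alg:generalized:MHb}, a single step of $\xtrans_i$ out of $(\dm,\velvec)$ does exactly one of two things: it leaves $\velvec$ fixed and moves $\dm$ to some $\dm'\in\nb_i^{\vel_i}(\dm)$, or it leaves $\dm$ fixed and replaces $\velvec$ by $R_i(\velvec)$. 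The latter happens with positive probability exactly when either $\nb_i^{\vel_i}(\dm)=\emptyset$ (a ``dead end'': the flip $(\dm,\velvec)\mapsto(\dm,R_i(\velvec))$ is then the forced proposal and is accepted because numerator and denominator of the corresponding $r_i$ are the same expression, so $r_i=1$) or some admissible proposal out of $(\dm,\velvec)$ is rejected with positive probability; in this case we call $\dm$ \emph{flippable for $\vel_i$} (a property of $\dm$ and the sign $\vel_i$ only). Conversely, if $\dm'\in\nb_i^{\vel_i}(\dm)$ then $\xtrans_i\big((\dm,\velvec),(\dm',\velvec)\big)>0$: the forward proposal has positive probability, and the ``reversed'' proposal in the numerator of the corresponding $r_i$ is positive because $\dm\in\nb_i^{-\vel_i}(\dm')$ by the symmetry \ref{eq:WRev}, while $\weight_i(\dm),\weight_i(\dm')>0$ by \eqref{eq:cond:weight:1b} and $\target>0$ by \cref{as:prop}; hence $r_i>0$ and acceptance has positive probability.

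\textbf{\cref{as:prop:ir} $\Rightarrow$ \cref{as:circuit}, contrapositively.} Suppose some maximal non-escapable circuit $(v_0,\dots,v_\ell)$ in $\graph_i^{\vel_i}$ contains no vertex flippable for $\vel_i$. Start $\xtrans_i$ at $(v_0,\velvec)$ with $\velvec$ chosen to have $i$-th entry $\vel_i$; since $v_0\in\vertices_i$, $i\in\activeset(v_0)$, so this is a legitimate instance of \cref{as:prop:ir}. No circuit vertex is a dead end (it carries the outgoing circuit edge), so each step either moves $\dm$ along an edge of $\graph_i^{\vel_i}$ out of a circuit vertex, which by non-escapability again lands in $\{v_0,\dots,v_\ell\}$, or else flips $\vel_i$; but the flip probability is $0$ at every circuit vertex, so the chain remains in $\{v_0,\dots,v_\ell\}\times\{\velvec\}$ for all time. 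Hence $\xtrans_i^{m}\big((v_0,\velvec),(\dm',R_i(\velvec))\big)=0$ for all $m$ and all $\dm'$, contradicting \cref{as:prop:ir}.

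\textbf{\cref{as:circuit} $\Rightarrow$ \cref{as:prop:ir}.} Fix $(\dm,\velvec)$ and $i\in\activeset(\dm)$, so $\dm\in\vertices_i$; assume $\vel_i=+1$ (the case $\vel_i=-1$ is identical, with $\graph_i^-$ replacing $\graph_i^+$, and is covered by \cref{as:circuit} at that $\velvec$). Let $R\subseteq\vertices_i$ be the set of vertices reachable from $\dm$ by directed paths in $\graph_i^{\vel_i}$. By the previous paragraph it suffices to produce a vertex $\dm'\in R$ flippable for $\vel_i$: since $\dm'\in R$ there is a directed path in $\graph_i^{\vel_i}$ from $\dm$ to $\dm'$, which lifts to a positive-probability $\xtrans_i$-path from $(\dm,\velvec)$ to $(\dm',\velvec)$, after which one further step flips to $(\dm',R_i(\velvec))$. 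Assume for contradiction that $R$ contains no flippable vertex. Then no vertex of $R$ is a dead end, so $\graph_i^{\vel_i}[R]$ has minimum out-degree $\ge1$; and since reachability is transitive, every edge of $\graph_i^{\vel_i}$ leaving a vertex of $R$ stays in $R$. Let $C$ be a sink strongly connected component of the finite digraph $\graph_i^{\vel_i}[R]$. Each vertex of $C$ has an out-edge, which --- $C$ being a sink component --- stays in $C$; so $\graph_i^{\vel_i}[C]$ is strongly connected with minimum out-degree $\ge1$, and therefore carries a circuit $\gamma$ with vertex set exactly $C$ (concatenate directed paths between successive vertices of $C$, using the self-loop if $|C|=1$). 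This $\gamma$ is non-escapable in $\graph_i^{\vel_i}$ (out-edges of $C$-vertices stay in $R$, hence in $C$), and it is maximal, because any circuit whose vertex set contains $C$ stays inside $C$ once it enters --- no edge of $\graph_i^{\vel_i}$ leaves $C$ --- and, being closed, has vertex set exactly $C$. By \cref{as:circuit}, $\gamma$ contains a vertex flippable for $\vel_i$; that vertex lies in $C\subseteq R$, contradicting our assumption. This establishes the claim, and hence \cref{as:prop:ir}.

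\textbf{Expected main obstacle and a caveat.} The load-bearing step is this last direction --- compressing the reachable set $R$ into a maximal non-escapable circuit so that \cref{as:circuit} can be applied --- and in particular the verification that the circuit supported on a sink strongly connected component is genuinely maximal. The bookkeeping of the first paragraph (a dead-end flip is accepted with $r_i=1$; $r_i>0$ along a forward edge) is routine, but it does rely on \ref{eq:WRev} to make the reversed proposal in the numerator of $r_i$ positive. Finally, the reduction to a sink strongly connected component uses finiteness of $\vertices_i$: an infinite directed ray is closed under out-edges and has out-degree $\ge1$ everywhere, yet contains no circuit at all. Since the districting state space is finite, I would state and prove \cref{lem:flippable} under that hypothesis and merely remark that the general countable case requires, in addition, summability of $\target$ to rule out such a configuration.
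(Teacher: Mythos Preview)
Your proof is correct. The implication \cref{as:prop:ir}~$\Rightarrow$~\cref{as:circuit} is handled exactly as in the paper, by the contrapositive trap argument. For the converse, however, you take a genuinely different route. The paper splits into two cases according to whether the reachable subgraph $\graph_{i}^{+}(\dm)$ contains a non-escapable circuit; in the case where it does not, the paper invokes the invariant measure $\xtarget_{i}$ of $\xtrans_{i}$ (Remark~\ref{rem:sqewMix}) to argue that a non-flippable reachable set would force some state with positive $\xtarget_{i}$-mass to be transient. You avoid both the case split and the measure-theoretic detour: assuming no reachable vertex is flippable, you pass directly to a sink strongly connected component of the reachable subgraph, observe that it supports a circuit which is non-escapable and maximal in the full $\graph_{i}^{\vel_{i}}$, and then apply \cref{as:circuit}. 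This is more elementary and arguably cleaner, and your verification of maximality (no edge leaves the sink component, so any circuit meeting it is confined to it) is the step the paper's argument leaves implicit. The trade-off you correctly flag is generality: your sink-SCC extraction needs finiteness of $\vertices_{i}$, whereas the paper's recurrence argument via the invariant probability measure is the natural tool for the countable setting the paper formally allows.
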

\begin{proof}
We prove this theorem in \cref{sec:proofs}.
\end{proof}


In practice, \cref{as:prop:ir} and even \cref{as:circuit} may often be difficult to check. In order to guarantee ergodicity in the sense of \cref{thm:ergodic:general}, one may instead consider a ``lazy'' version of the algorithm, where in each step 
\begin{enumerate}
\item \label{it:mod:lazy:1} with some small probability $\varepsilon>0$, a component index is uniformly sampled from $\{1,\dots,n\}$ and the sign of the corresponding velocity component flipped, while the state $\dm$ remains unchanged.  
\item  \label{it:mod:lazy:2} with probability $1-\varepsilon$, the steps in \cref{alg:generalized:MHb} are executed.    
\end{enumerate}
Since the Markov kernel of the Markov chain generated by this modification of the algorithm is a mixture of two kernels which each preserve $\xtarget$ it directly follows that this Markov chain indeed preserves $\xtarget$. Moreover, provided that $\prop$ is irreducible it follows by similar arguments as in the proof of \cref{thm:ergodic:general} that the such generated non-reversible Markov chain is also irreducible, which is sufficient for the Markov chain to be uniquely ergodic in the sense specified in \cref{thm:ergodic:general}.
In addition we may consider an even ``lazier'' modification of the algorithm where in each step in addition to the above described modification, we may leave with non-zero probability  the complete extended state $(\dm,\velvec)$ unchanged. It is easy to see that under this modification, the resulting Markov chain will not only preserve the target measure and be irreducible, but will in addition also be guaranteed to be acyclic, so that the chain converges in law to the target measure, i.e., $\lim_{N \rightarrow \infty} \xtransc^{N}\big((\dm,\velvec),(\dm',\velvec')\big) = \xtarget(\dm',\velvec')$ for all $(\dm,\velvec)\in \dmDomain \times \{-1,1\}^{n}$ and all $(\dm',\velvec') \in \dmDomain \times \{-1,1\}^{n}$.\\




In what follows we show how the framework described in this section can be applied to devise non-reversible MCMC methods for the sampling of redistricting plans. For this purpose we describe in the following \cref{sec:space:and:distribution} the sampling space $\Domain$  of that application --the set of redistricting plans-- and the target measure $\target$ defined on that space.  We then introduce in \cref{sec:SingleNodeFlip} a reversible Markov kernel $\prop$ on set of redistricting plans, which we construct as a tempered version of the single node flip algorithm of \cite{herschlag2018quantifying,jcmReport}. In \cref{sec:AppToRedistFormal} we describe and discuss three methods to partition and direct the induced state space graph $\graph = (\vertices,\edges)$.

\section{Sampling redistricting plans}\label{sec:application}

\subsection{The space and distribution of redistricting plans}\label{sec:space:and:distribution}
Throughout this article we consider the problem of sampling redistricting plans as equivalent to sampling the space of $\ndist$ partitions on a graph $\pgraph = (\pvertices, \pedges)$, which is a discrete representation of the administrative region (e.g., a state, or collection of counties) for which redistricting plans are drawn. The vertices ($\pvertices$) of the graph correspond to geographic regions of a certain administrative level (e.g., voter tabulation districts (VTD), precincts, census blocks), and edges ($\pedges$) are placed between vertices that are either rook, queen, or legally adjacent\footnote{Rook adjacency means that the geographical boundary between two regions has non-zero length; queen adjacency means that the boundaries touch, but may do so at a point. At times two regions may not be geographically adjacent, but may be considered adjacent for legal purposes; for example, an island may still be considered adjacent to regions on a mainland for the purposes of making districts.}; see \cref{subfig:precinct} (depending on the situation). 

In order to keep language simple we present our approach in the setup where the vertex set $\pvertices$ represent the {\em precincts} of a {\em state}, and we refer to $\pgraph$ as the {\em precinct graph}.

We represent a single districting plan made up of $\ndist$ districts as a function $\dm: \pvertices \to \{1,2 \dots \ndist\}$. In other words, districting plans are $\ndist$-coloring of the graph $\pgraph$. Informally, $\dm(v)=i$ means that the precinct associated with vertex $v$ is in the $i$th district. Given a districting plan $\dm$, we denote by 
\[
D_{i}(\dm)= \{v \in \pvertices \mid \dm(v) = i\},\quad E_i(\dm) = \{(u,v) \in \pedges \mid \dm(v) = \dm(u) = i\},
\]
the set of precincts assigned to the $i$th district, and the set of edges between vertices corresponding to precincts within the $i$th district, respectively. 

On the set of all districting plans $\big \{\dm \mid \dm : \pvertices \to \{1,2 \dots \ndist\} \big \} $, we define a score function $J$ which measures how well a redistricting plan complies with a set of prescribed criteria such as compactness of districts, equal partition of the population or preservation of municipalities. 


This score function is used to define a probability measure $\target$ on the set of districting plans via the relation
\begin{align}
\label{eq:gibbs:measure}
\target(\dm) \propto e^{-J(\dm)}.
\end{align}
The lower the score $J(\dm)$ of a redistricting plan, the better it complies with the criteria and the higher is the probability assigned to it by the Gibbs measure $\target(\dm)$. In particular, if a redistricting plan $\dm$ is non-compliant, then, $J(\dm)=\infty$, and thus $\target(\dm)=0$. 

In the remainder of this article we denote by
\begin{equation*}
\dmDomain := \{ \dm : \pvertices \rightarrow \{1,\dots,\ndist\} \mid J(\dm) < \infty \}.
\end{equation*}
the support of $\target$, to which in the following we will also refer as the set of all {\em possible maps} or {\em possible redistricting plans}.

\begin{figure}
\captionsetup{width=1.0\linewidth}
\subcaptionbox{Precinct graph\label{subfig:precinct}}{\includegraphics[width=\dmwidth \linewidth, clip = true, trim = {0cm 3cm 0cm 3cm}]{./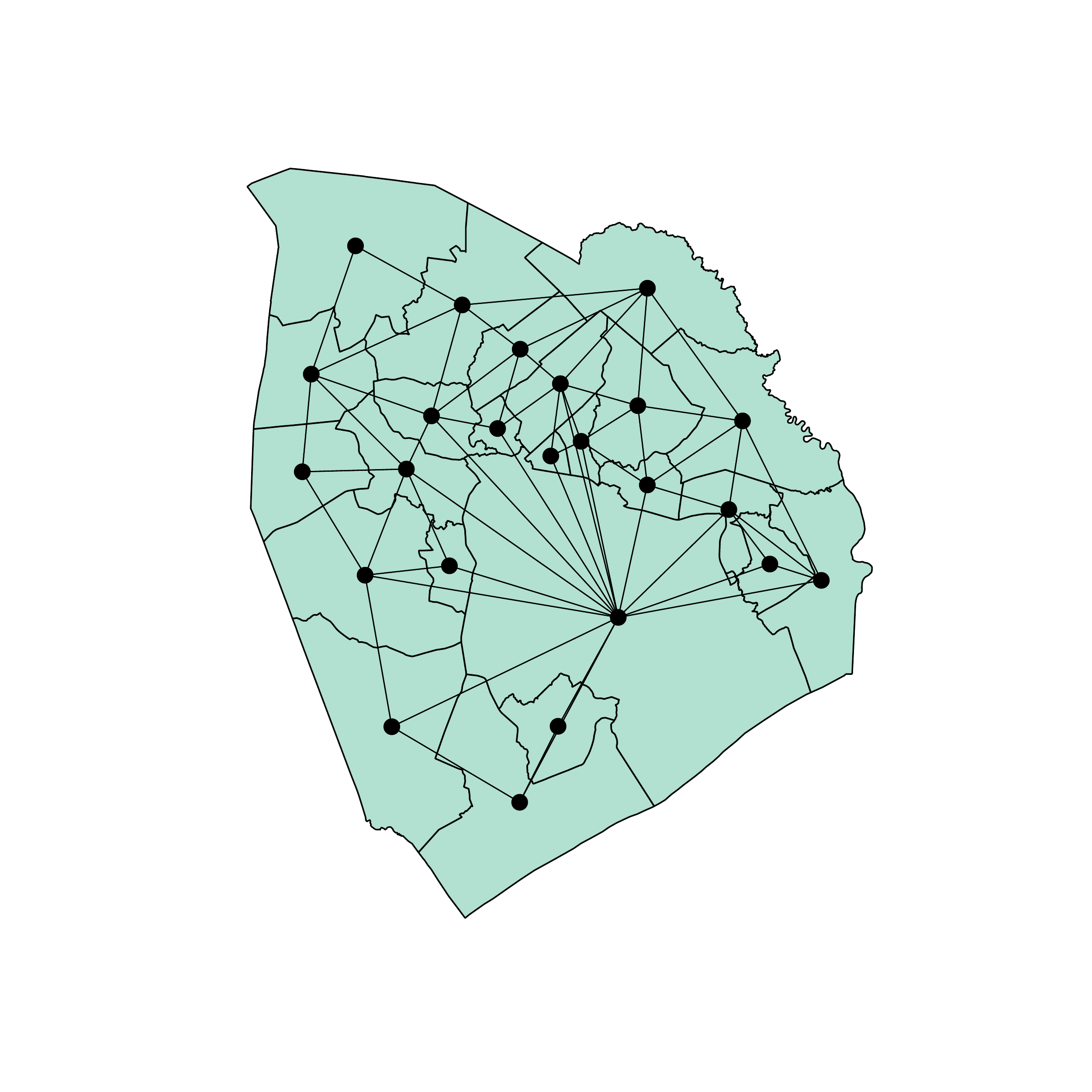}}\qquad
\subcaptionbox{District graph\label{subfig:dg:1}}{\includegraphics[width=\dmwidth\linewidth, clip = true, trim = {0cm 3cm 0cm 3cm}]{./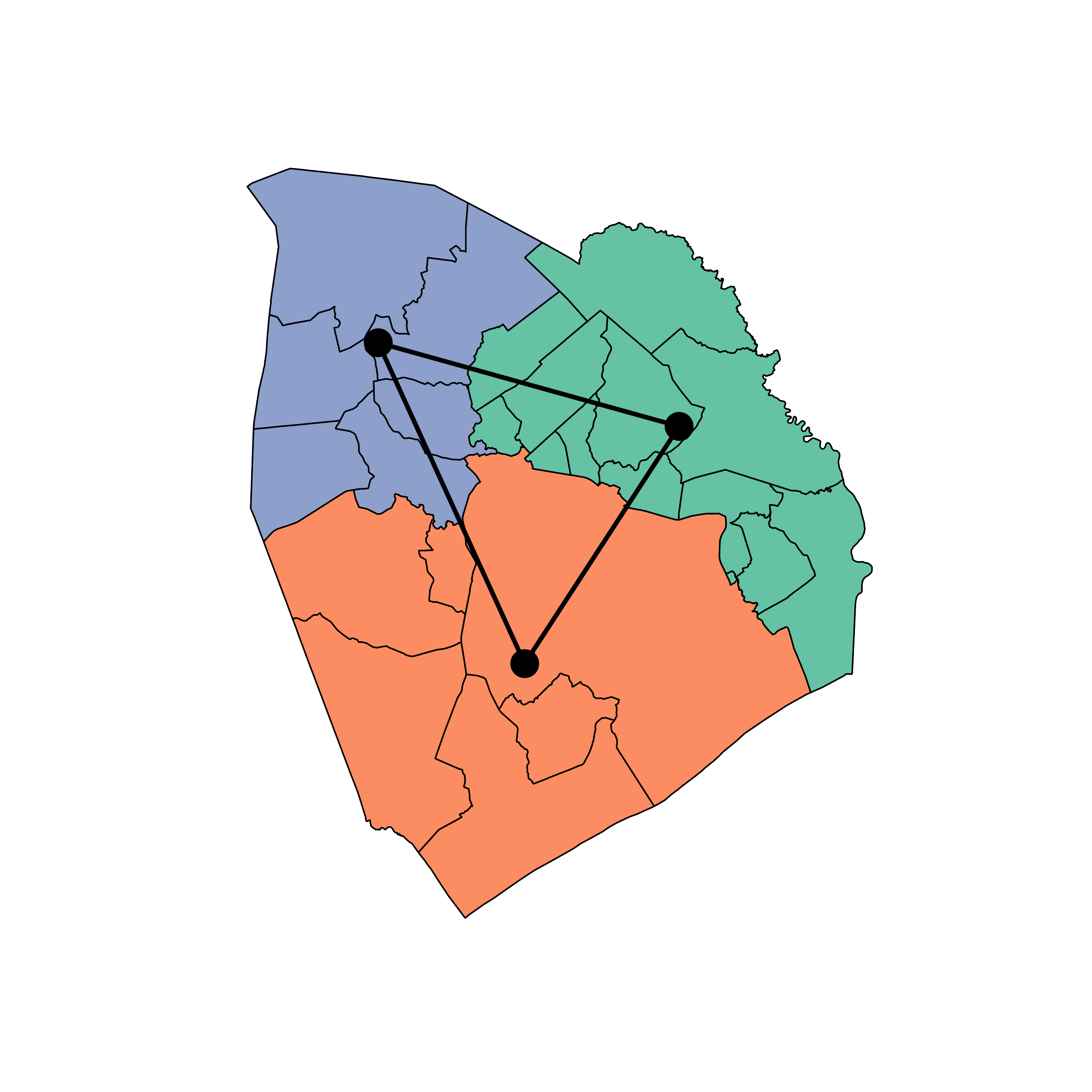}}
\caption{(A) State  
with associated precinct graph; the embedding $\embedding$ used for displaying the precinct graph places nodes at the areal centroid of the associated precinct. (B) Coloring of the precinct graph which corresponds to a partition of the state into three districts. Superimposed is the associated district graph.
 }
\end{figure}

\subsection{The score function}
The score function $J(\dm)$
which determines the measure $\target$ typically relies on additional information associated with vertices and edges of the precinct graph 
$\pgraph$ such as population, land area, and border length. This additional data is used to evaluate the districts on desired redistricting criteria, such as equal-population and compactness.

We denote by $\pop(v)$ and $\area(v)$ the population and area, respectively,  of the geographical region corresponding to vertex $v \in \pvertices$. Similarly, for $e = (u,v) \in \pedges$, we denote by $\boundary(e)$ the length of the boundary shared between the precincts $u$ and $v$. Certain vertices may not share all of their boundary with an adjacent vertex; for example, a vertex may be on the boundary of the map.  In this case, we also describe the unshared boundary of a vertex to be $\boundary(v)$ (which will be zero for interior nodes).

On the set of all possible maps, the score function $J$ may be constructed as a positive linear combination of sub-functions, each of which evaluate different properties of the redistricting map (see, e.g., \cite{herschlag2018quantifying}). For example, in some of the numerical examples of \cref{sec:num}, we let 
\begin{equation}\label{eq:score:example}
J(\dm) =  w_{pop}J_{pop}(\dm) + w_{c}J_{c}(\dm),
\end{equation}
where $J_{pop}(\dm)$ is a measure of the population deviation and $J_{c}(\dm)$ is a measure of how compact the districts are. Typically $J_{pop}(\dm)$ is in the form of a hard constraint or a sum of squared deviations of each district from a target population; similarly, $J_{c}(\dm)$ is typically a sum of district isoparametric ratios or some measure of the overall perimeter.

\subsection{The tempered Single Node Flip proposal and algorithm}
\label{sec:SingleNodeFlip}
To utilize the ideas of Section~\ref{ssec:SkewDetailedBalanceonStateGraph} in order to sample from $\target$ on the space of possible redistricting plans, we must first establish a  proposal kernel $\prop$ on  the domain of possible redistricting plans $\dmDomain$ which satisfies \cref{as:prop}. 

For redistricting problems, one of the most widely used methods is what is commonly known as the {\em single node flip algorithm}. This algorithm has been shown to mix well on smaller problems \cite{fifield2015new,jcmRebuttal, jcmReport}, but as the size of the districting plan and the criteria for redistricting becomes more complex, the moving boundary MCMC algorithms will converge, in theory, but the mixing time for these chains may cause their use to be infeasible to solve computationally \cite{njatDedfordSolomon2019graphs}.

In this article we use the proposal kernel $\prop$  of a  tempered version  of the single node flip algorithm as the basis for constructing our non-reversible MCMC methods. As in the classical version of the algorithm, the proposal kernel of this variant of the algorithm changes ``flips'' the color of exactly one vertex on the boundary of a district to the color of a neighboring district.


More specifically, for an ordered pair, $(u,v) \in \pvertices \times \pvertices$, of two distinct precincts we define the {\em flip operator}
\begin{equation*}
F_{(u,v)} : \dmDomain \rightarrow \dmDomain, ~~F_{(u,v)}(\dm)(w) = \begin{cases}
\dm(w), & w \neq v,\\
\dm(u), & w = v,
\end{cases}
\end{equation*}
which flips the label of vertex $v$ to the label of vertex $u$, and where as above $\dmDomain$ is the domain of all possible maps. 
For a given districting map $\dm\in \dmDomain$, we denote by 
\[
\cedges(\dm) :=\big\{ (u,v) \in \pedges : \dm(u) \neq \dm(v), ~F_{(u,v)}(\dm) \in \dmDomain  \big\},
\]
 the set of all {\em conflicting edges}, i.e., edges which connect precincts with different labels (precincts which are assigned to different districts), and for which application of the corresponding flip operator results in a valid redistricting plan. Moreover, denote by
\[
\nb(\dm) := \big\{F_{(u,v)}(\dm)  |  (u,v)\in \cedges(\dm) \big\},
\]
the set of all possible redistricting plans which can be obtained from the districting plan $\dm$ upon application of the flip operator along a conflicting edge; that is the vertices of the neighborhood of $\dm$. 
With this notation at hand we define the proposal distribution $\prop(\dm,\ccdot)$ to be a tempered version of the target measure $\target$ constrained to the set $\nb(\dm)$, i.e.,
\[
\prop(\dm,\ccdot) \propto \mathbbm{1}_{\nb(\dm)}(\ccdot) {\rm e}^{-\beta J(\ccdot)}, \quad \dm \in \dmDomain,
\]
or, more explicitly, 
\begin{equation}\label{eq:prop:sn}
\prop(\dm,\dm^{\prime})= \begin{cases}
\frac{1}{Z_{\beta}(\dm)} {\rm e}^{-\beta J(\dm^{\prime})}, & \dm \in \nb(\dm)\\
0, & \text{otherwise}
\end{cases}
\end{equation}
where $\beta>0$ can be understood as an inverse temperature parameter, and 
\[
Z(\dm) := \sum_{\dm^{\prime} \in \nb(\dm)} {\rm e}^{-\beta J(\dm^{\prime})}
\]
is the corresponding partition function.



The tempered proposal kernel $\prop$ can be combined with a Metropolis acceptance rejection criteria to obtain a reversible Metropolis-Hastings algorithm (see \cref{alg:single:node:flip:confl}) which possess $\target$ as an invariant measure. 

\begin{figure}
\captionsetup{width=1.0\linewidth}
{\includegraphics[width=\dmwidth\linewidth, clip = true, trim = {0cm 3cm 0cm 3cm}]{./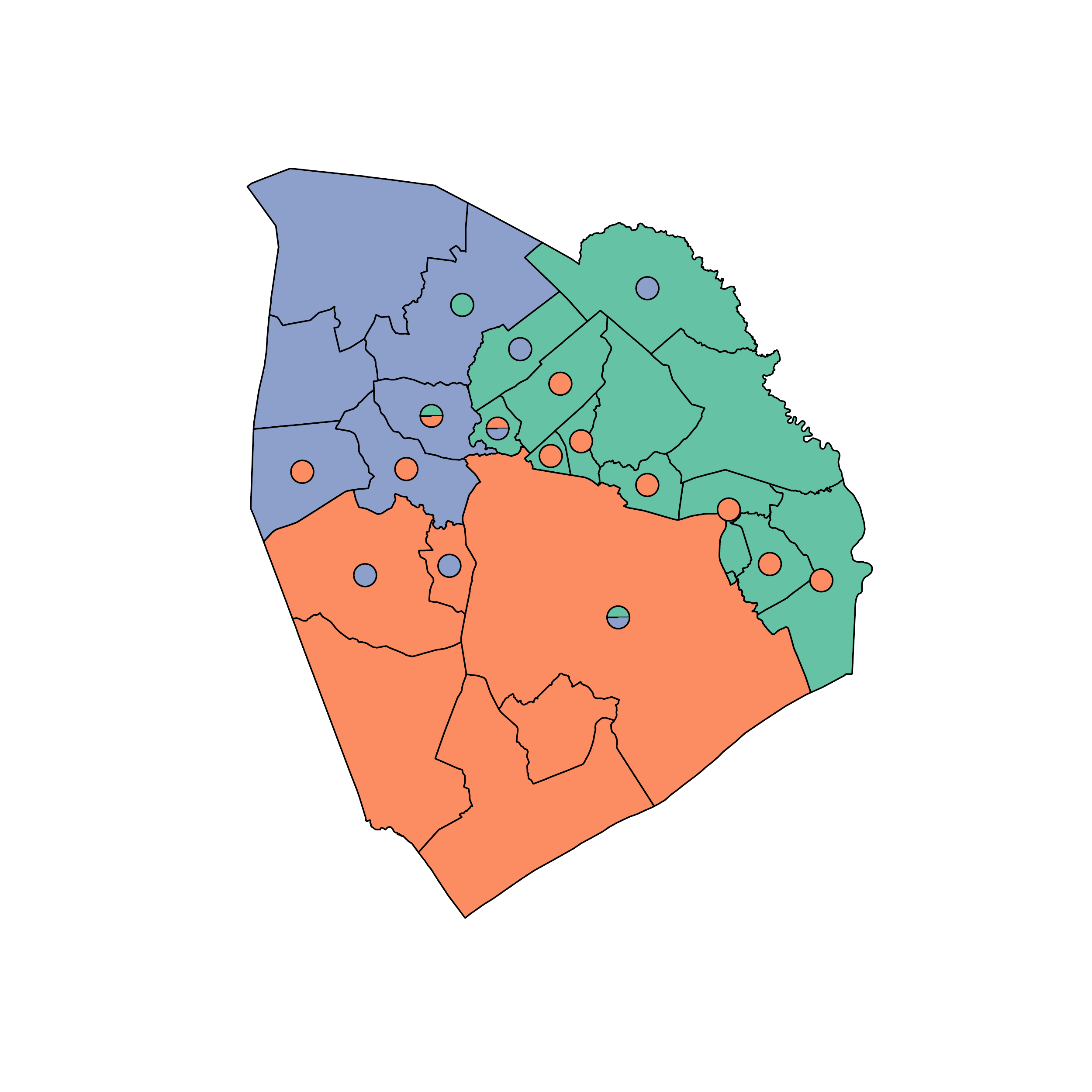}}
\caption{Support of the single node flip proposal; the color(s) of the circular markers within precincts indicate that the probability of proposing the districting plan which is obtained from the current districting plan by changing the color of the respective precinct to (one of) the color(s) of the marker, is positive.  \label{fig:sn}}
\end{figure}

\section{Application to graph partitions and redistricting}
\label{sec:AppToRedistFormal}
In this section we present two different approaches for applying the framework of \cref{ssec:SkewDetailedBalanceonStateGraph} to built non-reversible Markov chains for the sampling of redistricting maps out of the tempered single node flip algorithm described in \cref{sec:SingleNodeFlip}. 
We describe these two approaches in \cref{sec:precinct:flow} and \cref{sec:Dist2Dist}, respectively. In \cref{subsec:comp:temp} we discuss computational aspects of the algorithms and the role of tempering/choice of the temperature parameter $\beta$.
\subsection{Precinct flows via planar embeddings}\label{sec:precinct:flow}
In this approach we construct a single flow via a planar embedding of the planar graph in such a way that state transitions in positive flow direction are aligned  are aligned with a prescribed vector field in $\RR^{2}$. 


Consider the embedding  $\embedding : \pvertices \rightarrow \RR^{2}$  of the precinct graph in $\RR^{2}$, and a volume preserving vector field $\mathbf{v}$ in $\RR^{2}$. 

More specifically, we consider the embedding which maps every vertex $v$ to the center of mass (assuming constant density) of the associated precinct and  the field defined by concentric circles oriented in the counter-clockwise direction. 

That is, 
\[
\embedding(v) = \frac{1}{{\rm area}(v)}\int_{A_{v}} x\, \dd x 
\]
where $A_{v} \subset \RR^{2}$ is the precinct represented by $v$ in some suitable geographical map representation of the state, and
\[
\mathbf{v}(x,y)  = (-r\sin(\alpha), r\cos(\alpha)),
\]
 where $(r,\alpha) = (\sqrt{x^{2}+y^{2}}, \arctan(x/y))$ are the polar representations of the $(x,y)\in \RR^{2}$.
 
For a given embedding $\embedding$ and vector field $\vf$, we need to make precise what is meant by a transition to be aligned with the vector field $\mathbf{v}$. For this purpose we need to define a function, $\orientation : \edges \rightarrow \{-1,1\}$, which maps every edge of the state graph $\edges$ of the Markov chain to the set $\{-1,+1\}$, with $+1,-1$ indicating an alignment with the vector field in positive and negative direction, respectively. This then induces a flow on the state graph as
\[
\edges^{+} = \{ (\dm,\dm^{\prime}) \mid \orientation(\dm,\dm^{\prime})>0\}.
\] 

The question now is how to construct such a function $\orientation$. 
To accomplish this we base positively align the movement of centroids with the vector field.
Let
\begin{align}
\centroid(D_{i}(\dm)) = \frac{1}{\area(D_{i}(\dm))}\sum_{v \in D_{i}(\dm)} \area(v) \embedding(v),
\end{align}
denote the centroid of the $i$-th district. We orient edges $(\dm,\dm^{\prime})$ such that transitions from $\dm$ to $\dm^{\prime}$ are such that the movements of the centroids of the involved districts $i,j$ (these are the districts for which either a precinct is removed or added in the course of the transition) are aligned with the direction of the vector field ${\mathbf{v}}$, i.e.,
\begin{equation}\label{eq:def:orientation:2}
\orientation(\dm,\dm^{\prime}) =  {\rm sign} \left ( \sum_{k\in\{i,j\}} \vectorfield \left ( \frac{ \centroid(D_{k}(\dm^{\prime})) +  \centroid(D_{k}(\dm))}{2} \right )  \cdot \left [ \centroid(D_{k}(\dm^{\prime})) -  \centroid(D_{k}(\dm)) \right ]\right ),
\end{equation}
where $i,j$ are the indices of the two districts which are modified in the transition from $\dm$ to $\dm^{\prime}$. See \cref{alg:vf} for an algorithmic implementation and \cref{fig:cmf} for a graphical illustration of this method.

\begin{figure}
\begin{minipage}[t]{0.68\textwidth}
\begin{algorithm}[H]
\caption{Center-of-mass flow}\label{alg:vf}
\SetKwInOut{Input}{input}
\SetKwInOut{Output}{output}
\Input{$\dm,\vel$}
$\nb^{\vel}(\dm) \gets \{ \dm^{\prime} \in \nb(\dm) : \orientation(\dm,\dm^{\prime})=\vel \}$\;
\eIf{$\nb^{\vel}(\dm) = \emptyset$}{
$\vel \gets -\vel$
}
{
sample $\dm^{\prime} \sim  \mathbbm{1}_{\nb(\dm)}(\ccdot) {\rm e}^{-\beta J(\ccdot)}$,\quad sample $u \sim \mathcal{U}([0,1])$\;
$\nb^{-\vel}(\dm^{\prime}) \gets \Big\{ \dm^{\prime\prime} \in \nb(\dm^{\prime}) : \newline \hspace*{7em}\orientation(\dm^{\prime},\dm^{\prime\prime})=~-\vel\Big\}$\;
\eIf{$\nb^{-\vel}(\dm^{\prime})=\emptyset$ {\bf or} $u <
\dfrac{ 
e^{J(\dm )-J(\dm^{\prime})} Z^{\vel}(\dm)
}
{ 
 e^{\beta J(\dm) - \beta J(\dm^{\prime})}Z^{-\vel}(\dm^{\prime})
}
$}{
$\dm \gets \dm^{\prime}$
}{
$\vel \gets -\vel$
}
}
\KwRet{$\dm,\vel$}
\end{algorithm}
\end{minipage}
\caption*{
In  the edge-aligned version of the algorithm, $\orientation(\dm,\dm^{\prime})$ has the form specified in \eqref{eq:def:orientation:2}.
}
\end{figure}

\vspace{.5cm}
\begin{figure}
\captionsetup{width=1.0\linewidth}
\subcaptionbox{ Center-of-mass-flow, conceptual sketch \label{subfig:cmf:1}}{\includegraphics[width=\dmwidth\linewidth, clip = true, trim = {0cm 3cm 0cm 3cm}]{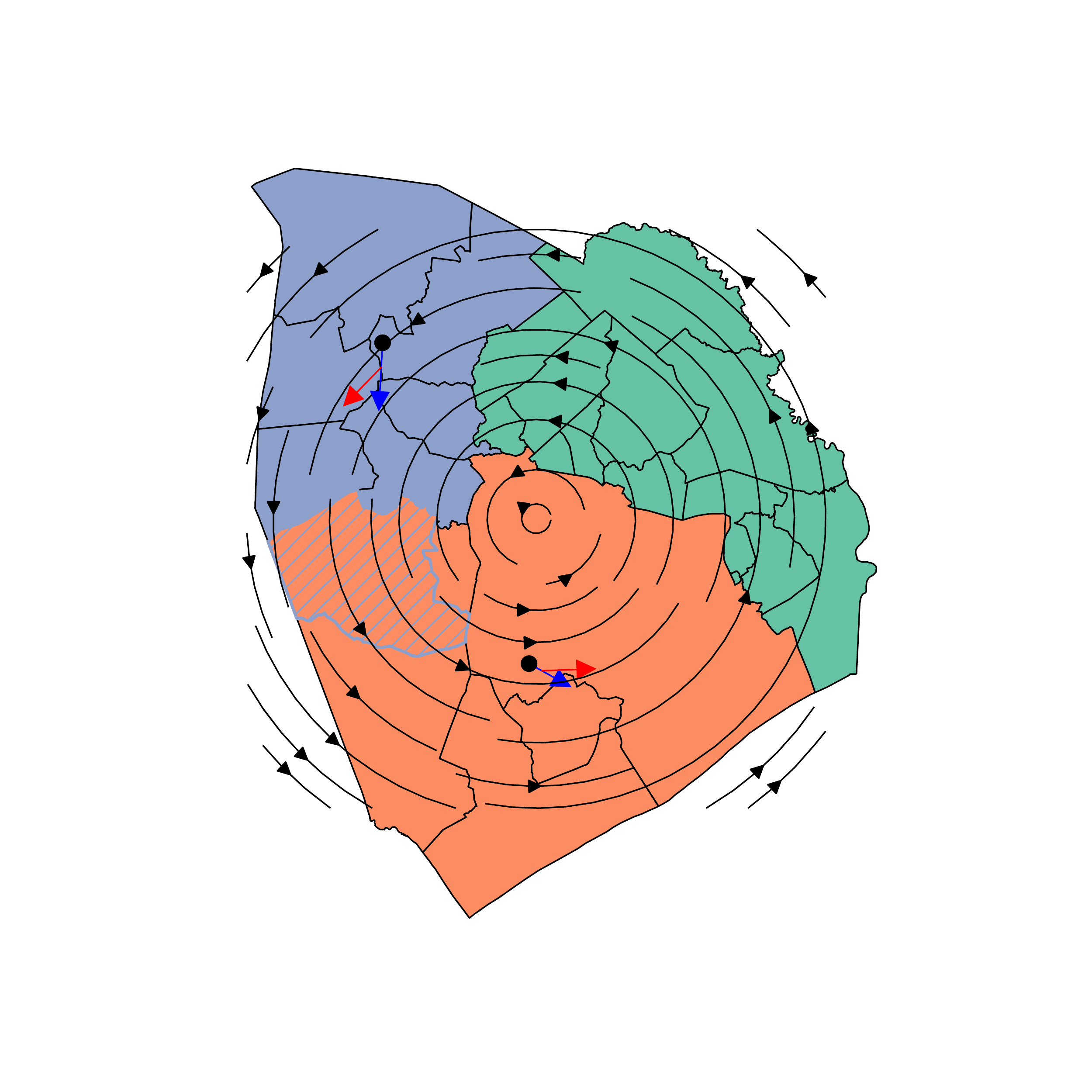}}
\subcaptionbox{Support of center-of mass-flow proposal \label{subfig:cmf:2}}{\includegraphics[width=\dmwidth\linewidth, clip = true, trim = {0cm 3cm 0cm 3cm}]{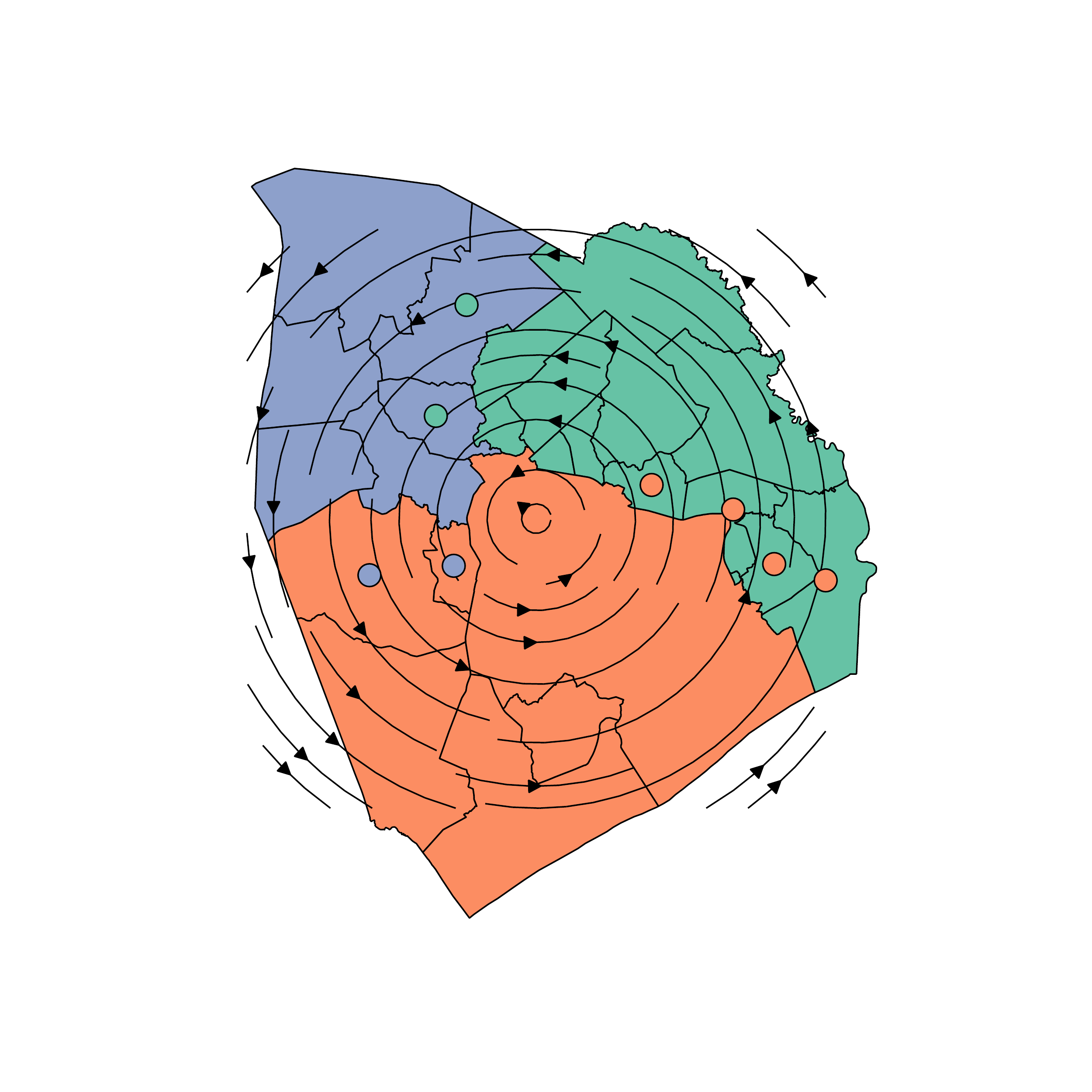}}
\caption{
(A) Geometrical illustration of the orientation assigning function $\orientation$ in the center-of-mass flow method: in a transition where the color of the hatched precinct changes from red to blue the centroids of the corresponding districts (black dots) change as indicated by the blue arrows. The red arrows correspond to evaluations of the vector field (black arrows) at the respective midpoints of the blue arrows. The orientation of the transition is computed as the sign of the sum of the inner products of the two arrow/vector pairs of each district.  (B) Support of the center-of-mass-flow proposal distribution for positive velocity $\vel=1$; The same convention is used for displaying the support as in \cref{fig:sn}.   
}
\label{fig:cmf}
\end{figure}

\subsection{District to district flows}
\label{sec:Dist2Dist}
We associate $n=\ndist(\ndist-1)$ momenta $\vel_{e} \in \{-1,1\}, e \in\dedges$ across the ordered district pairs $\dedges = \{(i,j) \in \{1,\dots,\ndist\}^{2} : i <j  \}$. If the momentum $\vel_{e}$ associated with an adjacent district pair $e=(i,j)$  is positive, we may only propose state changes in which a boundary precinct of district $j$ is reassigned to district $i$. If, on the other hand, $\vel_{e}=-1$,  we may only propose state changes in which a boundary precinct of district $i$ is reassigned to district $j$; see \cref{subfig:dgf:1}.


In the view of \cref{ssec:SkewDetailedBalanceonStateGraph} the construction of the algorithm is as follows. The velocity vector  is of the form $\velvec = (\vel_{e_{1}},\dots,\vel_{e_{n}}) \in \{-1,1\}^{\dedges}$. For any pair of districts $e=(i,j)\in \dedges$, the corresponding positive flow $\edges_{e}^{+}$ of  transitions between possible districts where a precinct from district $i$ is removed and added to district $j$  is 
\[
\edges_{e}^{+} = \{ (\dm,F_{(u,v)}(\dm))  \mid \dm \in \edges, ~  (u,v) \in C_{e}^{+}(\dm)  \}, 
\]
where 
\[
C_{e}^{+}(\dm) = \{ (u,v) \in \conflict(\dm) \mid u \in  D_{i}(\dm), v \in  D_{j}(\dm) \}, 
\]
denotes the set of directed conflicted edges which connect a precinct of district $i$ with a precinct of district $j$. 
Similarly, we have $\edges_{e}^{-} = \{ (\dm,F_{(u,v)}(\dm))  \mid \dm \in \edges, ~  (u,v) \in C_{e}^{-}(\dm)  \}$ where   $C_{e}^{-}(\dm) = \{(u,v) \mid (v,u) \in C_{e}^{+}(\dm) \}$. The $e$th vicinity of the state $\dm$ in positive and negative direction can be explicitly written as
\[
\nb_{e}^{+}(\dm) = \{ F_{(u,v)}(\dm) \mid ~ (u,v)  \in C_{e}^{+}(\dm) \}, \quad \nb_{e}^{-}(\dm) = \{ F_{(u,v)}(\dm) \mid ~ (v,u)  \in C_{e}^{+}(\dm) \},
\]
respectively, and we can write the $e$th proposal kernel on the extended space $\dmDomain \times \{-1,1\}^{\dedges}$ as 
\begin{align*}
\xprop_{e}((\dm, \velvec), (\dm^{\prime}, \velvec^{\prime})) = \begin{cases}
\dfrac{e^{-\beta J(\dm^{\prime})}}{Z^{\vel_{e}}_{e}(\dm)}  & \text{if  }\dm^{\prime} \in   \nb_{e}^{\vel_{e}}(\dm) \text{ and } \velvec^{\prime} =\velvec, \\
1 & \text{if  } \nb_{e}^{\vel_{e}}(\dm) = \emptyset \text{ and } \dm^{\prime} = \dm,\, \velvec^{\prime} = F_{e}(\velvec),\\
0 & \text{otherwise},
 \end{cases}
\end{align*}
where $Z^{\vel_{e}}_{e}(\dm) =  \sum_{\dm^{\prime} \in \nb_{e}^{\vel_{e}}(\dm)} {\rm e}^{-\beta J(\dm^{\prime})}$. 
The generic choice \eqref{eq:weights:generic} for the weight vector $\weight(\dm) = (\weight_{e_{1}}(\dm),\dots,\weight_{e_{n}}(\dm)) \in \{-1,1\}^{\dedges}$ results in weights of the form
\begin{equation}\label{eq:weights:distr}
\weight_{e}(\dm) = \frac{Z_{e}(\dm)}{Z(\dm)}, \quad\text{where}\quad Z_{e}(\dm) = Z^{\vel_{e}}_{e}(\dm) + Z^{-\vel_{e}}_{e}(\dm),
\end{equation}
for $e \in \dedges$, so that $Z(\dm) = \sum_{\tilde{e} \in \dedges} Z_{\tilde{e}}(\dm)$. With this choice of proposal kernel and weights, the Metropolis-Hastings ratio becomes
\[
r_{e}((\dm,\velvec),(\dm^{\prime},\velvec^{\prime})) = \frac{ 
e^{J(\dm )-J(\dm^{\prime})} Z_{e}^{\vel_{e}}(\dm)
}
{ 
 e^{\beta J(\dm) - \beta J(\dm^{\prime})}Z_{e}^{-\vel_{e}}(\dm^{\prime})
}.
\]
We provide an explicit implementation as \cref{alg:single:node:flip:distr}.

\begin{figure}
\captionsetup{width=1.0\linewidth}
\subcaptionbox{District-to-district flow\label{subfig:dgf:1}}{\includegraphics[width=\dmwidth\linewidth, clip = true, trim = {0cm 3cm 0cm 3cm}]{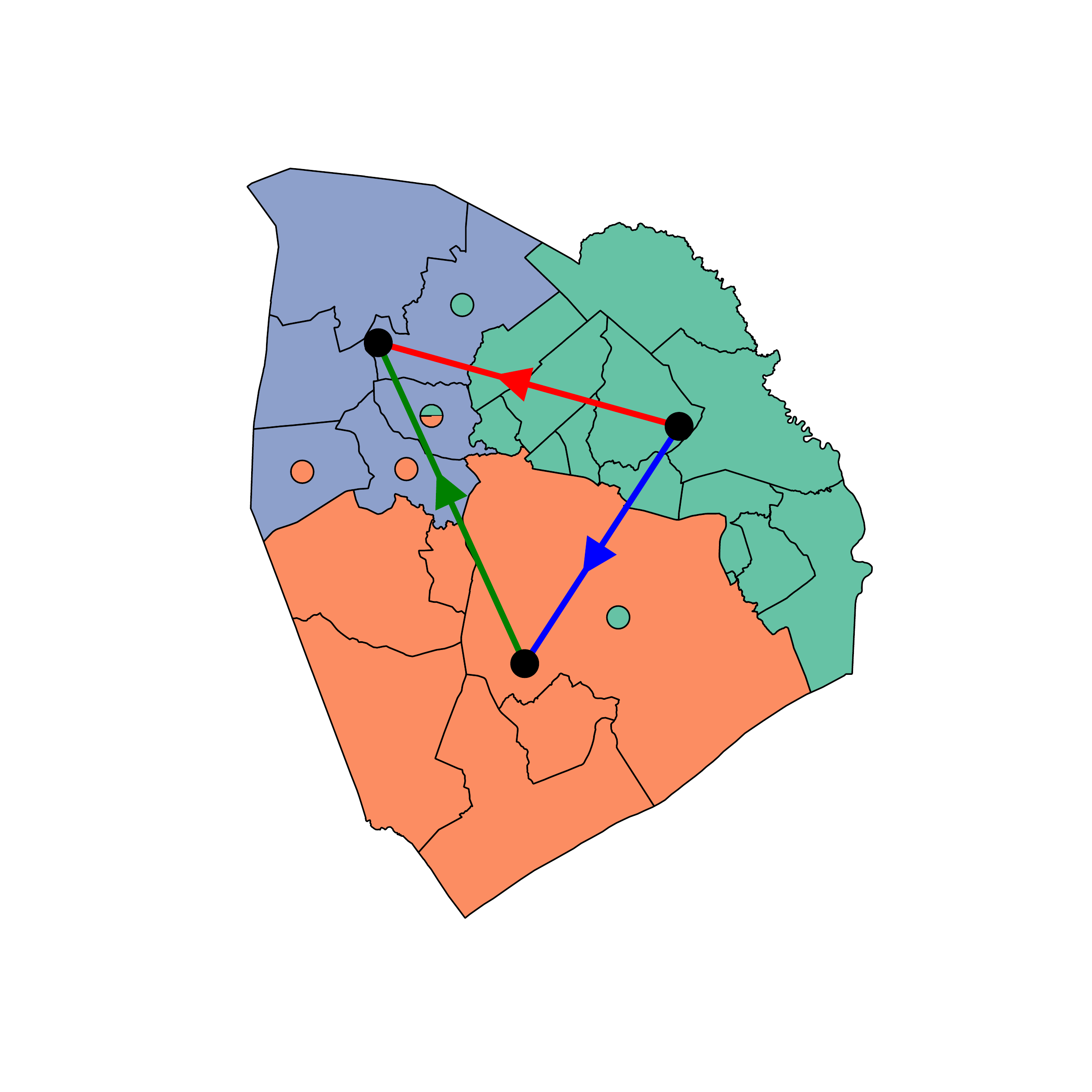}}\qquad
\subcaptionbox{Circular flow on district graph\label{subfig:dgf:2}}{\includegraphics[width=\dmwidth\linewidth, clip = true, trim = {0cm 3cm 0cm 3cm}]{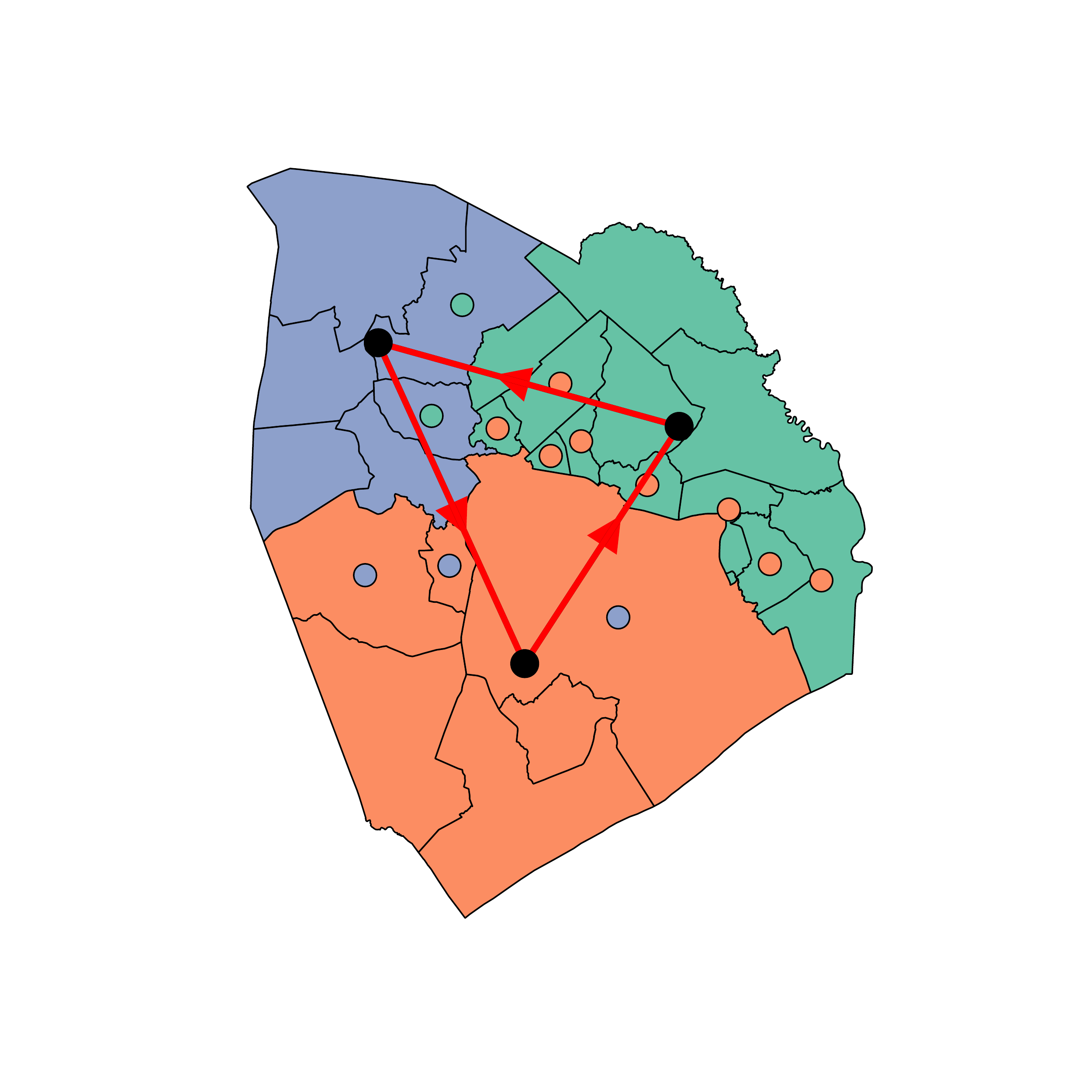}}
\caption{(A) Support of the district-to-district flow proposal distribution with $(\vel_{1,2},\vel_{1,3},\vel_{2,3}) =(1,1,1)$. The corresponding ordered district pairs $(1,2),(1,3), (2,3)$ are displayed as red, blue, and green colored directed edges of the district graph, respectively. (B) Support of a generalized version (see \cref{rem:genearlized:dgraph:flow}) of the district-to-district flow where a momentum variable $\vel_{(1,2,3)}$ is assigned to the red colored counter-clockwise oriented cycle on the district graph. The support is shown for  positive velocity $\vel_{(1,2,3,)}=1$. For both figures the same convention is used for displaying the support as in \cref{fig:sn}.
 }
\end{figure}

\subsubsection{Associated district graph}
If the number of district is three or larger, then, for certain redistricting plans we may have the situation that certain pairs of district do not share a border. That means that for a given redistricting plan $\dm\in \dmDomain$ the set of adjacent districts 
\[
\dedges(\dm) =  \left \{ (i,j)  \mid C_{e}^{+}(\dm) \cup C_{e}^{-}(\dm)  \neq \emptyset \right \},
\]
may be a proper subset of $\dedges$. We refer to the state dependent graph $\dgraph(\dm) = ( \dvertices, \dedges(\dm))$, where $\dvertices = \left \{1,\dots,\ndist \right\}$ is the index set of the districts, as the {\em district graph} of $\dm$; see \cref{subfig:dg:1}. If $e=(i,j) \in \dedges(\dm)$, we refer to $\xprop_{e}$ as an active kernel. The generic choice of weights as in \eqref{eq:weights:distr} ensures that only active kernels are selected in the proposal step.  

\begin{remark}
The total number of district pairs $\abs{\dedges}=\ndist(\ndist+1)/2$ scales quadratically in the number of districts $\ndist$, and thus keeping track of all entries in $\velvec$ may be memory intensive if $\ndist$ is large. Moreover, the stochastic process $(\dm_{n})_{n\in \mathbb{N}}$ typically converges to the target measure $\target$ (when accounting for internal symmetries/label permutation) before all proposal kernels have become active. For this reason one may not want to keep track for all velocities over the whole simulation time. Instead one may resample the velocity $\vel_{e}$ from the uniform measure on $\{-1,1\}$ whenever the kernel $\xprop_{e}$ becomes active after a transition which results in the previously non-adjacent district pairs $e=(i,j)$ to share a border.
\end{remark}

\begin{remark}[More general district-to-district flows / flows on district graph]\label{rem:genearlized:dgraph:flow}
\cref{alg:single:node:flip:distr} can be viewed as a special case of a much larger class of non-reversible algorithms which use the structure 
of the state dependent {\em district graph} $\dgraph(\dm) = ( \dvertices, \dedges(\dm))$. Instead of assigning velocities to edges as in  \cref{alg:single:node:flip:distr}, one may assign velocities to any type of subgraphs of $\dgraph(\dm)$, and associate proposals which mimic flows across the respective districts. For example, one may associate velocities with cycles of 3 districts (see \cref{subfig:dgf:2}), and choose the proposal $\xprop_{i,j,k}$ so that it mimics a district flow in clockwise/anti-clockwise direction depending on the value of the velocity $ \vel_{(i,j,k)}$, e.g.,
\begin{equation*}
\begin{aligned}
\xprop_{(i,j,k)}((\dm, \velvec), (\dm^{\prime}, \velvec^{\prime})) &= \mathbbm{1}_{\nb_{(i,j)}^{\vel}(\dm)}(\ccdot) \prop_{(i,j)}(\dm,\ccdot) +  \mathbbm{1}_{\nb_{(j,k)}^{\vel}(\dm)}(\ccdot) \prop_{(j,k)}(\dm,\ccdot)  \\
&\qquad+
  \mathbbm{1}_{\nb_{(k,i)}^{\vel}(\dm)}(\ccdot) \prop_{(k,i)}(\dm,\ccdot),
  \end{aligned}
\end{equation*}
with the terms on the right hand side as defined in the pair-wise district graph version of the algorithm. 
\end{remark}

\subsection{Computational aspects and choice of the temperature parameter}\label{subsec:comp:temp}
The computational efficiency of an MCMC scheme both depends on the cost per generated sample and the mixing rate of the Markov chain itself. It is intuitively clear that very high rejection rates will negatively affect the mixing speed of a Markov chain. This is in particular true in the case of non-reversible Markov chains constructed as presented here, since every rejection event will result in the reversal of a momentum variable, so that high rejection rates prevent kinetic like movement of the chain. Therefore, it may be worth to trade in higher computational costs for the generation of proposals / cost per step if rejection rates are reduced by that. In what follows we discuss this tradeoff in the case of two important special cases of the tempered proposal kernel of \cref{alg:single:node:flip:distr} (Note that the same observation follow as a special case for \cref{alg:single:node:flip:confl}.)
\begin{itemize}
\item If $\beta=1$, then the proposal distribution $\prop(\dm,\ccdot)$ is the target measure constrained to $\nb(\dm)$, and the Metropolis ratio simplifies to 
\begin{equation}\label{eq:mh-ratio:tempering}
r_{e}(\dm,\dm^{\prime}) =
\frac
{Z_{e}^{\vel_{e}}(\dm)}
{Z_{e}^{-\vel_{e}}(\dm^{\prime})}
\times
\frac
{Z_{e}(\dm^{\prime})}
{Z_{e}(\dm)}
\times
\frac
{Z(\dm) }
{Z(\dm^{\prime})}.
\end{equation}
In this case the acceptance probability of the proposal is not  dependent on the energy difference $\Delta J = J(\dm^{\prime}) - J(\dm)$, but is merely a function of various partition functions of the current state and the proposed state.
\item In the limit $\beta \rightarrow 0$,  the proposal distribution $\prop(\dm,\ccdot)$ becomes the uniform distribution on $\nb(\dm)$, and we thus recover for $\beta=0$ a version of the single node flip algorithm where the proposal is sampled uniformly from the set $\nb(\dm)$. In this case the acceptance probability becomes
 \begin{equation*}
r_{e}(\dm,\dm^{\prime}) = 
e^{-\Delta J } 
\times
\frac
{\abs{\nb_{e}^{\vel_{e}}(\dm)}}
{\abs{\nb_{e}^{-\vel_{e}}(\dm^{\prime})}}
\times
\frac
{\abs{\nb_{e}(\dm^{\prime})}}
{\abs{\nb_{e}(\dm)}}
\times
\frac
{\abs{\nb(\dm)} }
{\abs{\nb(\dm^{\prime})}}.
\end{equation*}
\end{itemize}
In terms of computational costs it is important to note, that in the case where $\beta>0$, the generation of a proposal and evaluation of the Metropolis-Hastings ratio requires the computation of the score function for all states in the neighborhood of $\dm$ as well as all states in the neighborhood of the proposal $\dm^{\prime}$. Depending on the form of the score function $J$ the computational costs for this operation may vary. In particular, if the form of $J$ is as in \eqref{eq:score:example}, computing the $J(\dm^{\prime})$ of a neighboring state of $\dm$ may only require reevaluation of the two terms in sums of the sub score functions which relate to the two districts which are modified in the transition from $\dm$ to $\dm^{\prime}$. 
In contrast to that the operations involved in generating a proposal and evaluating the Metropolis-Hastings ratio in the case of $\beta=0$ only require identifying the various neighborhoods of the current state $\dm$ and the proposal $\dm^{\prime}$.

The higher computational costs of using a tempered proposal with $\beta=1$, may be offset due to potentially drastically reduced rejection rates in the tempered case. For large redistricting maps the terms ${Z_{e}(\dm^{\prime})}/{Z_{e}(\dm)}$ and ${Z(\dm) }/{Z(\dm^{\prime})}$ in the Metropolis-Hastings ratio \eqref{eq:mh-ratio:tempering} can be expected to be with high probability close to 1. Similarly, the first factor may only be significantly smaller than 1 if extending the district $i$ into $j$ is in average energetically unfavorable in comparison to extending the $j$th district in to the $i$th district, a property which may give rise to cyclic alignment of district-to-district velocities; see \cref{sec:num}.

In contrast to that the Metropolis-Hastings ratio and thus the acceptance probability in the case $\beta=0$ does directly depend on the energy difference between the current state $\dm$ and $\dm^{\prime}$. In the situation where the variation of score values of states in the neighborhood of a given states is large this may result in drastically increased rejection rates in comparison the rejection rates in the tempered proposal. 

\begin{figure}
\begin{minipage}[t]{0.68\textwidth}
\begin{algorithm}[H]
\caption{Pair-wise district-to-district flow}\label{alg:single:node:flip:distr}
\SetKwInOut{Input}{input}
\SetKwInOut{Output}{output}
\Input{$\dm,\velvec$}
sample $e\sim \weight(\dm) =  [Z_{\tilde{e}}(\dm)/Z(\dm)]_{\tilde{e}\in \dedges} $\;
\eIf{$Z_{e}^{\vel_{e}}(\dm) = 0$}{
$\vel_{e} \gets -\vel_{e}$
}
{
sample $\dm^{\prime} \sim  \mathbbm{1}_{\nb_{e}^{\vel_{e}}(\dm)}(\ccdot) {\rm e}^{-\beta J(\ccdot)}$\;
sample $u \sim \mathcal{U}([0,1])$\;
\eIf{$Z_{e}^{-\vel_{e}}(\dm^{\prime})=0$ {\bf or} $u <
\frac{ 
e^{J(\dm )-J(\dm^{\prime})} Z_{e}^{\vel_{e}}(\dm)
}
{ 
 e^{\beta J(\dm) - \beta J(\dm^{\prime})}Z_{e}^{-\vel_{e}}(\dm^{\prime})
}
$}{
$\dm \gets \dm^{\prime}$\;
$\vel_{\tilde{e}} \sim \mathcal{U}(\{-1,1\}), \text{ for any edge } \tilde{e} \in \dedges(\dm^{\prime})\setminus \dedges(\dm)$\; 
}{
$\vel_{e} \gets -\vel_{e}$\;
}
}
\KwRet{$\dm,\velvec$}
\end{algorithm}
\end{minipage}
\caption*{
}
\end{figure}

\section{Numerical experiments}\label{sec:num}

To test our ideas numerically, we consider an example first presented in Section 4 of \cite{njatDedfordSolomon2019graphs} of a square lattice split into two districts.  As noted in the previous work, this districting problem is equivalent to considering a loop-free random walk that partitions the region. We consider the score function 
\begin{align}
J(\dm) &= J_{pop}(\dm) + J_C(\dm),
\end{align}
where the population score
\[
J_{\rm pop}(\dm) = \begin{cases} 0, & {\rm pop_{\min}}\leq  \abs{D_{1}(\dm)} \leq {\rm pop_{\max}} \\ \infty & \text{ otherwise}, \end{cases}
\]
is a hard constraint ensuring that the number of nodes in each district are between $ {\rm pop_{\min}}$ and  ${\rm pop_{\max}}$. The compactness score $J_{C}(\dm) = \abs{C(\dm)}$  corresponds to length of the boundary, i.e., the number of conflicted edges.

We sample the space of districting plans that are simply connected on a $10\times10$ square lattice with parameter values ${\rm pop_{min}}=45, {\rm pop_{max}}=55$  (i.e. we allow up to 10\% deviation from half of the lattice points).


We note that the score function $J(\dm)$ is minimized when the district boundary lies either perfectly horizontally or vertically, and decreases as the district boundary length grows. Intuitively we may think of four meta-stable states, one with a given district in the north, south, east or west, with an energetic barrier between these meta-stable states. Indeed, as demonstrated in \cite{njatDedfordSolomon2019graphs}, the single node flip algorithm may mix extremely slowly.  Using $J_C(\dm) = \log(10)|\text{cut}(\dm)|$ on a $40\times40$ lattice, the authors found that the system did not change meta-stable states even after nearly 3 billion steps when using the single node flip algorithm without tempering.

We evaluate the performance of the single node flip algorithm, non-reversible district-to-district flow algorithm, and non-reversible center of mass flow algorithm. For the single node flip algorithm we consider versions with tempering and without tempering. For the tempered algorithms we set $\beta = 0.5$. For the center of mass algorithm, we place a circular vortex field with an origin at the center of the redistricting graph; in Cartesian coordinates, the field is defined as $(\cos(\alpha), -\sin(\alpha))$, where $\alpha$ is the angle between the vector and the positive horizontal axis.

For each method we simulate 10 independent chains of $10^{7}$ samples. All chains are initialized in the same meta-stable state with one district in the north and south, respectively.  

We first evaluate the early mixing properties of the methods on basis of the first 25000 steps of a single chain for each method. For each vertex of the precinct graph, we compute the fraction $f$ of steps that vertex is assigned to district 1. By symmetry, the expected value of $f$ with respect to the target measure $\target$ is $1/2$ for every vertex.   
In order to accentuate deviations from the expected value, we examine the value
\begin{align}
\log(1+|f-1/2|) \sgn(f-1/2),
\end{align}
where $f$ is the fraction of time a node spends in a chosen district and $\EE[f] = 1/2$. We summarize this field variable in the early part of the chains in Figure~\ref{fig:fieldSummarySqLattice}. 

In the single node flip algorithm, we find that the plan is predominantly stuck in the north/south orientation (see Figure~\ref{fig:fieldSummarySqLattice}). Tempering helps to alleviate this issue, however, this method appears to favor diagonal cuts (or more likely oscillates between a vertical and horizontal orientation).  The district to district flow is biased toward a north/south districting plan, although apparently less so than the first two methods. The center of mass flow is nearly unbiased in its orientation. The largest deviation of $f$ from the expected value of any vertex is found to be $.45$, $.34$, $.23$, $.1$ for the single-node-flip algorithm, single-node-flip algorithm (tempered), district-to-district flow algorithm, and center-mass-flow algorithm, respectively.

We also examine the chains over all 10 million steps in the second row of Figure~\ref{fig:fieldSummarySqLattice}, and find that for all methods and all vertices $f$ does not deviate further than 0.02 from the expected value.

\begin{figure}
\begin{center}
\includegraphics[width=0.9\linewidth, clip = true, trim = {0cm 3cm 0cm 0cm}]{./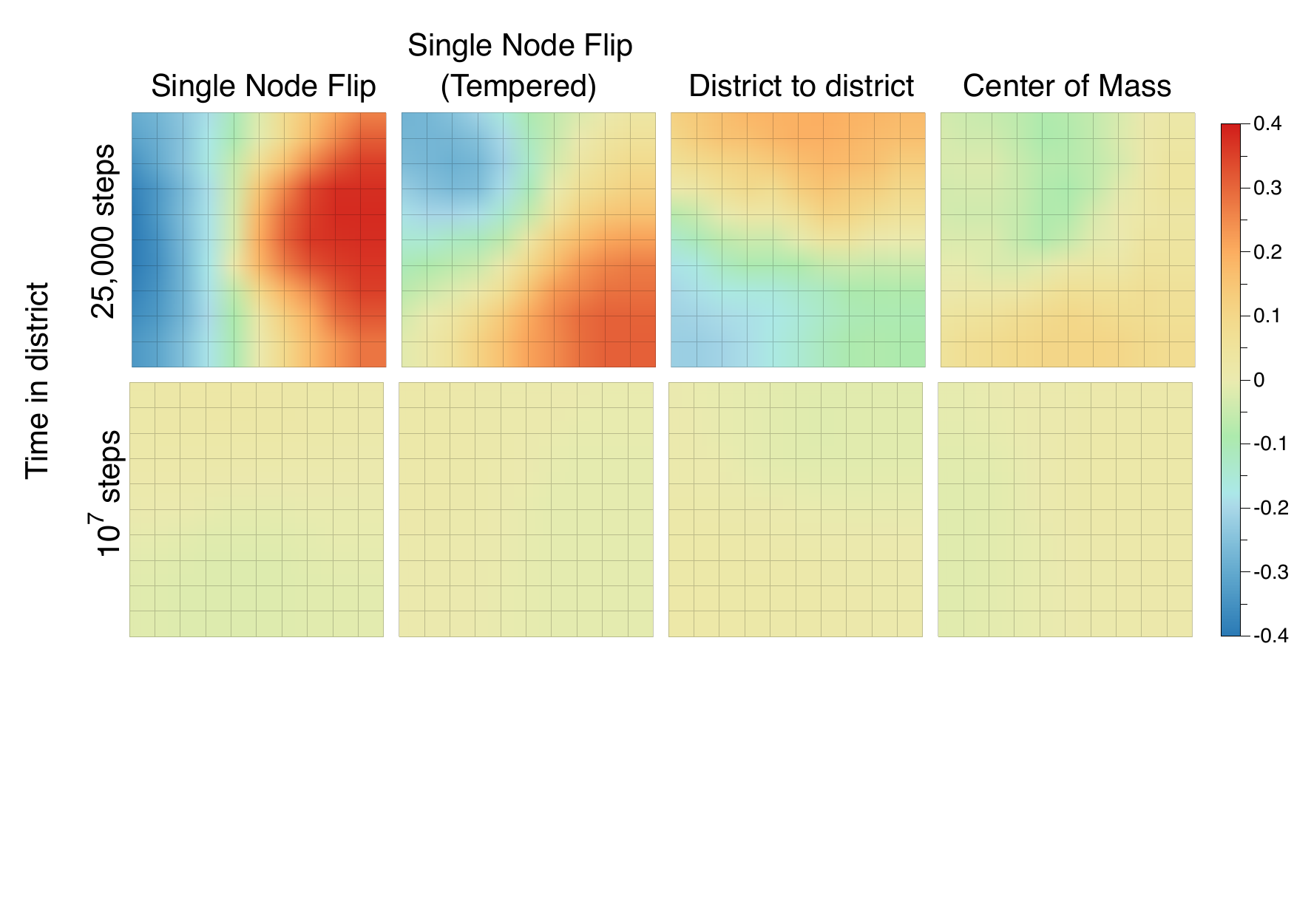}
\end{center}
\caption{We examine the fraction of time each node in a 10$\times$10 square lattice remains in a given district 
for four the methods after $25000$ steps (top row) and $10^7$ step (bottom row).  The methods are single node flip (left most), single node flip with tempered proposals (second from the left), center of mass flow (second from the right) and district to district flow (right most).}
\label{fig:fieldSummarySqLattice}
\end{figure}

To further probe the mixing rates of the chains, we examine how frequently the chain transitions between each meta-stable state. We define a meta stable state as a state where the square boundary between the two districts is within three nodes of a horizontal or vertical cut on an opposite side of the lattice.  We begin by examining the frequency that each chain spends in each of the meta-stable states after $10^7$ steps and look at the variance of these frequencies over the 10 chains (see Figure~\ref{fig:freqMSS}).  For the single node flip algorithm the range of frequencies for the meta-stable states varies between 3.2\% and 4.5\%; when adding tempering, the range of varied frequencies across chains lies between 3.6\% and 4.7\%; for district to district flow, the range lies between 2.8\% and 2.8\%.  In contrast, the center of mass flow frequencies range between 1.4\% and 2.8\%, meaning that the variance across runs is significantly lower than the other methods.

\begin{figure}
\begin{center}
\includegraphics[height=8cm, clip = true, trim = {0cm 0cm 0cm 0cm}]{./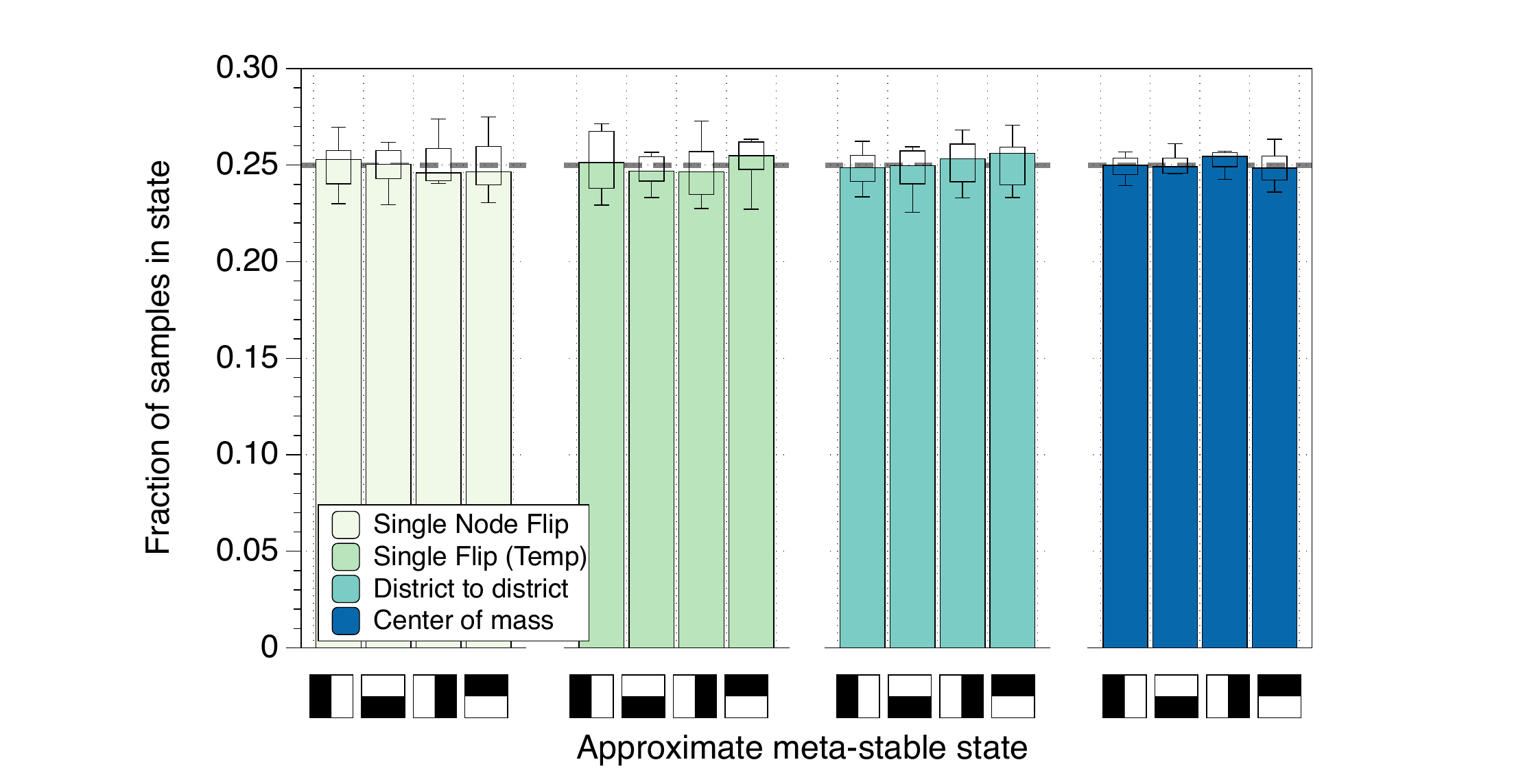}
\end{center}
\caption{We display the median frequency spent in the meta-stable states after $10^7$ proposals over 10 chains for each method.  We also display the deviation of these frequencies across the chains with standard box plots.  The half colored squares correspond to a given district being in the north, east, south and west (where shaded).  Dashed lines show the long time expected limit (i.e. equal probability of being in any of the meta-stable states))}
\label{fig:freqMSS}
\end{figure}

We expect that the reason for the faster mixing in the center of mass flow is due to a higher rate of transitions across meta-stable states.  We examine the frequency of transitions between meta-stable states in Figure~\ref{fig:freqTrans} after 10 millions steps on a single chain. We find that all of the methods are symmetric in terms of how often they transition between meta-stable states.  We find that the center of mass flows have significantly more transitions than all other methods -- this method transitions roughly 7,500 times in 10 million proposals which is more than twice the number of transitions of the other methods.  In contrast the single node flip tempered and district to district flow transition with a nearly identical frequency (roughly 3,600 times per 10 million proposals) and the single node flip method transitions even less (roughly 2,200 times per 10 million proposals).

\begin{figure}
\begin{center}
\includegraphics[height=8.5cm, clip = true, trim = {0cm 0cm 0cm 0cm}]{./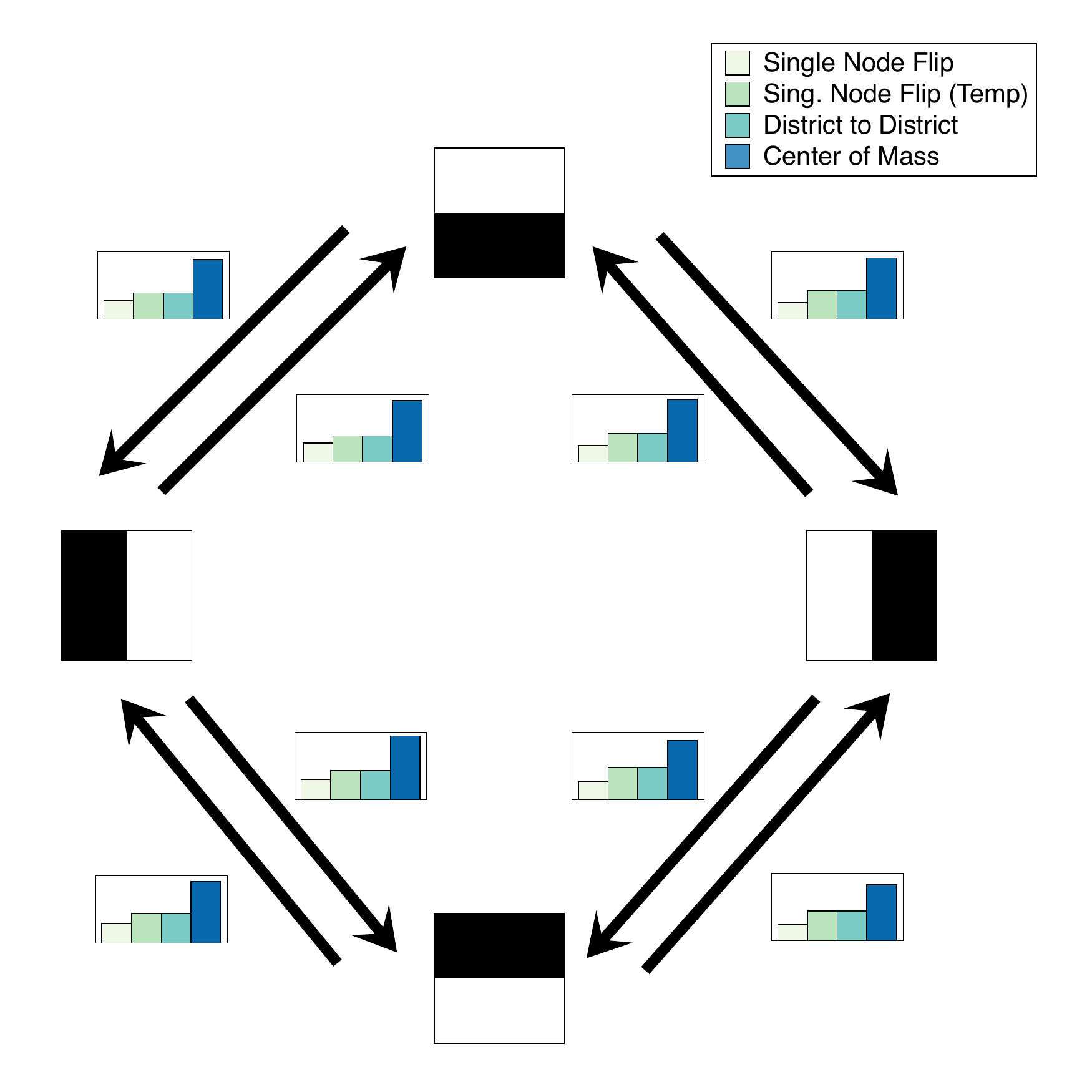}
\end{center}
\caption{We display the number of transitions between meta-stable states in the first 10 million proposals on a single chain for each of the four methods. The half colored squares correspond to a given district being in the north, east, south and west. All sub-figure bar charts have the same scale in the vertical axis with a range of zero to 1,100 transitions.}
\label{fig:freqTrans}
\end{figure}

We conclude by examining how the districting plan decorrelates along the trajectory of each sampling method.  To examine this correlation, we examine precinct assignment matrix, $\phi: \dmDomain \to \RR^{n \times d}$ such that $\phi(\dm)_{ij} = 1$ if precinct $i$ (of $n$ precincts) is assigned to district $j$ (of $d$ districts), $0$ else.  To compare two precinct assignments, we note that 
\begin{align}
\{\phi(\xi)\phi(\xi')^T\}_{ii} = \delta_{\xi(i)\xi'(i)},
\end{align}
which is to say that the diagonal of the product is one when the precinct is assigned to the same precinct and zero otherwise.  This implies that
\begin{align}
\frac{1}{n} \mathop{\mathrm{tr}}(\phi(\xi) \phi(\xi)^T) = 1.
\end{align}
Furthermore, by symmetry, we have that
\begin{align}
\EE(\phi(\xi))_{ij} = 1/d. 
\end{align}
We use the above facts to develop a measurement of precinct similarity given by
\begin{align}
G(t) = \frac{d}{n(d-1)} \EE \Bigg[ \mathop{\mathrm{tr}}\Big(\big(\phi(\dm_0) - \EE(\phi) \big) \big(\phi(\dm_{t})- \EE(\phi)\big)^T \Big) \Bigg],
\end{align}
where the outer expectation is taken with respect to the law of the process assuming $\dm_{0} \sim \target$. We remark that $G(0) = 1$ and that $G(t) \rightarrow 0$ as $t \rightarrow \infty$ since $\dm_{t}$ becomes independent of $\dm_0$ in the limit $t\rightarrow \infty$.  We also remark that this correlation is closely related to the evolution of the Hamming distance between two plans.  The Hamming distance, $d(\xi, \xi')$, counts the number of precincts that are assigned to different districts across two plans.  Thus
\begin{align}
\mathop{\mathrm{tr}}(\phi(\xi) \phi(\xi')^T) = n - d(\xi, \xi').
\end{align}

We estimate $G(t)$ by taking 100,000 boot-strapped samples from each chain of 10 million samples.  From each sample, we gather statistics as a function of progressive steps from the sample and then average across all chains.  We plot $G(t)$ as a function of the steps from the initial samples in Figure~\ref{fig:autocor}. We find that all four methods decorrelate within 50,000 steps.  We find that the center of mass flow decorrelates significantly faster than the three other methods. The district to district and single node flip tempered algorithms decorrelate at nearly the same rate, which makes sense due to the fact that a two district state does not admit cycles which can create longer time flows without rejection.  The single node flip algorithm decorrelates far slower than all other methods.
\begin{figure}
\begin{center}
\includegraphics[width=0.9\linewidth, trim = {1.25cm 0cm 0cm 0cm}]{./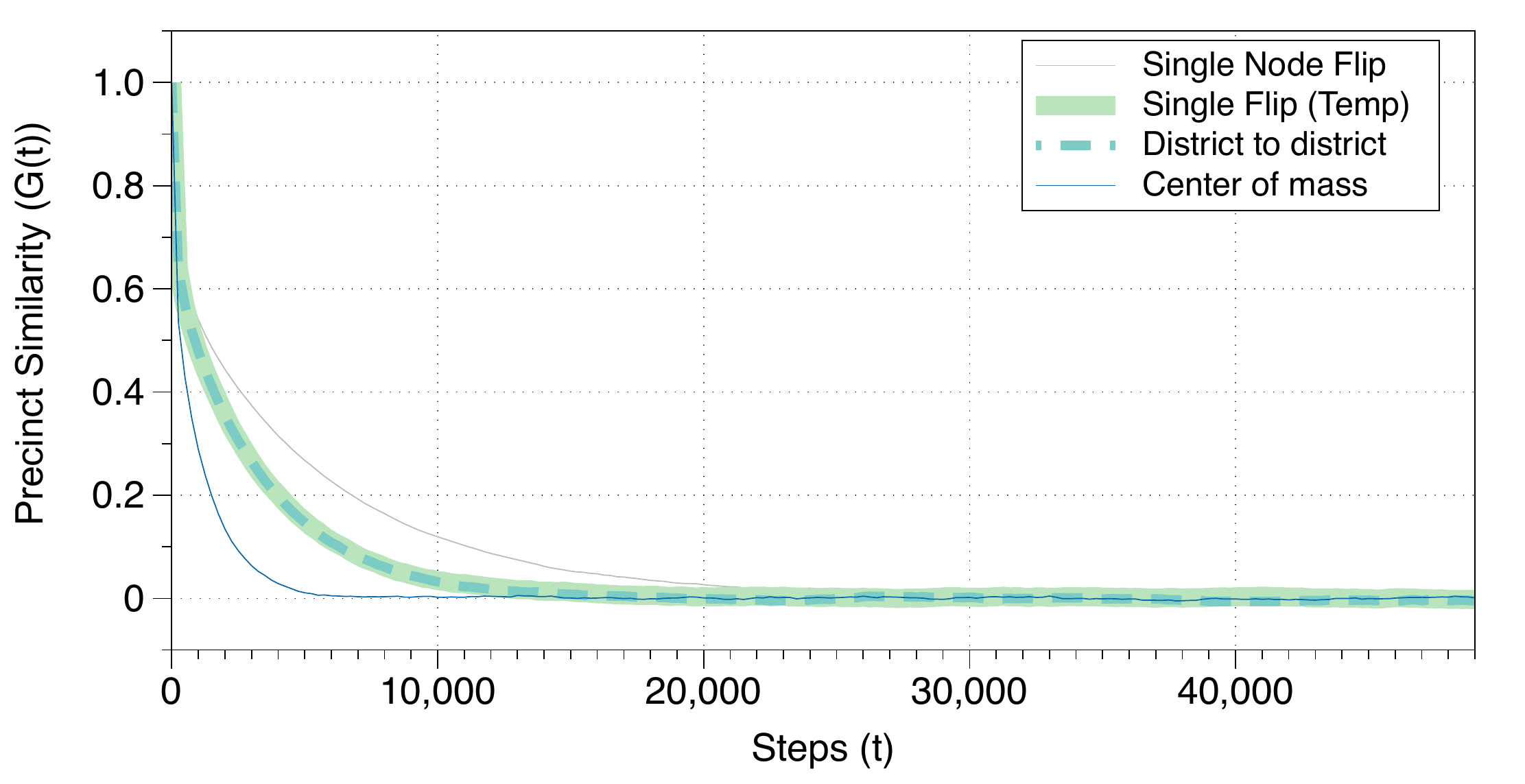}
\end{center}
\caption{We show the correlation of precinct assignments, $G(t)$, as a function of the number of steps taken for each of the four methods.  The results are taken by bootstrap sampling over the 10 different runs on the first $10^7-5\times10^4$ steps for each method and then gathering statistics for $G(t)$ starting with each of the samples.}
\label{fig:autocor}
\end{figure}

\section*{Discussion}
To the best of our knowledge this article is the first work to propose using non-reversible MCMC schemes in the context of sampling of redistricting plans. We explore various natural choices of introducing non-reversibility in this application (i.e., via flows induced by a vector field, and district-to-district flows).  We provide the necessary (and novel) mathematical framework for implementing these approaches in the form of the Mixed Skew Metropolis-Hastings algorithm, which relies on a generalization of the skew detailed balance condition involving multiple momentum and a mixture of proposal kernels. The resulting Markov chain, on a state space extended to include unit momentum, satisfies mixed skew detailed balance. In this setting, we derive conditions which ensure ergodicity of the generated Markov chain with respect to the target measure.

This framework can be used to derive many instances of non-reversible MCMC algorithms, both in general and for the sampling of redistricting plans. We focus in this article on two specific simple implementations of the above mentioned approaches, namely the center-of-mass flow (\cref{alg:vf}) and the pair-wise district-to-district flow (\cref{alg:single:node:flip:distr}). Even with this simple implementations, we find in numerical experiments that the proposed methods mix significantly better than comparable reversible methods. We expect that further improvement can be achieved by more principled implementations of the here suggested approaches. This may open many interesting directions of research. This pertains in particular the choice of the vector field $\vf$, the embedding $\embedding$ and the orientation function. Similarly, we expect that a more refined assignment of cycles on the district graph as alluded to in \cref{rem:genearlized:dgraph:flow} may lead to significant improvement in performance in redistricting problems with a larger number of districts. 

\section*{Acknowledgement}
Matthias Sachs acknowledges support from grant DMS-1638521 from SAMSI.  Jonathan Mattingly  acknowledges  the partial support the NSF grant DMS-1613337. Jonathan Mattingly and Gregory Herschlag acknowledge partially support from the Duke Mathematics Department, the Rhodes Information initiative at Duke, and the Duke Dean's and Provost's offices. All authors are thankful for the support of SAMSI and the Duke NSF-TRIPOS  grant  (NSF-CFF-1934964) which have supported activity around the mathematics of Redistricting during the time of this work which was very stimulating and supportive. We also acknowledge the PRUV and Data+ undergraduate research programs under which our initial work in Quantifying Gerrymander was developed staring in 2013. Jonathan Mattingly is thankful for very educational and inspiring  conversations with Manon Michel while at CIRM in Marseille in September 2018 which started him on this line of inquire. We would also like to thank Andrea Agazzi, Jianfeng Lu, and the Triangle Quantifying Gerrymandering Meet-up for useful and stimulating discussions.

\bibliographystyle{abbrv}
\bibliography{refs_gerry.bib}

\appendix 
\section{Algorithms}
\subsection{Tempered single node-flip algorithm}
Combining the tempered single node-flip proposal of \cref{sec:SingleNodeFlip} with a Metropolis-Hastings accept-reject step results in the following algorithm.\\
\begin{figure}[h]
\begin{minipage}[t]{0.67\textwidth}
\begin{algorithm}[H]
\caption{Tempered single node-flip algorithm.}\label{alg:single:node:flip:confl}
\SetKwInOut{Input}{input}
\SetKwInOut{Output}{output}
\Input{$\dm$}
generate proposal $\dm' \sim \mathbbm{1}_{\nb(\dm)}(\ccdot) {\rm e}^{-\beta J(\ccdot)}$\;
$r(\dm,\dm^{\prime}) 
= \dfrac{ 
e^{-J(\dm^{\prime})+\beta J(\dm^{\prime})}Z(\dm) 
}
{ 
e^{-J(\dm )+\beta J(\dm)}Z(\dm^{\prime})
}
$\;
sample $u \sim \mathcal{U}([0,1])$\;
\If{$u < r(\dm,\dm^{\prime})$}{
$\dm \gets \dm^{\prime}$
}
\KwRet{$\dm$}
\end{algorithm}
\end{minipage}
\end{figure}

Note that for
\begin{itemize}
\item $\beta=1$, then the proposal distribution $\prop(\dm,\ccdot)$ is the target measure constrained to $\nb(\dm)$, and the Metropolis ratio simplifies to 
\[
r(\dm,\dm^{\prime}) =
\frac
{Z(\dm)}
{Z(\dm^{\prime})}.
\]
In this case the acceptance probability of the proposal is not  explicitly dependent on the energy difference $\Delta J = J(\dm^{\prime}) - J(\dm)$, but is merely a function of the partition functions of the current state and the proposed state.
\item $\beta=0$ a version of the single node flip algorithm where the proposal is sampled uniformly from the set $\nb(\dm)$ is recovered and the Metropolis-Hastings ratio becomes $r(\dm,\dm^{\prime}) = \frac{ 
e^{-J(\dm^{\prime})} 
\abs*{\nb(\dm^{\prime})}^{-1}
}
{ 
e^{-J(\dm )}
\abs*{ \nb(\dm)}^{-1} 
}
$, tends to be lower in comparison to the tempered proposal with $\beta=1$.
\end{itemize}
Furthermore, if the underlying tempered proposal generates an irreducible Markov chain, then the tempered MCMC algorithm is also irreducible and hence has the desired target measure as its unique invariant measure. This is summarized in the following proposition.
\begin{proposition}
If the topology of the graph $\pgraph=(\pvertices,\pedges)$ is such that the Markov chain generated by the single-node-flip proposal as defined in \eqref{eq:prop:sn} is irreducible, then 
the probability measure $\target$ is the unique invariant measure of the Markov chain $(\dm_{n})_{n\in \NN}$ generated by the MCMC algorithm given in \cref{alg:single:node:flip:confl}.
\end{proposition}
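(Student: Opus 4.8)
The plan is to recognize Algorithm~\ref{alg:single:node:flip:confl} as an ordinary Metropolis--Hastings scheme with proposal kernel $\prop$ from \eqref{eq:prop:sn} and target $\target$ from \eqref{eq:gibbs:measure}, and then to invoke the two classical facts that such a chain is $\target$-reversible and that, on a finite irreducible state space, $\target$ is therefore the unique invariant probability measure. Note first that $\dmDomain$ is finite, since $\pvertices$ is finite, and that irreducibility of the proposal chain guarantees $\nb(\dm) \neq \emptyset$ for every $\dm$ (so the proposal, and hence $Z(\dm)$, is well defined) unless $\abs{\dmDomain}=1$, in which case the claim is trivial.

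The first step is to check that $\prop$ and $\target$ satisfy the hypotheses needed for the acceptance ratio to be well defined; concretely, conditions \ref{it:1:as:prop} and \ref{eq:WRev} of \cref{as:prop}. Positivity, $\target(\dm)>0$ for all $\dm\in\dmDomain$, holds by the very definition of $\dmDomain$ as the support of $\target$. For the symmetry $\prop(\dm,\dm')\neq 0 \iff \prop(\dm',\dm)\neq 0$, recall that $\prop(\dm,\dm')>0$ precisely when $\dm'=F_{(u,v)}(\dm)$ for some $(u,v)\in\cedges(\dm)$; one then shows $\dm\in\nb(\dm')$ by exhibiting an edge $(a,v)\in\pedges$ with $\dm'(a)=\dm(v)\neq\dm'(v)$ and $F_{(a,v)}(\dm')=\dm\in\dmDomain$, i.e. a $\pgraph$-neighbour $a$ of $v$ carrying $v$'s original colour $\dm(v)$. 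Such an $a$ exists because $\dm,\dm'\in\dmDomain$ forces the district $D_{\dm(v)}(\dm)$ to remain nonempty after $v$ is removed (and, in the standard set-up, connected), so $v$ cannot be the unique, or an isolated, vertex of that district in $\dm$.

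The second step is the detailed-balance computation. With $\target$ and $\prop$ fixed, the algorithm produces the transition kernel
\[
\xtransc(\dm,\dm') = r(\dm,\dm')\,\prop(\dm,\dm') + \mathbbm{1}_{\{\dm\}}(\dm')\Bigl(1 - \sum_{\dm''\neq\dm} r(\dm,\dm'')\,\prop(\dm,\dm'')\Bigr), \qquad r(\dm,\dm') = \min\!\Bigl(1,\tfrac{\target(\dm')\,\prop(\dm',\dm)}{\target(\dm)\,\prop(\dm,\dm')}\Bigr),
\]
and substituting $\target(\dm)\propto e^{-J(\dm)}$ and $\prop(\dm,\dm')=e^{-\beta J(\dm')}/Z(\dm)$ shows that $r(\dm,\dm')$ coincides with the acceptance probability $\min\bigl(1, e^{-J(\dm')+\beta J(\dm')}Z(\dm)/(e^{-J(\dm)+\beta J(\dm)}Z(\dm'))\bigr)$ used in the algorithm. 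For $\dm\neq\dm'$ one then gets $\target(\dm)\xtransc(\dm,\dm') = \min\bigl(\target(\dm)\prop(\dm,\dm'),\,\target(\dm')\prop(\dm',\dm)\bigr)$, which is symmetric in $(\dm,\dm')$; hence $\xtransc$ is $\target$-reversible and in particular $\target\xtransc=\target$. (Equivalently, this is the special case $n=1$, $S_1=\mathrm{id}$, $\weight_1\equiv 1$ of \cref{thm:MSMH}, whose conditions \ref{it:cond:weight:1}--\ref{it:cond:weight:2} reduce exactly to the symmetry of $\prop$ just established.) Finally, since $\target(\dm)>0$ everywhere and $\prop(\dm,\dm')>0$ for every $\dm'\in\nb(\dm)$, we have $r(\dm,\dm')>0$ there, so $\xtransc(\dm,\dm')>0$ on every edge of the proposal graph and irreducibility of $\prop$ transfers to $\xtransc$. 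An irreducible Markov chain on a finite state space has a unique invariant probability measure, which by the previous paragraph is $\target$.

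The argument is routine Metropolis--Hastings theory specialized to a finite state space; the only point requiring genuine care is the verification of the symmetry condition \ref{eq:WRev} --- that a valid single-node flip always admits a valid reverse flip --- which is where the structural assumptions on $\dmDomain$ (nonempty, connected districts) are used. Everything else is bookkeeping.
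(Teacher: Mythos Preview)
The paper does not supply a formal proof of this proposition; it is stated immediately after the one-sentence remark ``Furthermore, if the underlying tempered proposal generates an irreducible Markov chain, then the tempered MCMC algorithm is also irreducible and hence has the desired target measure as its unique invariant measure,'' and is then left without further argument. Your proposal is a correct and careful expansion of precisely this standard Metropolis--Hastings reasoning: detailed balance gives $\target$-invariance, and irreducibility of the proposal together with strictly positive acceptance probabilities yields irreducibility of the Metropolised chain on the finite space $\dmDomain$, whence uniqueness. In particular, your attention to the symmetry condition \ref{eq:WRev} --- needed so that $r(\dm,\dm')>0$ on every proposal edge and hence so that irreducibility actually transfers --- and your identification of the structural assumptions on $\dmDomain$ (nonempty, connected districts) that make the reverse flip available, go beyond what the paper makes explicit.
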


\section{Proofs}\label{sec:proofs}
\subsection{Proof of \cref{thm:MSDB}   }We now return to the proof of \cref{thm:MSDB} which guaranteed that the mixed skewed balance condition ensures that the desired target measure $\pi$ is an invariant measure for the Markov chain  $\xtransc$.  The proof has the same structure as the analogous proof for the original  skewed balance condition  \cite{diaconis2000analysis}.
\begin{proof}[Proof of \cref{thm:MSDB}]
\begin{align}
 (\xtarget\xtransc)(\x)= \sum_{\x^{\prime}\in\xDomain} \xtarget(\x^{\prime}) \xtransc(\x^{\prime},\x) 
&= 
\sum_{\x^{\prime}\in\xDomain}\sum_{i=1}^{n}\xtarget(\x^{\prime}) \xweight_{i}(\x^{\prime}) \xtrans_{i}(\x^{\prime},\x) \notag \\
&=
\sum_{\x^{\prime}\in\xDomain}\sum_{i=1}^{n} \xweight_{i}(\x)  \xtarget(\x)  \xtrans_{i}(S_{i}(\x),S_{i}(\x^{\prime}))\notag \\
 &= \notag
 \xtarget(\x)\sum_{i=1}^{n}\xweight_{i}(\x) \Big(\sum_{\x^{\prime}\in\xDomain}\xtrans_{i}(S_{i}(\x),S_{i}(\x^{\prime})) \Big)\\
 &= \notag
 \xtarget(\x)\Big(\sum_{i=1}^{n}\xweight_{i}(\x) \Big)\\
&=
\xtarget(\x), \notag 
\end{align}
where the third equality follows by the mixed skew detailed balance condition \eqref{eq:generalized:db:mod},  
the fifth and sixth equality follow due to the fact that the sum within each pair of parentheses sums to one. In the first case because   $\xtrans_{i}$ is a Markov  transition kernel and $S_{i}(\xDomain)=\xDomain$ and in the second case because    because weights $\xweight_{i}(\x)$ sum up to one.
\end{proof}

\subsection{Proof of \cref{thm:MSMH} }We now give the proof that the Mixed Skew Metropolis-Hastings (MSMH) algorithm given in \cref{alg:generalized:MH} satisfies the mixed skew detailed balance condition, given in \eqref{eq:generalized:db:mod}; and hence, has the desired target measure $\pi$ as an invariant measure.  
\begin{proof}[Proof of  \cref{thm:MSMH}]
The form of the algorithm directly implies that the transition matrices/kernels $\xtrans_{i}, i=1,\dots,n$ of the generated Markov chain are of the form
\begin{equation*}
\xtrans_{i}(\x, \x^{\prime}) = g_{i,1}(\x, \x^{\prime}) + g_{i,2}(\x, \x^{\prime}),
\end{equation*}
where
\begin{equation}\label{eq:def:g}
\begin{aligned}
 g_{i,1}(\x, \x^{\prime})  &= \xprop_{i}(\x, \x^{\prime}) \min(1,r_{i}(\x, \x^{\prime}) ), \\
 g_{i,2}(\x, \x^{\prime})  &=  \mathbbm{1}_{\{S_{i}(\x)\}}(\x^{\prime}) \Big[ 1- \sum_{\tilde{\x}\in \xDomain} \xprop_{i}(\x, \tilde{\x})\min(1,r_{i}(\x, \tilde{\x}) ) \Big],
\end{aligned}
\end{equation}
thus a generalized skew detailed balance condition is satisfied for $\xtransc=\xweight \cdot \trans$, if 
\begin{equation}\label{eq:cond:gi2}
g_{i,j}(\x, \x^{\prime}) \pi(\x) \weight_{i}(\x)= g_{i,j} \left (S_{i}(\x^{\prime}),S_{i}(\x) \right) \pi(\x^{\prime})\weight_{i}(\x^{\prime})
\end{equation}
for $j=1,2$ and all  $\x,\x^{\prime} \in \xDomain$ and all $i=1,\dots,n$. We will now verify this condition for each $j$ separately.
\begin{itemize}[wide]
\item[Case $j=1:$]
A simple computation shows that 
\begin{equation*}
r_{i}(\x,\x^{\prime})  = \frac{1}{r_{i}(S_{i}(\x^{\prime}),S_{i}(\x))}.
\end{equation*}
Thus, $r_{i}(\x,\x^{\prime}) \in (0,1) \iff r_{i}(S_{i}(\x^{\prime}),S_{i}(\x))>1$, and  $r_{i}(\x,\x^{\prime}) = r_{i}(S_{i}(\x^{\prime}),S_{i}(\x)) \iff r_{i}(\x,\x^{\prime}) = 1$.
Therefore, if $r_{i}(\x,\x^{\prime})\leq 1$, we find
\begin{equation*}
\begin{aligned}
 g_{i,1}(\x, \x^{\prime}) \xtarget(\x) &= \min(1,r_{i}(\x, \x^{\prime}) )\prop(\x, \x^{\prime}) \xtarget(\x)\\
 &=   \frac{\prop(S_{i}(\x^{\prime}),S_{i}(\x)) \xtarget(\x^{\prime}) }{\prop(\x, \x^{\prime})\xtarget(\x)} \prop(\x,\x^{\prime})\xtarget(\x)\\
 &= g_{i,1}\left (S_{i}(\x^{\prime}),S_{i}(\x) \right) \xtarget(\x^{\prime}),
 \end{aligned}
\end{equation*}
and, similarly, $r_{i}(\x,\x^{\prime}) > 1 \iff  r_{i}(S_{i}(\x^{\prime}),S_{i}(\x)) \in (0,1)$, thus
\begin{equation*}
\begin{aligned}
 g_{i,1}(\x, \x^{\prime}) \xtarget(\x) 
 &= \prop(\x, \x^{\prime}) \min(1,r_{i}(\x, \x^{\prime}) ) \xtarget(\x)\\
 &=\prop(\x, \x^{\prime}) \xtarget(\x)\\
 &= 
 \frac{
 \prop(\x,\x^{\prime})\xtarget(\x)
 }
 {
 \prop(S_{i}(\x^{\prime}),S_{i}(\x)) \xtarget(\x^{\prime})
 }
 \prop\left(S_{i}(\x^{\prime}),S_{i}(\x) \right) \xtarget(\x^{\prime})\\
 &=\min\left(1,r_{i}(S_{i}(\x^{\prime}),S_{i}(\x)) \right)\, \prop\left(S_{i}(\x^{\prime}),S_{i}(\x) \right) \xtarget(\x^{\prime}) \\
 &= g_{i,1}\left (S_{i}(\x^{\prime}),S_{i}(\x) \right) \xtarget(\x^{\prime}).
 \end{aligned}
\end{equation*}
\item[Case $j=2:$]
For $S_{i}(\x) \neq \x^{\prime}$, it follows from the definition of $g_{i,2}(\x, \x^{\prime})$ in \eqref{eq:def:g} that  $ g_{i,2}(\x, \x^{\prime}) =   g_{i,2} \left (S_{i}(\x^{\prime}),S_{i}(\x) \right)=0$ and   \eqref{eq:cond:gi2} is thus trivially satisfied. If $S_{i}(\x) = \x^{\prime}$, it follows that
\begin{equation*}
g_{i,2}(\x, \x^{\prime}) = g_{i,2}(S_{i}^{2}(\x), S_{i}^{2}(\x^{\prime})) =  g_{i,2}(S_{i}(\x^{\prime}), S_{i}(\x)),
\end{equation*}
which by virtue of the fact that $\xtarget$ is invariant under $S_{i}$ also implies \eqref{eq:cond:gi2}.
\end{itemize}
\end{proof}

\subsection{Proof of \cref{thm:ergodic:general}}We now turn  to the proof of  the unique ergodicity and convergence of averages result given in \cref{thm:ergodic:general}.

Since  $\xtarget$ is invariant under $\xtransc$, that is $\xtarget\xtransc=\xtarget$, and $\xtarget(\dm)>0$ for all $\dm\in \dmDomain,$ it is sufficient to show that the Markov chain generated by $\xtransc$ is irreducible, meaning that we can reach any extended state $(\dm',\velvec')$ from any other extended state $(\dm,\velvec)$ within a finite number of steps, i.e.,
\begin{equation}\label{eq:xtransc:irred}
\forall\, (\dm,\velvec), (\dm',\velvec') \in \dmDomain \times \{-1,1\}^{n},\; \exists\, m\in \NN, \quad \text{s.t.}\quad \xtransc^{m}((\dm,\velvec),(\dm',\velvec')) >0.
\end{equation}


We will prove this statement by combining the following two intermediate facts.
\begin{enumerate}
\item Show that we can find a path with non-zero probability between any state  $(\dm,\velvec) \in  \dmDomain \times \{-1,1\}^{n}$ in the set $\dm' \times  \{-1,1\}^{n}$ sitting above an arbitrary point $\xi' \in \dmDomain $.
(\cref{col:trans} below). Such a path will be used to move about the statespace.
\item Show that for any $(\dm,\velvec) \in  \dmDomain \times \{-1,1\}^{n}$, if $i\in\activeset(\dm)$, we can find a path with non-zero probability beginning at $(\dm,\velvec)$ and ending at $(\dm,R_{i}(\velvec))$. These paths will be used to flip the $i$th momentum as needed, without changing the underlining state $\dm$.
(\cref{col:flippable:2} below).
\end{enumerate}

The basic idea of the proof of \cref{thm:ergodic:general}   is that the first of the above fact allows one to move around the state space from $\dm$ to $\dm'$ but without control of what happens to the momentum variable $\velvec$. The second fact above then allows one to modify the resulting $\velvec''$ to the desired $\velvec'$. We will see that at its core, the first fact will follow from the irreducibility of the proposal kernel $\prop$ which was an input to our algorithm. The second fact will follow again from this base irreducibility along with \cref{as:prop:ir} (or in light of \cref{lem:flippable}, \cref{as:circuit}) to ensure there is a place along the path to flip the signs in the momentum $\velvec''$ until it agrees with  $\velvec'$. This is the basic arc of the proof of  \cref{thm:ergodic:general}, though each of the above statements will need a little refinement and a few additional technical elements will need to be added to complete the argument. A graphic representation of the sketch of the proof is given in \cref{fig:proofoutline}.

We begin by stating the two lemma which correspond to the two above statements. The proofs of these lemma are postponed to the end of the section. 
\begin{lemma}\label{col:trans}
  For all extended states $(\dm,\velvec)\in \dmDomain \times \{-1,1\}^{n}$ and any state $\dm' \in \dmDomain$ there exists $m\in \NN$ so that 
\begin{equation*}
  \xtransc^{m} ( (\dm,\velvec), \{ \dm'\} \times \{-1,1\}^{n} ) >0. 
\end{equation*}
Additionally, $m$ can be chosen so that there is a path $((\dm_0,\velvec_0),\cdots,  (\dm_m,\velvec_m))$, of positive probability starting at $(\dm,\velvec)$ and ending in the set $\{ \dm'\} \times \{-1,1\}^{n}$, satisfying the following property: for every $i \in \{1,\dots,n\}$ there exist a $k \in \{0,\dots,m\}$ (depending on $i$) with $i \in \mathcal{A}(\dm_k)$. 
\end{lemma}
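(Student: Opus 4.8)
The plan is to exhibit, for arbitrary $(\dm,\velvec)$ and $\dm'$, an explicit path of positive $\xtransc$-probability from $(\dm,\velvec)$ into the fibre $\{\dm'\}\times\{-1,1\}^n$, obtained by ``lifting'' a path of the base proposal chain on $\Domain$. By Assumption~\ref{as:prop}\,(\ref{it:1:as:prop:ir}) the chain driven by $\prop$ is irreducible, and by Remark~\ref{rem:activeSomewhere} each $\vertices_i$ is non-empty, so I would first fix a $\prop$-path $\dm=\mu_0,\mu_1,\dots,\mu_L=\dm'$ (i.e.\ $\prop(\mu_\ell,\mu_{\ell+1})>0$ for all $\ell$) that also passes through one vertex of each $\vertices_i$ (concatenate irreducibility paths $\dm\rightsquigarrow\dm^{(1)}\rightsquigarrow\cdots\rightsquigarrow\dm^{(n)}\rightsquigarrow\dm'$ with $\dm^{(i)}\in\vertices_i$). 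The lemma then reduces to the following \emph{single-edge statement}: for every extended state $(\eta,\velvec)$ and every $\eta'$ with $\prop(\eta,\eta')>0$ there is a positive-probability $\xtransc$-path from $(\eta,\velvec)$ to $(\eta',\velvec'')$, for some $\velvec''$, whose underlying $\Domain$-trajectory begins at $\eta$ and ends at $\eta'$. Concatenating the single-edge paths over $\ell=0,\dots,L-1$ gives the path we want; since its $\Domain$-trajectory contains $\mu_0,\dots,\mu_L$, hence each $\dm^{(i)}$, the extra clause ``$i\in\activeset(\dm_k)$ for some $k$'' is automatic.

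For the single-edge statement, the covering property yields an index $i$ with $(\eta,\eta')\in\edges_i$, hence a sign $s\in\{+1,-1\}$ with $(\eta,\eta')\in\edges_i^{s}$; note $\eta\in\vertices_i$, i.e.\ $i\in\activeset(\eta)$. If $\vel_i=s$, then $\eta'\in\nb_i^{\vel_i}(\eta)$ and a single $\xtransc$-step does it: index $i$ is drawn with probability $\weight_i(\eta)>0$ (by~\eqref{eq:cond:weight:1b}), $\eta'$ is proposed with probability $\prop(\eta,\eta')/\prop(\eta,\nb_i^{\vel_i}(\eta))>0$, and $\min(1,r_i)>0$ because the numerator of $r_i$ factors as $\xtarget(\eta',\velvec)\,\weight_i(\eta')\,\xprop_i\big(S_i(\eta',\velvec),S_i(\eta,\velvec)\big)$, all three factors being positive (for the last, reversing $(\eta,\eta')\in\edges_i^{s}$ gives $(\eta',\eta)\in\edges_i^{-s}$, so $\eta\in\nb_i^{-\vel_i}(\eta')$; this is condition~(\ref{it:cond:weight:3})). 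Here $\velvec''=\velvec$.

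The genuine work is the case $\vel_i=-s$, in which the current momentum forbids the move and must first be reversed. Here I would invoke Assumption~\ref{as:prop:ir} at the state $(\eta,\velvec)$ with the active index $i$, obtaining $m$ and $\dm^\sharp$ with $\xtrans_i^m\big((\eta,\velvec),(\dm^\sharp,R_i(\velvec))\big)>0$ together with a realizing $\xtrans_i$-path. Since $\xtrans_i$ moves only among states over $\vertices_i$ and only ever alters the $i$th momentum, this whole $\xtrans_i$-path stays over $\vertices_i$, so each of its steps has positive $\xtransc$-probability (as $\xtransc(x,y)\ge\weight_i(x)\,\xtrans_i(x,y)$ with $\weight_i>0$ on $\vertices_i$); and along it the $i$th momentum flips at least once. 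Truncate at the \emph{first} flip: the prefix then runs $\eta=\zeta_0,\zeta_1,\dots,\zeta_t=:\zeta$ with $\zeta_{k+1}\in\nb_i^{-s}(\zeta_k)$ (before the flip the $i$th momentum is $-s$, and were some $\nb_i^{-s}(\zeta_k)$ empty the flip would have come earlier), and then flips in place at $\zeta$, reaching $(\zeta,R_i(\velvec))$. Now the $i$th momentum equals $s$; and since $(\zeta_k,\zeta_{k+1})\in\edges_i^{-s}$ is equivalent to $(\zeta_{k+1},\zeta_k)\in\edges_i^{s}$, i.e.\ $\zeta_k\in\nb_i^{s}(\zeta_{k+1})$, the chain can retrace $\zeta=\zeta_t\to\zeta_{t-1}\to\cdots\to\zeta_0=\eta$ by positive-probability $\xtransc$-steps, arrive at $(\eta,R_i(\velvec))$, and take one further step to $(\eta',R_i(\velvec))$. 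Only the $i$th momentum has changed, so $\velvec''=R_i(\velvec)$.

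The main obstacle is precisely this last case: a momentum must be reversed, and without Assumption~\ref{as:prop:ir} (equivalently, by \cref{lem:flippable}, Assumption~\ref{as:circuit}) the conclusion can truly fail — e.g.\ a ``triangle--bridge--triangle'' graph with a single cyclic flow traps the chain on one triangle forever. The remaining points are bookkeeping: that $\xtrans_i$-transitions lift to positive-probability $\xtransc$-transitions (where one uses that following $\xtrans_i$ never leaves $\vertices_i$), the reversal symmetry $\edges_i^{+}\leftrightarrow\edges_i^{-}$ that makes the ``retrace'' legal, and positivity of the Metropolis ratios — all of which follow from Assumption~\ref{as:prop} and conditions~(\ref{it:cond:weight:1})--(\ref{it:cond:weight:3}).
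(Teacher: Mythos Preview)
Your proposal is correct and follows essentially the same strategy as the paper: lift an irreducible $\prop$-path through $\Domain$ (arranged to visit each $\vertices_i$), and at every edge where the relevant momentum has the wrong sign, insert a momentum-flipping loop obtained from Assumption~\ref{as:prop:ir} plus the ``walk back along reversed edges'' trick. The only difference is organizational: the paper packages the flip-and-retrace construction as a separate lemma (\cref{col:flippable:2}) and then cites it, whereas you inline that argument directly; your explicit ``truncate at the first flip'' is in fact a cleaner way to obtain the WLOG the paper invokes.
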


\begin{lemma}\label{col:flippable:2}
For any extended state $(\dm,\velvec) \in \dmDomain \times \{-1,1\}^{n}$ and $i\in \activeset(\dm)$, there is $m\in \NN$ so that 
\[
\xtransc^{m}\left (\big (\dm,\velvec \big), \big (\dm,R_{i}(\velvec) \big ) \right ) >0. 
\]
\end{lemma}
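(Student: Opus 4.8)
The plan is to construct, starting from $(\dm,\velvec)$ with $i\in\activeset(\dm)$, an explicit finite $\xtransc$-path of positive probability terminating at $(\dm,R_i(\velvec))$, assembled from three pieces: a forward segment that moves $\dm$ around while holding $\velvec$ fixed, a single flip of $\vel_i$, and a reverse segment that retraces the forward $\dm$-moves with the now-flipped momentum and returns $\dm$ to its starting value.

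First I would feed $(\dm,\velvec)$ and the active index $i$ into \cref{as:prop:ir}: it supplies $m_1\in\NN$ and $\dm'\in\dmDomain$ with $\xtrans_i^{m_1}\big((\dm,\velvec),(\dm',R_i(\velvec))\big)>0$, and expanding this power as a sum over paths produces a sequence $(\dm_0,\velvec_0)=(\dm,\velvec),\dots,(\dm_{m_1},\velvec_{m_1})=(\dm',R_i(\velvec))$ with $\xtrans_i\big((\dm_k,\velvec_k),(\dm_{k+1},\velvec_{k+1})\big)>0$ for each $k$. The structural fact I would then record, read off from the explicit form \eqref{eq:general:constr:1} of $\xprop_i$ together with the accept/reject structure of $\xtrans_i$ (\cref{rem:sqewMix}), is that every $\xtrans_i$-step either leaves $\velvec$ unchanged (possibly altering $\dm$), or leaves $\dm$ unchanged and replaces $\velvec$ by $R_i(\velvec)$ — in no case is a component $\vel_j$ with $j\neq i$ touched. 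Since $\velvec_{m_1}=R_i(\velvec)$ differs from $\velvec_0=\velvec$ exactly in the $i$-th slot, there is a least index $\ell\in\{1,\dots,m_1\}$ with $\velvec_\ell\neq\velvec_{\ell-1}$, and then necessarily $\velvec_0=\cdots=\velvec_{\ell-1}=\velvec$, $\dm_\ell=\dm_{\ell-1}$, and $\velvec_\ell=R_i(\velvec)$.

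Next I would convert $\xtrans_i$-steps into $\xtransc$-steps and build the reverse segment. Because $\graph_i$ has vertex set $\vertices_i$ and $\dm_0=\dm\in\vertices_i$ (as $i\in\activeset(\dm)$), all of $\dm_0,\dots,\dm_{\ell-1}$ lie in $\vertices_i$, so $\weight_i(\dm_k)>0$ for $k\le\ell-1$; together with $\xtransc\big((\dm,\velvec),\ccdot\big)\ge\weight_i(\dm)\,\xtrans_i\big((\dm,\velvec),\ccdot\big)$ this promotes the forward piece to a positive-probability $\xtransc$-path $(\dm_0,\velvec)\to\cdots\to(\dm_{\ell-1},\velvec)\to(\dm_{\ell-1},R_i(\velvec))$ of length $\ell$. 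For the reverse piece I would invoke the mixed skew detailed balance \eqref{eq:generalized:db:mod} satisfied by $\xtrans_i$ (\cref{thm:MSMH}): with $S_i(\dm,\velvec)=(\dm,R_i(\velvec))$, $\xtarget>0$ everywhere, and $\weight_i(\dm_k),\weight_i(\dm_{k+1})>0$, the identity $\xweight_i(\x)\xtarget(\x)\xtrans_i(\x,\x')=\xweight_i(\x')\xtarget(\x')\xtrans_i(S_i(\x'),S_i(\x))$ forces $\xtrans_i\big((\dm_k,\velvec),(\dm_{k+1},\velvec)\big)>0\Rightarrow\xtrans_i\big((\dm_{k+1},R_i(\velvec)),(\dm_k,R_i(\velvec))\big)>0$; chaining this for $k=\ell-2,\dots,0$ (and again using $\xtransc\ge\weight_i\,\xtrans_i$) gives a positive-probability $\xtransc$-path of length $\ell-1$ from $(\dm_{\ell-1},R_i(\velvec))$ to $(\dm_0,R_i(\velvec))=(\dm,R_i(\velvec))$. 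Concatenating the two segments yields $\xtransc^{2\ell-1}\big((\dm,\velvec),(\dm,R_i(\velvec))\big)>0$, which is the claim.

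The point requiring the most care — more a key idea than an obstacle — is that the flippability path from \cref{as:prop:ir} must be truncated at its \emph{first} momentum flip: only then is $\velvec$ constant on the whole forward segment, so that the skew-reversibility of $\xtrans_i$ lets the flipped-momentum chain retrace precisely those $\dm$-moves and land back on the original $\dm$, rather than on the possibly different endpoint $\dm'$ of the full path. The rest is routine bookkeeping that every intermediate vertex lies in $\vertices_i$, so that the partition weights $\weight_i$ are strictly positive there and $\xtrans_i$-transitions genuinely contribute to $\xtransc$-transitions.
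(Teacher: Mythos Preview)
Your proof is correct and follows essentially the same approach as the paper's: invoke \cref{as:prop:ir} to obtain a $\xtrans_i$-path, truncate at the first momentum flip, then use (mixed) skew detailed balance to walk the forward $\dm$-segment back under the flipped $i$th momentum and return to $(\dm,R_i(\velvec))$. You are simply more explicit than the paper about why only $\vel_i$ can change along a $\xtrans_i$-path, why every intermediate $\dm_k$ lies in $\vertices_i$, and how $\xtrans_i$-positivity promotes to $\xtransc$-positivity via $\weight_i>0$.
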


By the construction of the MSMH chain $\xtransc$ from the proposal chain $\prop$ in Section~\ref{ssec:SkewDetailedBalanceonStateGraph}, in particular the structure $\nb_{i}^{+}$ and $\nb_{i}^{-}$, imply that  
\begin{equation}
  \label{eq:suppQP}
  \bigcup_{i \in \activeset(\dm)} \Big[\supp \xtrans_{i}\big((\dm,\velvec),(\ccdot,\velvec) \big) \cup \supp \xtrans_{i}\big((\dm,R_{i}(\velvec)),(\ccdot,R_{i}(\velvec))\big)\Big]  = {\rm supp} \, \prop(\dm,\ccdot)\,.
\end{equation}
The one central obstacle in deducing \cref{col:trans} from this fact is that the particular coordinate momentum $\velvec$ might not be aligned in the right direction to allow us to propose and then follow a given transition which is possible under $\prop$. \cref{col:flippable:2} allows us to flip particular coordinate momentum $\vel_i$ provided $i$ is active. One remaining concern is that we might be forced to constantly correct the momentum which are changed as a side-effect of previous moves. The following lemma is central to ruling out this scenario.
\begin{lemma}\label{lem:trans}
Let \cref{as:prop} hold and consider the transition kernel $\xtransc((\dm,\velvec),\cdot ) = \sum_{i=1}^{n} \weight_{i}(\dm) \xtrans_{i}((\dm,\velvec),\ccdot)$ of the Markov chain generated by \cref{alg:generalized:MHb} with generic weights as specified in \eqref{eq:weights:generic}. 
%
For any  two states $\dm,\dm'\in \dmDomain$ and an  $i \in \{1,\dots,n\}$, such that $i\in \activeset(\dm)$ and $i \not\in \activeset(\dm')$, the following conclusion holds: 
\[
\forall \velvec \in \{-1,1\}^{n},\quad P_{i}((\dm,\velvec),(\dm',\velvec)) = 0. 
\]
\end{lemma}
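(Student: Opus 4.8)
The plan is to prove \cref{lem:trans} by reading off the explicit form of the component kernel $P_{i}=\xtrans_{i}$ recorded in \cref{rem:sqewMix}, namely
\begin{equation*}
\xtrans_{i}(\x,\x^{\prime}) = \min\bigl(1,r_{i}(\x,\x^{\prime})\bigr)\,\xprop_{i}(\x,\x^{\prime}) + \Bigl(1-\min\bigl(1,r_{i}(\x,\x^{\prime})\bigr)\Bigr)\, \mathbbm{1}_{\{S_{i}(\x)\}}(\x^{\prime}),
\end{equation*}
evaluated at $\x=(\dm,\velvec)$ and $\x^{\prime}=(\dm',\velvec)$, and then showing that both summands vanish under the stated hypotheses. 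The second summand disappears at once: by \eqref{eq:general:constr:2} one has $S_{i}((\dm,\velvec))=(\dm,R_{i}(\velvec))$, and $R_{i}(\velvec)$ differs from $\velvec$ exactly in its $i$th coordinate, so $S_{i}((\dm,\velvec))\neq(\dm',\velvec)$ and the indicator is $0$ irrespective of the value of $r_{i}$. (Equivalently, a rejection step of $\xtrans_{i}$ necessarily flips $\vel_{i}$, hence cannot return the momentum $\velvec$; note also that $\dm\neq\dm'$ here, since $i\in\activeset(\dm)\setminus\activeset(\dm')$.) It therefore remains only to prove $\xprop_{i}((\dm,\velvec),(\dm',\velvec))=0$.

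I would establish this by contradiction. Suppose $\xprop_{i}((\dm,\velvec),(\dm',\velvec))>0$ and examine the cases of the definition \eqref{eq:general:constr:1}. The second case cannot apply, for it would require $S_{i}((\dm',\velvec))=(\dm,\velvec)$, i.e.\ $R_{i}(\velvec)=\velvec$, which is impossible. Hence we are in the first case, so $\dm'\in\nb_{i}^{\vel_{i}}(\dm)$. By construction of the flows the directed edge sets satisfy $\edges_{i}^{+}\subseteq\vertices_{i}\times\vertices_{i}$ and $\edges_{i}^{-}\subseteq\vertices_{i}\times\vertices_{i}$, so $\nb_{i}^{\vel_{i}}(\dm)\subseteq\nb_{i}^{+}(\dm)\cup\nb_{i}^{-}(\dm)\subseteq\vertices_{i}$; thus $\dm'\in\vertices_{i}$, which by the characterization \eqref{eq:cond:weight:1b} forces $\weight_{i}(\dm')>0$, i.e.\ $i\in\activeset(\dm')$, contradicting the hypothesis $i\notin\activeset(\dm')$. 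Therefore $\xprop_{i}((\dm,\velvec),(\dm',\velvec))=0$, and together with the previous paragraph this yields $P_{i}((\dm,\velvec),(\dm',\velvec))=0$, as required.

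I do not anticipate a real obstacle: the argument is essentially a careful parsing of the definitions built in \cref{ssec:SkewDetailedBalanceonStateGraph}. The only point requiring attention, which I would flag as the crux, is the two-case structure of \eqref{eq:general:constr:1}: one must first exclude the degenerate branch (the one activated when $\nb_{i}^{\vel_{i}}(\dm)=\emptyset$, which builds an immediate momentum flip into the proposal step) before one is entitled to conclude $\dm'\in\nb_{i}^{\vel_{i}}(\dm)$ and invoke the inclusion $\nb_{i}^{\vel_{i}}(\dm)\subseteq\vertices_{i}$.
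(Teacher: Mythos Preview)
Your proof is correct, but it takes a genuinely different route from the paper's. You work by direct computation: you read off the explicit form of $\xtrans_i$ (from \cref{rem:sqewMix}), split it into the acceptance and rejection pieces, and kill each piece separately by chasing the definitions of $S_i$, $\xprop_i$, $\nb_i^{\pm}$, and $\vertices_i$. The key step is the inclusion $\nb_i^{\vel_i}(\dm)\subseteq\vertices_i$ together with the characterisation \eqref{eq:cond:weight:1b}.

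The paper instead argues in one line from the mixed skew detailed balance identity \eqref{eq:generalized:db:mod} already established for $\xtransc$: writing it out for $\x=(\dm,\velvec)$ and $\x'=(\dm',\velvec)$ gives
\[
\weight_i(\dm)\,\target(\dm)\,\xtrans_i\bigl((\dm,\velvec),(\dm',\velvec)\bigr)=\weight_i(\dm')\,\target(\dm')\,\xtrans_i\bigl((\dm',R_i(\velvec)),(\dm,R_i(\velvec))\bigr),
\]
and since $i\notin\activeset(\dm')$ forces $\weight_i(\dm')=0$, the right-hand side vanishes; with $\weight_i(\dm)>0$ and $\target(\dm)>0$ one reads off $\xtrans_i((\dm,\velvec),(\dm',\velvec))=0$.

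What each buys: the paper's argument is shorter and conceptual---it never opens the kernel, only uses the balance relation, and would apply verbatim to any MSMH construction (not just the state-graph one of \cref{ssec:SkewDetailedBalanceonStateGraph}). Your argument is more elementary and self-contained: it does not appeal to \cref{thm:MSMH} or the MSDB machinery at all, only to the raw definitions of the proposal and the involution, and it makes transparent \emph{why} the transition is impossible (the target state simply lies outside the subgraph $\graph_i$).
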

Equipped now with \cref{col:trans}, \cref{col:flippable:2} and \cref{lem:trans} as well \eqref{eq:suppQP}, we are able of give the proof of \cref{thm:ergodic:general}.  As already mentioned, the basic outline is given in \cref{fig:proofoutline}. The proofs of these lemmas are postponed to the end of this section.
\begin{figure}
\begin{center}
\includegraphics[width=\textwidth, clip=true, trim={0 10cm 10cm 0}]{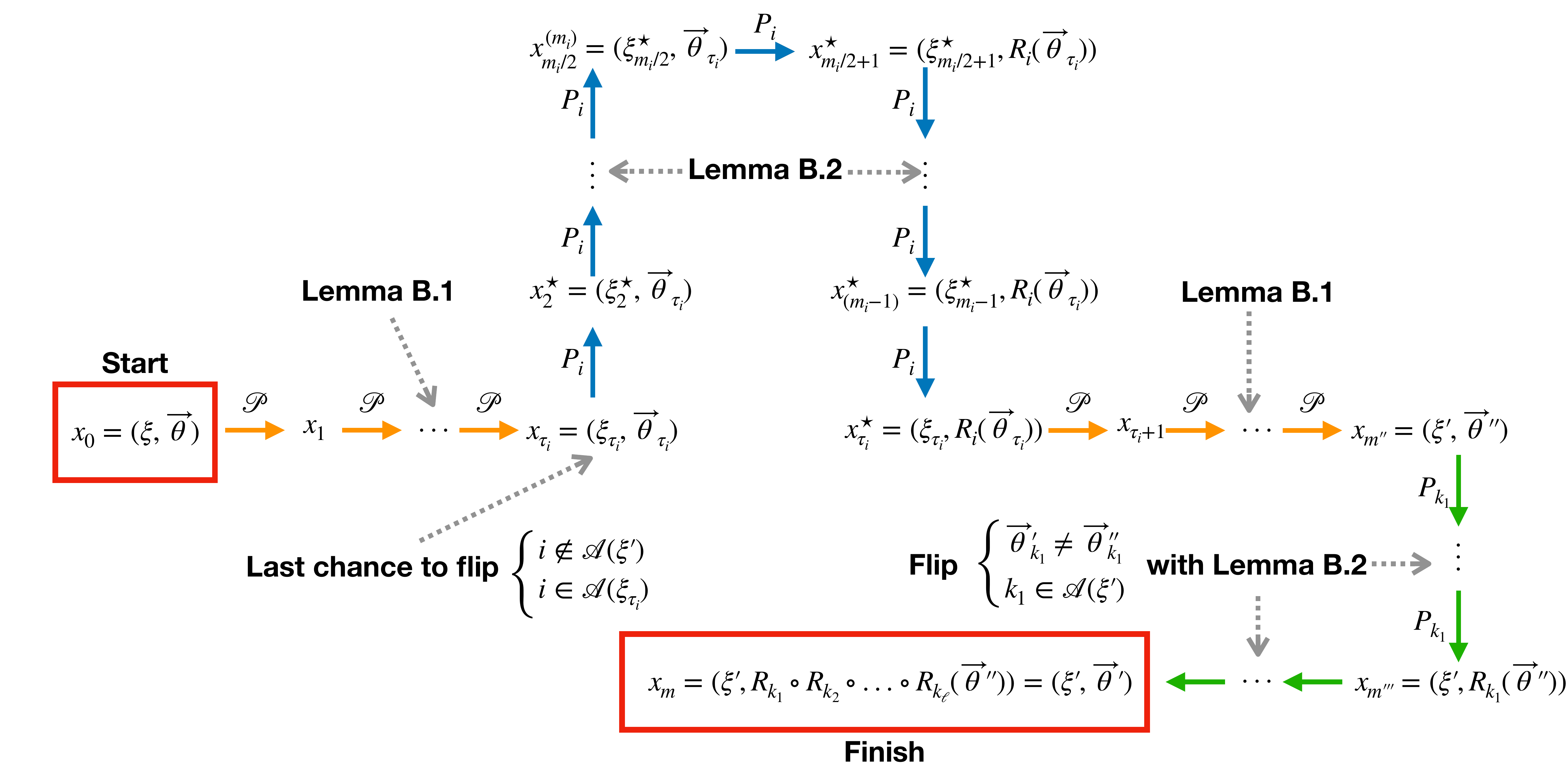}
\end{center}
\caption{A sketch of the proof of theorem 4.5, showing that there is a path with non-zero probability between any two points in the extended state space $(\dm, \vel)$ and $(\dm',\vel')$.}
\label{fig:proofoutline}
\end{figure}

\begin{proof}[Proof of \cref{thm:ergodic:general}]

As already mentioned, \cref{col:trans} ensures that for any extended state $(\dm,\velvec)$ and state $\dm'\in \dmDomain$, there is 
a sequence of states $(\dm_{0},\velvec_{0}),\cdots,  (\dm_{m},\velvec_{m''})$ with $(\dm_{0},\velvec_{0}) = (\dm,\velvec)$ and $\dm_{m''}=\dm'$ and $ \xtransc((\dm_{k},\velvec_{k}),(\dm_{k+1},\velvec_{k+1}))>0$ for all $k=0,\dots,m''-1$. Looking at  \cref{fig:proofoutline}, this central path which will modified is represented by the  sequence $\dm_1$ to $\dm_{m''}$ which runs horizontal across the center of the diagram (orange arrows).

In order to show \eqref{eq:xtransc:irred}, we need to show that this sequence can be modified/extended to a sequence where the final velocity coincides with the velocity vector $\velvec'$ specified in \eqref{eq:xtransc:irred}. First notice that if $\velvec''$ and $\velvec'$ only differ in components which are active in $\dm'$, then the existence of such a modified sequence directly follows from \cref{col:flippable:2}. One simply adds loops (green arrows in \cref{fig:proofoutline}) starting from $\dm'$ and the current $\velvec$ and returning with a particular component of the momentum's sign flipped to agree with $\velvec'$. In \cref{fig:proofoutline}, these excursions correspond extensions to the initial path, which ended at $\dm_{m''}$, which snake downward and then back to the left from $(\dm',\velvec'')$.

If $\velvec''$ and $\velvec'$ differ in a component $i$ which is activated along the path $(\dm_{0},\dots,\dm_{m''})$, then, we can again invoke  \cref{col:flippable:2} to insert a loop in the middle of the original sequence to flip the $i$th component of the momentum (blue arrows in \cref{fig:proofoutline}).
We consider the state $\dm_{\tau_{i}}$ in the sequence $(\dm_{0},\dots,\dm_{m''})$ at which the component $i$ is active for the last time, i.e., $\tau_{i}:= \max\{ j \in \{1,\dots,m''\} \mid i \in \activeset(\dm_{j}) \}$. After this state we insert a loop $(\dm^{\star}_{1},\velvec^{\star}_{1}),\dots,(\dm^{\star}_{m_{i}},\velvec^{\star}_{m_{i}})$, into the sequence which according to \cref{col:flippable:2} flips the sign of the $i$th velocity component, i.e., $\dm^{\star}_{m_{i}} = \dm_{\tau_{i}}$, $\velvec^{\star}_{m_{i}} = R_{i}(\velvec_{\tau_{i}})$. This produces a modified sequence of states 
\[
(\dm_{0},\velvec_{0}),\dots,(\dm_{\tau_{i}},\velvec_{\tau_{i}}),(\dm^{\star}_{1},\velvec^{\star}_{1}),\dots,(\dm^{\star}_{m_{i}},\velvec^{\star}_{m_{i}}), (\dm_{\tau_{i}+1},R_{i}(\velvec_{\tau_{i}+1})), \dots,   (\dm_{m''},R_{i}(\velvec_{m''})), 
\]
which  is attained with positive probability by the Markov chain (see \cref{fig:proofoutline}). Observe that, the transition probabilities between the remaining states $(\dm_{\tau_{i}+1},R_{i}(\velvec_{\tau_{i}+1})), \dots,   (\dm_{m},R_{i}(\velvec_{m}))$ is not affected by the sign change in the $i$th velocity component, since the $i$th component is by the choice of $\tau_{i}$ inactive for all this states. 
All that remains is to show that the  transition from the extended state $(\dm^{\star}_{m_{i}},\velvec^{\star}_{m_{i}}) = (\dm_{\tau_{i}}, R_{i}(\velvec_{\tau_{i}}))$ to $(\dm_{\tau_{i}+1},R_{i}(\velvec_{\tau_{i}+1}))$ does not use  kernel $P_{i}$; and hence, can not be effected by the  sign change in the $i$th velocity component from $\dm^{\star}_{m_{i}} = \dm_{\tau_{i}}$ (at the start of the loop) to $\velvec^{\star}_{m_{i}} = R_{i}(\velvec_{\tau_{i}})$ (at the end of the loop). \cref{lem:trans} ensures that $P_{i}( (\dm_{\tau_{i}}, R_{i}(\velvec_{\tau_{i}}), (\dm_{\tau_{i}+1},R_{i}(\velvec_{\tau_{i}+1}))=0$ because we know that $i \not\in  \activeset(\dm_{\tau_{i}+1})$ from the definition of $\tau_i$.

By repeating the described procedure for all remaining mismatched components which are activated somewhere along the original sequence of states $(\dm_{0},\dots,\dm_{m''})$, we obtain a sequence for which the final velocity coincides in all these components with $\velvec'$. Since \cref{col:trans} allows to choose the sequence $(\dm_{0},\dots,\dm_{m''})$ which connects the state $(\dm,\velvec)$ with $(\dm',\velvec')$, such that every component $\xtrans_{i}, ~i \in \{1,\dots,n\}$ is at least activated once on the way, this concludes the proof. 
\end{proof}

We now return to the proofs of \cref{col:trans}, \cref{col:flippable:2} and \cref{lem:trans} which we will give in reverse order.

\begin{proof}[Proof of \cref{lem:trans}]
Since mixed skew detailed balance holds for $\xtransc$ we have in particular 
\begin{equation}\label{eq:mskewed:app}
\xtrans_{i}((\dm,\velvec), (\dm,\velvec)) \target(\dm) \weight_{i}(\dm)=   \xtrans_{i}\left((\dm',R_{i}(\velvec)),(\dm,R_{i}(\velvec)) \right) \target(\dm') \weight_{i}(\dm'). 
\end{equation}
By the hypotheses of the Lemma we have $\weight_{i}(\dm')= 0$ since  $i \not \in \activeset(\dm')$. Thus, the right hand side of \eqref{eq:mskewed:app} certainly is $0$. Similarly, we have $\weight_{i}(\dm)>0$, and $\target(\dm)>0$ by assumption, and therefore we must have $P_{i}((\dm,\velvec),(\dm',\velvec))=0$ in order for the equality given in \eqref{eq:mskewed:app} to hold. 
\end{proof}

\begin{proof}[Proof of \cref{col:flippable:2}]
   First notice that since each kernel $\xtrans_{i}$ satisfies a modified skew detailed balance condition as detailed in \eqref{thm:MSDB}, we have that
\[
P_{i}((\dm,\velvec),(\dm',\velvec)) >0 \iff P_{i}((\dm',R_{i}(\velvec)), (\dm,R_{i}(\velvec)))>0\,.
\]
Thus it follows inductively that any sequence of states $(\dm_{0},\velvec), (\dm_{1},\velvec),\dots, (\dm_{m},\velvec_{m})$  which can be observed with positive probability when evolving according to $\xtrans_{i}$, can also with positive probability be ``walked back'' in reverse order as $(\dm_{m},R_{i}(\velvec)), \dots, (\dm_{1},R_{i}(\velvec)), (\dm_{0},R_{i}(\velvec))$ after flipping the sign of the $i$th momentum. Since $i \in  \activeset(\dm)$, \cref{as:prop:ir} guarantees that there is some path with positive $P_i$ probability  from $(\dm,\velvec)$ to a point at which $\velvec$ can be flipped to $R_{i}(\velvec)$. Without loss of generality, we can assume that $\velvec$ does not change until this last step. By ``walked back'' along this path we arrive at $(\dm, R_{i}(\velvec))$ as desired.
\end{proof}

\begin{proof}[Proof of \cref{col:trans}] From \cref{as:prop}, which states that $\prop$ is irreducible on $\Domain$, we know that there exists an integer $m$ and a path $\dm=\dm_0, \dots,\dm_m=\dm'$ with positive $\prop$ probability. By the construction of the $\xtransc$  chain from  $\prop$, we know that for any pair $\dm_i$ and $\dm_{i+1}$ along this path there exists some kernel $P_k$ and some choice of the $k$th momentum $\epsilon \in\{-1,1\}$ so that $P_k(  (\dm_i,\velvec^*), (\dm_{i+1}, \velvec^*)) >0$ iff $\vel^*_k=\epsilon$. If by chance we arrive at $\dm_i$ with the wrong sign in the $k$th momentum we can insert one of the loops constructed in  \cref{col:flippable:2} to change the sign of the $k$th momentum. By following this procedure inductively for each step in the original path, we can obtain the path needed to prove the first part of  \cref{col:trans}. 

Finally, we need to show that we can construct the path connecting $(\dm,\velvec)$ to $(\dm',\velvec')$ such  for each index $i \in \{1,\dots,n\}$ there is that along one state $(\dm_k, \velvec_k)$ such that $i \in \activeset(\dm_k)$. First notice that \cref{rem:activeSomewhere} guarantees  the existence of states $\dm^*_i \in \Domain$ with $i \in \activeset(\dm_i^*)$ for each  $i \in \{1,\dots,n\}$. The last result is obtained by prepending to the path constructed above a path which visits each of the $\dm^*_i$, $i \in \{1,\dots,n\}$, before heading onto   $(\dm',\velvec')$.
\end{proof}

\subsection{Proof of \cref{lem:flippable}}
We closeout this last appendix of the paper by proving  \cref{lem:flippable} which showed that, under  \cref{as:prop},  \cref{as:circuit}  and  \cref{as:prop:ir} are  equivalent.
\begin{proof}[Proof of \cref{lem:flippable}] 
Consider the Markov chain obtained by starting in $(\dm,\velvec)$ and evolving according to the memory kernel $\xtrans_{i}$.  
Moreover, assume without loss of generality $\vel_{i}=+1$, and denote by 
\[
\vertices_{i}^{+}(\dm) = \bigcup_{m \in \NN} {\rm supp}\, \xtrans_{i}^{m}((\dm,\velvec), (\,\cdot\,,\velvec)),
\]
the set of states which can be reached from $\dm$ in a finite number of steps when evolving according to the kernel $\xtrans_{i}$ without changing the sign of the velocity $\vel_{i}=+1$. 
This set of reachable vertices $\vertices_{i}^{+}(\dm)$ induces a subgraph of  $\graph_{i}^{+}$ which we denote by $\graph_{i}^{+}(\dm)=(\vertices_{i}^{+}(\dm),\edges_{i}^{+}(\dm))$ where $\edges_{i}^{+}(\dm):= \edges_{i}^{+}\cap \left (   \vertices_{i}^{+}(\dm)\times \vertices_{i}^{+}(\dm) \right )$.

Let \cref{as:circuit} be violated. That is for a certain $i\in \{1,\dots,n\}$ the corresponding graph $\graph_{i}^{+}$ contains a non-escapable circuit . If $\dm$ is contained in this circuit , then $\graph_{i}^{+}(\dm)$ coincides with this closed circuit  and in particular $\xtrans_{i}((\dm',\velvec),(\dm',R_{i}(\velvec))) = 0$ for every $\dm' \in \graph_{i}^{+}(\dm)$. Thus, it follows immediately that also \cref{as:prop:ir} is violated. This shows that \cref{as:prop:ir} implies \cref{as:circuit}. 

In order to show that \cref{as:circuit} together with \cref{as:prop} implies \cref{as:prop:ir} it suffices to show that if  \cref{as:circuit} and \cref{as:prop} hold, then there is always at least one state/node, say $\tilde{\dm}$, in the set of reachable vertices $\vertices_{i}^{+}(\dm)$ for which there is a positive probability of flipping the velocity component $\vel_{i}$ (in the sense that $\xtrans_{i}((\tilde{\dm}, \velvec), (\tilde{\dm}, R_{i}(\velvec)))>0$.) 
In order to show that such a state indeed always exists we consider the cases where  $\graph_{i}^{+}(\dm)$ either does or does not contain a non-escapable circuit  separately. 

If the graph $\graph_{i}^{+}(\dm)$ contains a non-escapable circuit, then, the existence of such a state $\tilde{\dm}$ is guaranteed by \cref{as:circuit}. 

If the graph $\graph_{i}^{+}(\dm)$  does not contain a non-escapable circuit , the existence of such a state $\tilde{\dm}$ can be easily shown using the fact that the probability measure $\xtarget_{i}(\dm',\velvec') \propto \weight_{i}(\dm')\target(\dm')$ is invariant under $\xtrans_{i}$ (see \cref{rem:sqewMix}): there is at least one state $\dm^{\star} \in  \edges_{i}^{+}(\dm)$ which is not part of a circuit  (otherwise $\graph_{i}^{+}(\dm)$ would be a non-escapable circuit). If $\xtrans_{i}((\dm', \velvec), (\dm', R_{i}(\velvec))) = 0$ for all $\dm' \in\vertices_{i}^{+}(\dm) \supset \vertices_{i}^{+}(\dm^{\star})$, then this state is transient, which is in direct contradiction to $\xtarget_{i}(\dm^{\star},\velvec)>0$. Consequently, there must be at least one state in $\vertices_{i}^{+}(\dm)$ for which the probability of flipping the $i$th velocity component is positive. 
\end{proof}

\end{document}